\documentclass[letterpaper, 10 pt, conference]{ieeeconf}%

\IEEEoverridecommandlockouts
\usepackage{cite}
\usepackage{amsmath,amssymb,amsfonts}
\usepackage{algorithmic}
\usepackage{graphicx}
\usepackage{textcomp}
\usepackage{xcolor}
\usepackage{cuted}
\usepackage{lipsum}
\usepackage{mathtools}
\usepackage{stfloats}
\usepackage{bbm}
\usepackage{multirow}
\newcommand{\R}{\mathbb{R}}
\newcommand{\N}{\mathbb{N}}
\newcommand{\T}{\top}
\newcommand{\I}{\mathbf{I}}
\newcommand{\0}{\mathbf{0}}
\newcommand{\E}{\mathcal{E}}

\newcommand{\X}{\mathcal{X}}
\newcommand{\Y}{\mathcal{Y}}
\newcommand{\x}{{\scriptstyle \X}}
\newcommand{\y}{{\scriptstyle \Y}}
\newcommand{\e}{\mathrm{e}}

\newcommand{\diag}{\text{diag}}

\newcommand{\mb}[1]{\mathbf{#1}}

\newcommand{\bm}[1]{\begin{bmatrix}#1\end{bmatrix}}
\def\BibTeX{{\rm B\kern-.05em{\sc i\kern-.025em b}\kern-.08em
    T\kern-.1667em\lower.7ex\hbox{E}\kern-.125emX}}

\usepackage{subcaption}

\newtheorem{assumption}{\textbf{Assumption}}

\newtheorem{corollary}{\textbf{Corollary}}
\newtheorem{lemma}{\textbf{Lemma}}
\newtheorem{proposition}{\textbf{Proposition}}
\newtheorem{remark}{\textbf{Remark}}
\newtheorem{definition}{\textbf{Definition}}

\usepackage{hyperref}
    
\begin{document}

\title{
{\LARGE \textbf{
{Model-Based and Data-Driven Hierarchical Control and Topology Co-Design for Robust Networked Systems}
}}}



\author{Shirantha Welikala, Zihao Song, Hai Lin, and Panos J. Antsaklis
\thanks{Shirantha Welikala is with the Department of Electrical and Computer Engineering, School of Engineering and Science, Stevens Institute of Technology, Hoboken, NJ 07030, \texttt{{\small \{swelikal\}@stevens.edu}}. Zihao Song, Hai Lin and Panos J. Antsaklis are with the Department of Electrical Engineering, University of Notre Dame, Notre Dame, IN, USA. e-mail: \{{\tt zsong2, hlin1, pantsakl}\}{\tt @nd.edu}.}}

\maketitle


\begin{abstract}
In this paper, we consider a class of networked systems comprising an interconnected set of linear subsystems, disturbance inputs, and performance outputs. Using dissipativity theory, we first propose a model-based hierarchical control design strategy to ensure the closed-loop networked system is dissipative from its disturbance inputs to performance outputs. This involves designing local controllers for each subsystem to enforce local dissipativity guarantees, which are then exploited to co-design distributed global controllers and the interconnection topology to enforce global dissipativity guarantees while optimizing interconnection topology costs. The overall design process requires only solving a sequence of linear matrix inequality (LMI) problems, thereby retaining compositionality and decentralizability while avoiding non-convex, iterative design processes that are inefficient and centralized. This model-based hierarchical control design strategy assumes the knowledge of the subsystem dynamics, which may not hold in many real-world networked systems. Motivated by this, we also propose a data-driven hierarchical control design strategy that assumes only the availability of rich input-state-output trajectory data from the subsystems. The proposed data-driven design process assumes that the unknown disturbances affecting the subsystem dynamics are bounded by a quadratic matrix inequality (relaxing conventional bounds) and accounts for this by using the matrix S-lemma. Finally, the effectiveness of the proposed model-based and data-driven hierarchical control designs is illustrated for a networked system representing a DC microgrid, with the aim of enforcing robust (dissipative) voltage regulation and current sharing.  
\end{abstract}

\noindent 
\textbf{Index Terms}—\textbf{Data-Driven Control, Distributed Control, Topology Design, Dissipativity.}

\section{Introduction}
Large-scale networked systems arise in many engineering domains, including microgrids \cite{Najafirad2025P1,Najafirad2025P2}, electronic circuits \cite{Jafarian2019,Jeltsema2005}, robotic networks \cite{song2024port}, and vehicular platoons \cite{song2025distributed,Karafyllis2021,Antonelli2013}. These systems consist of multiple subsystems that interact via communication (cyber) and/or physical interconnection topologies and are often subject to external disturbances. Efficient and scalable design of such ``networked'' systems while jointly optimizing (enforcing) interconnection and control-theoretic costs (constraints) remains a fundamental challenge. For example, how to design interconnection topologies and controllers in a networked system to optimize total interconnection cost and $L_2$-gain (a disturbance robustness metric), while ensuring stability and dissipativity of the closed-loop networked system?

From a control-theoretic perspective, early efforts mainly focused on centralized control \cite{zhang2001stability,silva2008optimal}, in which a global controller is designed using full network information. While such formulations can provide strong performance guarantees, they generally become difficult to implement in large-scale systems due to limitations in scalability, communication, and information sharing. This has motivated a significant shift toward decentralized \cite{rotkowitz2005characterization} and distributed \cite{olfati2007consensus} control strategies, in which each subsystem is controlled using local information and, possibly, information exchanged with neighboring subsystems. 

In the context of distributed control, graph-theoretic methods, such as consensus-based methods \cite{bai2024distributed}, have been widely studied for network coordination. Moreover, optimization-based methods, including distributed model predictive control \cite{wei2024robust,bai2025distributed} and primal-dual methods \cite{maggiar2025consensus}, have been developed to systematically handle constraints and performance objectives. These approaches enable the formulation of control problems as distributed optimization problems \cite{neal2011distributed}, for which iterative algorithms are frequently used in real-time implementations. Robust \cite{boyd1994linear,zhou1996robust} and adaptive \cite{ioannou1996robust} control techniques have also been extensively studied to address uncertainties, disturbances, and time-varying dynamics in networked systems. However, these methods typically treat the interconnection topology as fixed or impose structural assumptions on it, and they do not directly provide a compositional mechanism for jointly designing local and global controllers and the interconnection topology under global dissipativity constraints. As a result, existing approaches may lead to conservative designs or require centralized, iterative, and potentially non-convex design procedures that are inefficient and not scalable.

In recent years, dissipativity theory has emerged as a powerful framework for analyzing and synthesizing control strategies for networked systems due to its compositional and input-output nature. In particular, dissipativity enables local subsystem-level guarantees to be systematically propagated to global network-level guarantees, making it well suited for scalable and distributed design. However, existing dissipativity-based control design methods often rely on centralized formulations \cite{bahmani2020lmi} or assume fixed interconnection topologies \cite{sun2023dissipativity,nguyen2024improvement}, limiting their scalability and adaptability. Hence, there is a need for hierarchical control design frameworks that can systematically synthesize local subsystem controllers together with distributed global coordination mechanisms under dissipativity constraints. Furthermore, despite the central role of interconnection topology in network performance and implementation cost, the joint design of controllers and interconnection topologies under dissipativity guarantees remains largely unexplored within a unified, tractable framework.

Another important limitation is that most of the aforementioned works require precise knowledge of subsystem dynamics, which may not be available in practical networked systems. Although learning-based and data-driven methods have been developed to reduce reliance on explicit models \cite{kosaraju2021reinforcement,gama2022distributed}, embedding stability, robustness, dissipativity, and topology-design constraints directly into the learning or design process remains challenging. This difficulty becomes more pronounced when subsystem data are affected by unknown disturbances, since the resulting controllers must be robust to all dynamics that are consistent with the measured data. Therefore, extending dissipativity-based hierarchical control and topology co-design to data-driven settings, especially in the presence of disturbances and without explicit subsystem models, is both practically important and theoretically challenging.

Building on the foundational results of Willems' fundamental lemma \cite{willems2005note}, data-driven control approaches enable controller synthesis directly from measured input-output trajectories, bypassing explicit system identification while still providing theoretical guarantees for stability and performance. Recent advances have developed rigorous analytical tools for characterizing system properties directly from data \cite{de2019formulas} and for synthesizing controllers using data-dependent convex conditions \cite{berberich2020data}. These developments make direct data-driven control particularly attractive for networked systems whose subsystem dynamics are difficult to model accurately, but for which rich trajectory data can be collected.

Several analytical tools have played an important role in making such data-driven synthesis problems tractable. For example, Finsler's lemma \cite{van2021matrix} has been used to derive data-dependent conditions for controller synthesis under both noise-free and noisy data. Similarly, the matrix S-lemma \cite{van2020noisy} enables the direct design of feedback controllers from noisy data by converting robust data-consistency requirements into tractable matrix inequalities. These tools provide a systematic way to handle quadratic inequalities arising from measured trajectories and disturbance/noise bounds, often leading to less conservative formulations than classical robust-control approaches. In this direction, quadratic-constraint-based methods \cite{coulson2019data,ma2023data} encode system properties through matrix inequalities derived from data, while dissipativity-oriented dualization methods \cite{kristovic2024state,kristovic2024output} provide systematic procedures for designing state-feedback and output-feedback controllers that guarantee dissipativity directly from data.

Despite these advances, most existing data-driven control methods focus primarily on controller synthesis for a single system or for systems with fixed interconnection structures. They do not directly address the hierarchical setting in which local subsystem controllers, distributed global controllers, and the interconnection topology must be designed together. This gap is important because, in large-scale networked systems, the topology determines not only implementation cost and communication burden but also the achievable robustness and dissipativity properties of the closed-loop system. Therefore, a natural and necessary extension is to develop data-driven hierarchical co-design methods that synthesize both controllers and interconnection topologies for networked systems, while retaining convexity, robustness to data disturbances, and compositional dissipativity guarantees.

A closely related line of work focuses directly on dissipativity-based interconnection analysis and topology synthesis. The compositional dissipativity frameworks in \cite{Arcak2016,Arcak2022} provide centralized LMI conditions for certifying stability and dissipativity of nonlinear networked systems from subsystem-level dissipativity properties, while \cite{Agarwal2021,WelikalaP32022} developed decentralized and compositional analysis procedures for linear networked systems with known subsystem models. However, direct synthesis of interconnection topologies has received comparatively less attention. Early results such as \cite{Xia2014,Xia2018} provide nonlinear-inequality-based synthesis conditions for networks with one or two subsystems, whereas \cite{Ebihara2017} addresses topology synthesis for positive linear systems with known dynamics. Related symmetry-based methods reduce analysis complexity or tune a restricted interconnection structure rather than treating a general topology as a design variable \cite{Rufino2018,Ghanbari2016,Goodwine2013}. Thus, although these works establish important foundations, existing topology-synthesis results either require explicit subsystem models, are limited to special system classes or small networks, or do not jointly synthesize controllers and topology under scalable dissipativity certificates.

A preliminary version of this work \cite{WelikalaP52022} partially addressed this gap by developing centralized, and later decentralized \cite{Welikala2022Ax3}, LMI techniques for dissipativity-based analysis and interconnection topology synthesis using pre-certified subsystem dissipativity properties. That formulation is useful when subsystems are already equipped with local controllers and the main design freedom is the interconnection matrix. However, it does not address how such subsystem dissipativity certificates can be enforced via feedback control, how distributed global controllers should be synthesized in conjunction with the topology, or how the design can be performed when subsystem models are unavailable. This paper addresses these limitations for networked linear systems with disturbance inputs and performance outputs. In particular, we propose a hierarchical co-design framework in which local controllers enforce favorable subsystem dissipativity properties, distributed global controllers and interconnection topologies are then co-designed to guarantee network-level dissipativity, and the same design logic is further extended to a direct data-driven setting using trajectory data, quadratic disturbance descriptions, and the matrix S-lemma. The proposed model-based and data-driven methods are illustrated on a DC microgrid case study, with the objective of achieving robust (dissipative) voltage regulation and current sharing.

\paragraph*{\textbf{Contributions}}
Our main contributions are as follows:
\begin{enumerate}
\item We formulate a hierarchical dissipativity-based co-design problem for networked linear systems, in which local controllers, distributed global controllers, and interconnection topologies are jointly synthesized.
\item We develop a model-based LMI framework that enforces subsystem-level dissipativity via local feedback design and leverages the resulting certificates to co-design distributed controllers and topologies with network-level dissipativity guarantees.
\item We derive a direct data-driven counterpart using only subsystem input-state-output trajectories affected by unknown but quadratically bounded disturbances, leading to robust LMI conditions via the matrix S-lemma.
\item We demonstrate the proposed model-based and data-driven methods on a DC microgrid example for robust voltage regulation, current sharing, and communication-topology design.
\end{enumerate}

\paragraph*{\textbf{Organization}} 
This paper is organized as follows. Section \ref{Preliminaries} introduces the notation, matrix-theoretic tools, dissipativity notions, LMI conditions for dissipativity of linear time-invariant systems, and the baseline networked-system interconnection design result used later in the paper. Section \ref{Sec:Model-Based} presents the model-based hierarchical co-design framework, including local controller synthesis, distributed global controller/topology co-design, and a refined local design that improves compatibility with the global stage. Section \ref{Sec:Data-Driven} develops the corresponding data-driven framework using subsystem trajectory data and quadratic disturbance bounds. Section \ref{Sec:Results} applies the proposed methods to a DC microgrid case study, and Section \ref{Sec:Conclusion} concludes the paper.

\paragraph*{\textbf{Notations}}
$\mathbb{R}$ and $\mathbb{N}$ denote the sets of real and natural numbers, respectively, and $\mathbb{N}^0 \triangleq \mathbb{N}\cup\{0\}$. 
For any $N\in\mathbb{N}$, we define $\mathbb{N}_N\triangleq\{1,2,\ldots,N\}$ and $\mathbb{N}_N^0\triangleq\{0,1,2,\ldots,N\}$. 
An $n\times m$ block matrix $A$ is denoted by $A=[A_{ij}]_{i\in\mathbb{N}_n,j\in\mathbb{N}_m}$. 
The notation $[A_{ij}]_{j\in\mathbb{N}_m}$ denotes a block row matrix, and $\diag([A_{ii}]_{i\in\mathbb{N}_n})$ denotes a block diagonal matrix. 
The matrices $\0$ and $\I$ denote zero and identity matrices, respectively, with dimensions clear from context. 
For a symmetric matrix $A\in\mathbb{R}^{n\times n}$, $A>0$ and $A\geq0$ denote positive definiteness and positive semidefiniteness, respectively. Throughout the paper, matrix inequalities are used only for symmetric matrices.
The symbol $\star$ denotes entries or blocks that are implied by symmetry. 
For a matrix $A$, we define $\mathcal{H}(A)\triangleq A+A^\T$. 
The notation $\mb{1}_{\{\cdot\}}$ denotes the indicator function, and $\e_{ij}\triangleq\mb{1}_{\{i=j\}}$. 
The norm $\|\cdot\|_1$ denotes the entrywise $\ell_1$ norm when applied to a matrix. 
The acronyms RHS (LHS) denote the right(left)-hand side of an equation.

\section{Preliminaries}\label{Preliminaries}

\subsection{Some Useful Lemmas} 
Here we recall several useful lemmas regarding matrix definiteness.

\begin{lemma} [Schur's complement \cite{Bernstein2009}] \label{Lm:SchursComplement}
For some matrices $\Theta, \Phi, \Gamma$ (with appropriate dimensions),  
$$
\{\Gamma-\Phi^\T \Theta^{-1} \Phi \geq 0,\ \Theta > 0\} \iff   
\bm{\Theta & \Phi \\ \Phi^\T & \Gamma} \geq 0.
$$
\end{lemma}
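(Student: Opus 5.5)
The plan is to prove both directions through a block congruence transformation that block-diagonalizes the matrix, using that $\Theta>0$ is invertible. Throughout, $\Theta$ and $\Gamma$ are taken symmetric, as the notation section stipulates for all matrix inequalities. The key identity, valid whenever $\Theta$ is invertible, is
\begin{equation*}
\bm{\I & \0 \\ -\Phi^\T\Theta^{-1} & \I}\bm{\Theta & \Phi \\ \Phi^\T & \Gamma}\bm{\I & -\Theta^{-1}\Phi \\ \0 & \I} = \bm{\Theta & \0 \\ \0 & \Gamma-\Phi^\T\Theta^{-1}\Phi}.
\end{equation*}
It can be checked by direct multiplication using $\Theta^{-\T}=\Theta^{-1}$. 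The outer factors are $T^\T$ and $T$ for the block upper-triangular $T$, which has identity diagonal blocks and is therefore nonsingular. A congruence by a nonsingular matrix preserves positive semidefiniteness in both directions, since $x^\T M x = (Tx)^\T \tilde M (Tx)$ and $x\mapsto Tx$ is a bijection.

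For ($\Rightarrow$): given $\Theta>0$ and $\Gamma-\Phi^\T\Theta^{-1}\Phi\ge0$, the block-diagonal matrix on the right is positive semidefinite. By the congruence, so is the original block matrix.

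For ($\Leftarrow$): this direction needs more care, because as literally stated the right side only asserts the block matrix is $\ge 0$. Positive semidefiniteness of the block matrix alone does not force $\Theta>0$; the $(1,1)$ block of a PSD matrix is merely PSD, and one can take $\Theta=\0$, $\Phi=\0$. I would therefore read the equivalence in the standard way, with $\Theta>0$ assumed on both sides. In the form used later in the paper, it serves to convert a condition under the standing hypothesis $\Theta>0$. If the right-hand side is intended to carry $\Theta>0$ implicitly, then $\Theta^{-1}$ exists, the same congruence applies, and the $(2,2)$ block $\Gamma-\Phi^\T\Theta^{-1}\Phi$ of a PSD block-diagonal matrix is PSD. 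The latter follows by testing with vectors $[\0;\,y]$.

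The main obstacle is exactly this: making precise that $\Theta>0$ must be part of the hypothesis for the reverse direction. I would state that assumption explicitly in the proof, with the counterexample $\Theta=\0,\Phi=\0,\Gamma=\0$ in a remark. Everything else is the routine congruence computation above, for which I would cite \cite{Bernstein2009}.
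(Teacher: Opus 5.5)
Your block-triangular congruence is the standard argument behind this result, which the paper does not prove but only cites from \cite{Bernstein2009}. Your objection to ($\Leftarrow$) is also correct: $\Theta=\Phi=\Gamma=\0$ shows that the statement, and likewise Co.~\ref{Co:SchurCongruence}, needs $\Theta>0$ on the right-hand side as well (or strict inequalities on both sides), and this is how the paper actually uses it, imposing $Q_i>0$, $Q_{i1}>0$ and $Q>0$ as separate constraints. One harmless slip: with $\tilde M\triangleq T^\T M T$ the identity should read $x^\T\tilde M x=(Tx)^\T M(Tx)$, not $x^\T M x=(Tx)^\T\tilde M(Tx)$.
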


\begin{lemma} [Congruence principle \cite{Bernstein2009}] \label{Lm:Congruence}
A matrix $W \geq 0$ if and only if $P W P^\T \geq 0$ where $P$ is nonsingular.
\end{lemma}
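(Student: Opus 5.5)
The plan is to prove the two directions of Lemma \ref{Lm:Congruence} separately, using only the definition of positive semidefiniteness, $W \geq 0 \iff x^\T W x \geq 0$ for all $x$. Here $W$ is symmetric, and $PWP^\T$ is then automatically symmetric as well. The tool for both directions is the change of variables induced by $P^\T$.

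\emph{Forward direction.} Suppose $W \geq 0$ and take an arbitrary vector $x$. Set $y \triangleq P^\T x$. Then
$$x^\T (P W P^\T) x = (P^\T x)^\T W (P^\T x) = y^\T W y \geq 0.$$
Since $x$ was arbitrary, $PWP^\T \geq 0$. This direction does not use the nonsingularity of $P$; it holds for any $P$ of compatible dimensions.

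\emph{Reverse direction.} Suppose $PWP^\T \geq 0$ and take an arbitrary vector $y$. Because $P$ is nonsingular, $P^\T$ is invertible, so $x \triangleq (P^\T)^{-1} y$ is well defined and satisfies $P^\T x = y$. Then
$$y^\T W y = x^\T (PWP^\T) x \geq 0,$$
which gives $W \geq 0$.

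Equivalently, one can apply the forward direction to the matrix $PWP^\T$ with the nonsingular matrix $P^{-1}$, since $P^{-1}(PWP^\T)P^{-\T} = W$.

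The only point needing care is the reverse direction, because it requires every vector $y$ to be reachable as $P^\T x$. That is exactly where nonsingularity of $P$ is needed, and without it the equivalence can fail. The same argument with strict inequalities, for $x \neq 0$, shows $W > 0 \iff PWP^\T > 0$, since $x \neq 0$ if and only if $P^\T x \neq 0$ when $P$ is nonsingular. This strict version is the form typically used later alongside Lemma \ref{Lm:SchursComplement}.
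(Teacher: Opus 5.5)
Your proof is correct. The change of variables $y = P^\T x$, which is a bijection exactly because $P$ is nonsingular, is the standard argument; the paper itself gives no proof and only cites the result from Bernstein's matrix reference. Your side remarks are also accurate: the forward direction holds for any $P$ of compatible dimensions, and the same argument gives the strict version $W>0 \iff PWP^\T>0$.
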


\begin{corollary}\label{Co:SchurCongruence}
For some block matrices $\Theta \triangleq [\Theta_{ij}]_{i,j\in\N_2}$, $\Phi \triangleq \bm{\Phi_1 \\ \Phi_2^\T}$, and $\Gamma$ (with appropriate dimensions)    
\begin{align*}
\bm{\Theta_{11} & \Phi_{1} & \Theta_{12}\\ \Phi_1^\T & \Gamma & \Phi_2 \\ \Theta_{12}^\T & \Phi_2^\T & \Theta_{22}} \geq 0 
\iff \bm{\Theta_{11} & \Theta_{12} & \Phi_{1}  \\ \Theta_{12}^\T  & \Theta_{22} & \Phi_2^\T \\ \Phi_1^\T & \Phi_2 & \Gamma} \geq 0  \\
\iff \{\Gamma - \Phi^\T \Theta^{-1} \Phi \geq 0,\ \Theta > 0\}.
\end{align*}
\end{corollary}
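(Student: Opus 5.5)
The plan is to obtain both equivalences from a single permutation congruence (Lemma \ref{Lm:Congruence}) followed by one application of Schur's complement (Lemma \ref{Lm:SchursComplement}).

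\emph{First equivalence.} Denote the left $3\times 3$ block matrix by $W$ and the middle one by $W'$. Let $P$ be the block permutation matrix that swaps the second and third block rows:
\begin{align*}
P \triangleq \bm{\I & \0 & \0\\ \0 & \0 & \I \\ \0 & \I & \0}.
\end{align*}
It is a product of elementary block swaps, so it is nonsingular, and $P^\T = P^{-1}$. A direct block computation gives $P W P^\T = W'$. Left multiplication by $P$ reorders the block rows to $(1,3,2)$, and right multiplication by $P^\T$ applies the same reordering to the block columns. For instance, the $(2,3)$ block of $W'$ is the $(3,2)$ block of $W$, namely $\Phi_2^\T$, and the $(1,3)$ block of $W'$ is the $(1,2)$ block of $W$, namely $\Phi_1$. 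Lemma \ref{Lm:Congruence} then gives $W\geq 0 \iff W'\geq 0$.

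\emph{Second equivalence.} Partition $W'$ as a $2\times 2$ block matrix, taking its leading $2\times 2$ block submatrix as the first block and $\Gamma$ as the second. The upper-left block is exactly $\Theta = [\Theta_{ij}]_{i,j\in\N_2}$, where symmetry of $\Theta$ gives $\Theta_{21}=\Theta_{12}^\T$. The off-diagonal block is
\begin{align*}
\bm{\Phi_1 \\ \Phi_2^\T} = \Phi,
\end{align*}
and its transpose $[\Phi_1^\T\ \ \Phi_2]$ appears in the lower-left position. Hence $W' = \bm{\Theta & \Phi\\ \Phi^\T & \Gamma}$. Lemma \ref{Lm:SchursComplement} yields $W'\geq 0 \iff \{\Gamma-\Phi^\T\Theta^{-1}\Phi\geq 0,\ \Theta>0\}$, and chaining this with the first equivalence proves the corollary.

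\emph{Main obstacle.} The only delicate point is the block bookkeeping in $PWP^\T$: one must check that the transposes land so that the $(2,3)$ entry of $W'$ is $\Phi_2^\T$ rather than $\Phi_2$, consistent with how $\Phi$ is defined. The argument also inherits Lemma \ref{Lm:SchursComplement} as stated, with strict $\Theta>0$. We therefore do not address the singular-$\Theta$ semidefinite case; it is implicitly excluded by the form of the right-hand condition.
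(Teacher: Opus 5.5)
Your proposal is correct and is exactly the paper's argument: a block-permutation congruence (Lemma~\ref{Lm:Congruence}) that swaps block rows/columns $2$ and $3$, followed by Schur's complement (Lemma~\ref{Lm:SchursComplement}) on the partition $\bigl[\begin{smallmatrix}\Theta & \Phi\\ \Phi^\T & \Gamma\end{smallmatrix}\bigr]$. One correction to your closing remark: the right-hand condition does not exclude the singular-$\Theta$ case, because $W\geq 0$ only forces $\Theta\geq 0$ (take all blocks zero), so the left-to-right implication yielding $\Theta>0$ is a defect of Lemma~\ref{Lm:SchursComplement} as stated in the paper, which neither your argument nor the paper's repairs.
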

\begin{proof}
First, by applying Lm. \ref{Lm:Congruence}, we can interchange block rows/columns $2$ and $3$ via selecting $P$ to be the corresponding permutation matrix. Then, the proof is complete by applying Lm. \ref{Lm:SchursComplement}. 
\end{proof}

\begin{lemma} [Block-element-wise form \cite{WelikalaJ22022}] \label{Lm:BEW}
Consider an $m\times m$ block matrix $\Psi \triangleq [\Psi^{kl}]_{k,l\in\N_m}$, where each block $\Psi^{kl}=[\Psi^{kl}_{ij}]_{i,j\in\N_n}$ is itself an $n \times n$ block matrix with compatible dimensions. The block-element-wise (BEW) form of $\Psi$ is defined as $\mathrm{BEW}(\Psi) \triangleq [[\Psi^{kl}_{ij}]_{k,l\in\N_m}]_{i,j\in\N_n}$, and it satisfies: 
$\Psi \geq 0  \iff \mathrm{BEW}(\Psi) \geq 0.$
\end{lemma}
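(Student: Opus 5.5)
The plan is to show that $\mathrm{BEW}(\Psi)$ is obtained from $\Psi$ by a symmetric permutation of block rows and block columns. Once that is established, the equivalence is an immediate consequence of the congruence principle (Lm. \ref{Lm:Congruence}), exactly as block rows/columns were interchanged in the proof of Co. \ref{Co:SchurCongruence}.

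\textbf{Indexing.} Assume, as is implicit in the statement, that the sub-block $\Psi^{kl}_{ij}$ has dimensions $d_i\times d_j$, independent of $k,l$. This is what makes both $\Psi$ and $\mathrm{BEW}(\Psi)$ well-defined block matrices.
\begin{itemize}
\item Index the fine block rows of $\Psi$ by pairs $(k,i)$ in lexicographic order with $k$ outer, i.e., position $(k-1)n+i$.
\item Index the fine block rows of $\mathrm{BEW}(\Psi)$ by pairs $(i,k)$ with $i$ outer, i.e., position $(i-1)m+k$.
\item In both matrices, the fine block at row pair $(k,i)$ and column pair $(l,j)$ is the same sub-block $\Psi^{kl}_{ij}$. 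Only the ordering of the pairs differs.
\end{itemize}

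\textbf{Construction of $P$.} Let $\sigma$ be the bijection on $\N_{mn}$ sending $(k-1)n+i \mapsto (i-1)m+k$; this is the perfect-shuffle (commutation) permutation. Let $P$ be the block permutation matrix whose $(\sigma(r),r)$ block is an identity of the appropriate size $d_i$ and whose other blocks are zero. Then $P$ is orthogonal, hence nonsingular.

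\textbf{Verification.} A direct index check gives $P\Psi P^\T = \mathrm{BEW}(\Psi)$. Left multiplication by $P$ moves fine row $(k,i)$ to position $(i,k)$, and right multiplication by $P^\T$ does the same for columns. The $((i,k),(j,l))$ block of the product is therefore $\Psi^{kl}_{ij}$, which matches the definition of $\mathrm{BEW}(\Psi)$. Symmetry is preserved, since $P\Psi P^\T$ is symmetric whenever $\Psi$ is.

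\textbf{Conclusion.} Applying Lm. \ref{Lm:Congruence} gives $\Psi\geq0 \iff P\Psi P^\T\geq 0 \iff \mathrm{BEW}(\Psi)\geq0$.

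The only real obstacle is bookkeeping. First, the permutation must act on blocks of possibly unequal sizes, which is why the identity blocks in $P$ must be sized $d_i$ and why the uniformity of sub-block dimensions across $k,l$ is needed. Second, the same $\sigma$ must act on rows and columns, so that the transformation is a congruence and not merely an equivalence. If desired, one can realize $P$ concretely as a product of adjacent block transpositions, each of which is a congruence of the kind already used in Co. \ref{Co:SchurCongruence}, and induct on the number of transpositions.
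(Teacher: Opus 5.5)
Your argument is correct and is the standard one. The paper gives no proof of its own for this lemma; it imports it from the cited reference. Writing $\mathrm{BEW}(\Psi)=P\Psi P^\T$ for the block perfect-shuffle permutation $P$ and then invoking Lm.~\ref{Lm:Congruence} is exactly the row/column-interchange device the paper uses in the proof of Co.~\ref{Co:SchurCongruence}.

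One correction: uniform sub-block sizes $d_i$ across $k$ are neither implicit in the statement nor needed. ``Compatible dimensions'' only requires $\Psi^{kl}_{ij}$ to be $a_{k,i}\times a_{l,j}$. This more general case is what the paper actually uses in \eqref{Eq:NecessaryCondStep1}, where the fine blocks in the $\tilde u$ and $w$ rows have size $n_{xi}$ while those in the $z$ and $y$ rows have size $n_{yi}$. To cover it, take the $(\sigma(r),r)$ block of $P$ to be $\I_{a_{k,i}}$ for $r=(k,i)$; your verification then goes through unchanged.
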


\begin{lemma}\label{Lm:MatrixSLemma} [Matrix S-Lemma \cite{van2020noisy}]
Let $M, N$ be symmetric matrices, 
$$\mathcal{S}_N \triangleq \{ Z\in\R^{n \times k} : \bm{\I \\ Z}^\T N \bm{\I \\ Z} \geq 0\},$$
and $\mathcal{S}_N^+$ be defined similarly to $\mathcal{S}_N$ except with a strict inequality. Assume that there exists some $\bar{Z} \in \mathcal{S}_N^+$. Then, if $\mathcal{S}_N$ is bounded, the implication (for any matrix $Z$)
$$
\bm{\I \\ Z}^\T N \bm{\I \\ Z} \geq 0 \implies \bm{\I \\ Z}^\T M \bm{\I \\ Z} > 0
$$
holds if and only if there exists $\lambda \geq 0$ such that $M - \lambda N > 0$. If $\mathcal{S}_N$ is not bounded, 
the implication (for any matrix $Z$)
$$
\bm{\I \\ Z}^\T N \bm{\I \\ Z} \geq 0 \implies \bm{\I \\ Z}^\T M \bm{\I \\ Z} \geq 0
$$
holds if and only if there exists $\lambda \geq 0$ such that $M - \lambda N \geq 0$.
\end{lemma}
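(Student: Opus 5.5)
The plan is to argue via the full-rank S-lemma for quadratic matrix inequalities: a duality (separation) argument in one direction and a direct computation in the other. The direction from the multiplier to the implication is elementary, so I will treat it first.

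\emph{Sufficiency.} Suppose $M-\lambda N>0$ (respectively $\geq 0$) for some $\lambda\geq 0$. Take any $Z$ with $\bm{\I \\ Z}^\T N \bm{\I \\ Z}\geq 0$. Congruence with the full column rank matrix $\bm{\I \\ Z}$ gives
\begin{align*}
\bm{\I \\ Z}^\T M \bm{\I \\ Z} = \bm{\I \\ Z}^\T (M-\lambda N) \bm{\I \\ Z} + \lambda \bm{\I \\ Z}^\T N \bm{\I \\ Z}.
\end{align*}
The first term is positive definite (respectively semidefinite), because $\bm{\I \\ Z}$ has trivially zero kernel. This is the reasoning of Lm.~\ref{Lm:Congruence} extended to injective non-square factors. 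The second term is positive semidefinite. Hence the implication holds in both cases.

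\emph{Necessity.} Here I would follow the route in \cite{van2020noisy}. Partition $N=\bm{N_{11} & N_{12}\\ N_{12}^\T & N_{22}}$. The Slater-type point $\bar Z\in\mathcal{S}_N^+$ plus Schur complement arguments (Lm.~\ref{Lm:SchursComplement}) give a useful structure: under boundedness, $N_{22}<0$, and then $\mathcal{S}_N$ is a matrix ellipsoid $\{Z:(Z-Z_c)^\T(-N_{22})(Z-Z_c)\leq N|N_{22}\}$ with center $Z_c=-N_{22}^{-1}N_{12}^\T$ and $N|N_{22}>0$. Apply the change of variables $Z=Z_c+(-N_{22})^{-1/2}U(N|N_{22})^{1/2}$. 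This turns $\mathcal{S}_N$ into the unit ball $\{U:U^\T U\leq \I\}$ and transforms $M$ congruently. So it suffices to prove the claim for the normalized $N=\diag(\I,-\I)$.

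The core step is a lossless-S-procedure statement. If $\bm{\I \\ U}^\T M\bm{\I \\ U}>0$ for all contractions $U$, then $M+\lambda\,\diag(-\I,\I)>0$ for some $\lambda\geq0$. To prove it, I would consider the convex cone generated by the matrices $\bm{\I \\ U}\bm{\I \\ U}^\T$ and separate with respect to the trace inner product. The key fact to establish is that for every $x$, the set $\{(x^\T M_{\cdot}x,\,x^\T N_{\cdot} x)\}$ over rank-one lifts is convex. This is Dines/Yakubovich-type convexity, and it is exactly what lets a single scalar $\lambda$ work rather than a matrix multiplier. Concretely, I would show the following: if no $\lambda$ works, a Hahn--Banach separating positive semidefinite matrix $W\neq0$ exists with $\mathrm{tr}(MW)\leq 0$ and $\mathrm{tr}(NW)\geq0$. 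Then I would rank-reduce $W$ using the dimension argument of the classical S-lemma to obtain a vector $\bm{x\\ y}$ with $y=Ux$ for a contraction $U$ that violates the hypothesis. The column-wise structure ($k$ columns) is handled by applying this to each column direction $x\in\R^n$, because $\bm{\I\\ U}x=\bm{x\\ Ux}$ and every $(x,y)$ with $\|y\|\leq\|x\|$ arises this way.

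The unbounded case follows the same separation argument with non-strict inequalities. Instead of the ellipsoid normalization, one uses $N_{22}\leq 0$ restricted to its range, with the kernel directions of $N_{22}$ giving unconstrained directions in $Z$. This is why strictness is lost there. I expect the main obstacle to be the rank-reduction step: extracting from the separating $W$ a genuine point of the form $\bm{\I\\ Z}$, i.e.\ with invertible top block, while keeping both trace inequalities. This is where the Slater point $\bar Z$ is used to perturb the top block into an invertible one.
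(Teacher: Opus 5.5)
The paper does not prove this lemma; it imports it from \cite{van2020noisy}, so your argument has to stand on its own. Your sufficiency direction is correct, and so is your bounded case. A nonempty bounded $\mathcal{S}_N$ forces $N_{22}<0$, and Slater gives $N|N_{22}>0$, so your affine change of variables legitimately normalizes $N$ to $\diag(\I,-\I)$. After that, every nonzero $[x;y]$ with $\|y\|\leq\|x\|$ has $x\neq0$ and is realized by the contraction $U=yx^\T/\|x\|^2$. Separating $\{M-\lambda N:\lambda\geq0\}$ from the open positive-definite cone and then taking a rank-one decomposition of the separating $W$ (Dines/Sturm--Zhang) closes this case. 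There is no ``invertible top block'' difficulty here at all, so the obstacle you flagged is misplaced.

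The genuine gap is the unbounded case, on three counts.

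\textbf{First}, your reduction presumes $N_{22}\leq0$ with the kernel of $N_{22}$ giving free directions. Neither follows from Slater plus unboundedness: $N_{22}$ may be indefinite or positive definite (e.g.\ $N=\diag(\I,\I)$, where $\mathcal{S}_N$ is everything). Even when $N_{22}\leq0$, kernel directions are free only if $\ker N_{22}\subseteq\ker N_{12}$.

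\textbf{Second}, separation ``with non-strict inequalities'' only yields $\mathrm{tr}(MW)\leq0\leq\mathrm{tr}(NW)$, which does not contradict a non-strict hypothesis. You need the Slater-based classical S-lemma on vectors, with Slater vector $[\xi;\bar Z\xi]$, to obtain $z=[x;y]$ with $z^\T Nz\geq0$ and $z^\T Mz<0$.

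\textbf{Third}, and most importantly, you must lift this $z$ to some $Z\in\mathcal{S}_N$ with $Zx=y$. This is exactly where the matrix S-lemma goes beyond the classical one, and it can fail on the boundary $z^\T Nz=0$. Take $k=2$, $n=1$, $N=\bm{1&0&-1\\0&1&1\\-1&1&1}$. Then $\bar Z=0\in\mathcal{S}_N^+$, $N_{22}=1>0$, and $\mathcal{S}_N=\{[z_1\ \ z_2]:(1-z_1)(1+z_2)\geq\frac12\}$ is unbounded. The vector $z=[1\ 0\ 1]^\T$ has $z^\T Nz=0$, yet every $Z$ with $Zx=y$, i.e.\ $Z=[1\ \ z_2]$, gives $\bm{\I\\ Z}^\T N\bm{\I\\ Z}=\bm{0&1\\1&(1+z_2)^2}\not\geq0$.

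The missing ingredient has two steps. First, perturb $z$ along $[\xi;\bar Z\xi]$ so that $z^\T Nz>0$ and $x\neq0$ while $z^\T Mz<0$ persists. Second, prove a lifting lemma for strictly feasible $z$. For instance, $\mathrm{im}\bm{\I\\ \bar Z}$ shows $N$ has at least $k$ positive eigenvalues, so $N$ restricted to the $N$-orthogonal complement of $z$ has at least $k-1$. A generic $(k-1)$-dimensional $N$-positive subspace there, together with $z$, spans an $N$-positive graph $\mathrm{im}\bm{\I\\ Z}$ with $Zx=y$. Then $x^\T\bm{\I\\ Z}^\T M\bm{\I\\ Z}x=z^\T Mz<0$ gives the contradiction. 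This argument never uses boundedness, and it is what your proposal needs to cover the second half of the statement.
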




\subsection{Dissipativity}\label{SubSec:dissipativity}

Consider the discrete-time dynamical system  
\begin{equation}\label{Eq:GeneralSystem}
\begin{aligned}
    x(t+1) &= f(x(t),u(t)),\\
    y(t) &= h(x(t),u(t)),
    \end{aligned}
\end{equation}
where the state $x(t)\in\R^{n_x}$, input $u(t)\in \R^{n_u}$, output $y(t)\in\R^{n_y}$ and $t\in\N^0$. Suppose under $u(t)=\0$, $x(t)=\0$ is an equilibrium point, i.e., $\0 = f(\0,\0)$, and $h(\0,\0) = \0$.

\begin{definition}\label{Def:Dissipativity}
The system \eqref{Eq:GeneralSystem} is \emph{dissipative} (from input $u$ to output $y$) under the \emph{supply rate function} $s:\R^{n_u}\times\R^{n_y}\rightarrow \R$ if there exists a \emph{storage function} $V: \R^{n_x} \rightarrow \R$ such that $V(0)=0$, $V(x)>0$ for all $x\neq \0$, and 
$$
\Delta V \triangleq V(x(t+1)) - V(x(t)) 
\leq  s(u(t),y(t)),
$$
for all $t\in\N^0$, all admissible input sequences, and all corresponding trajectories of \eqref{Eq:GeneralSystem}.
\end{definition}

The dissipativity property can be specialized by choosing different supply rate functions. Following \cite{WelikalaP52022}, we use a quadratic supply rate function determined by a coefficient matrix $\X$ to define a specialized dissipativity property named \emph{$\X$-dissipativity} as follows.

\begin{definition}\label{Def:X-Dissipativity}
The system \eqref{Eq:GeneralSystem} is \emph{$\X$-dissipative} where $\X = \X^\T \triangleq [\X^{kl}]_{k,l\in\N_2} \in\R^{(n_u+n_y)\times(n_u+n_y)}$ if it is dissipative under the supply rate function:
$$
s(u, y) \triangleq 
    \bm{u\\ y}^\T 
    \bm{\X^{11} & \X^{12}\\\X^{21} & \X^{22}}
    \bm{u\\y}.
$$
\end{definition}

The above-defined $\X$-dissipativity is equivalent to the conventional $(Q,S,R)$-dissipativity formulation \cite{Kottenstette2014}, and based on the used $\X$, it can represent several properties of interest as follows. 

\begin{remark}\label{Rm:X-DissipativityVersions}
If the system \eqref{Eq:GeneralSystem} is $\X$-dissipative with:  
\begin{enumerate}
\item $\X = \bm{\0 & \frac{1}{2}\I \\ \frac{1}{2}\I & \0}$, then it is \emph{passive};
\item $\X = \bm{-\nu\I & \frac{1}{2}\I \\ \frac{1}{2}\I & -\rho\I}$, then it is IF-OFP($\nu,\rho$), i.e., input feedforward and output feedback passive with indices $\nu$ and $\rho$, respectively \cite{WelikalaP42022};
\item $\X = \bm{\gamma^2\I & \0 \\ \0 & -\I}$, then it is L2G($\gamma$), i.e., finite-gain \emph{$L_2$-stable} with gain $\gamma$ \cite{WelikalaP42022};
\item $\X = \bm{-a\I & \frac{a+b}{2b}\I \\ \frac{a+b}{2b}\I & -\frac{1}{b}\I}$ (or $\X = \bm{-ab\I & \frac{a+b}{2}\I \\ \frac{a+b}{2}\I & -\I}$)  with $b>a$ and $b>0$ (or $b>a$), then it is sector bounded where $a$ and $b$ are sector bound parameters \cite{Kottenstette2014}.
\end{enumerate}  
\end{remark}

\subsection{Dissipativity of Linear Time-Invariant (LTI) Systems}\label{SubSec:dissipativity_LTI_systems}

If \eqref{Eq:GeneralSystem} is linear time-invariant (LTI), a necessary and sufficient condition for its $\X$-dissipativity can be found in the form of a linear matrix inequality (LMI) problem as follows.

\begin{proposition}\label{Pr:LTISystemXDisspativity}
\cite{Welikala2025Ax1} The linear time-invariant (LTI) system
\begin{equation}\label{Eq:Pr:LTISystemXDisspativity1}
\begin{aligned}
    x(t+1) &= A x(t) + B u(t),\\
    y(t) &= Cx(t) +Du(t),
\end{aligned}
\end{equation}
is $\X$-dissipative (from input $u$ to output $y$) if and only if there exists a matrix $P$ such that $P>0$ and 
\begin{equation}
\nonumber
\begin{aligned}
\label{Eq:Pr:LTISystemXDisspativity2}
\bm{
P & PA & PB \\
\star & P + C^\T \X^{22} C & C^\T \X^{21} + C^\T \X^{22} D\\
\star & \star & \X^{11} + \mathcal{H}(\X^{12} D) + D^\T \X^{22}D
}
\geq 0.
\end{aligned}
\end{equation}
\end{proposition}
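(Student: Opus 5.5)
We take a quadratic storage function $V(x)=x^\T P x$ with $P>0$. We then show that the dissipation inequality of Def. \ref{Def:X-Dissipativity}, imposed for all $(x,u)$, is equivalent to a quadratic-form inequality. Finally, we convert that inequality into the displayed LMI using Lm. \ref{Lm:SchursComplement} and Lm. \ref{Lm:Congruence} (via Co. \ref{Co:SchurCongruence}).

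\textbf{Step 1 (reduction to a matrix inequality).} Along trajectories of \eqref{Eq:Pr:LTISystemXDisspativity1}, substitute $x(t+1)=Ax+Bu$ and $y=Cx+Du$. The condition $\Delta V\le s(u,y)$ for all $(x,u)$ then becomes
$$
\bm{x\\u}^\T\left(\bm{C^\T\X^{22}C & C^\T\X^{21}+C^\T\X^{22}D\\ \star & \X^{11}+\mathcal{H}(\X^{12}D)+D^\T\X^{22}D} + \bm{P & \0\\ \0 & \0} - \bm{A^\T\\B^\T}P\bm{A & B}\right)\bm{x\\u}\ge 0 .
$$
Since $(x,u)$ is arbitrary (every pair is realized at $t=0$), this holds if and only if the middle matrix, call it $W$, satisfies $W\ge0$. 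For the cross term, note that $s(u,y)$ expands as $u^\T\X^{11}u+2u^\T\X^{12}y+y^\T\X^{22}y$. Using $\X^{21}=(\X^{12})^\T$ and collecting the $x$–$u$ cross terms gives exactly the blocks shown.

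\textbf{Step 2 (Schur complement).} Write $W=\Gamma-\Phi^\T P^{-1}\Phi$ with $\Phi=[PA\ \ PB]$, so that $\Phi^\T P^{-1}\Phi=[A\ B]^\T P[A\ B]$. Here $\Gamma$ is the sum of the first two matrices in Step 1. Given $P>0$, Lm. \ref{Lm:SchursComplement} with $\Theta=P$ shows that $W\ge0$ is equivalent to
$$
\bm{P & \Phi\\ \Phi^\T & \Gamma}\ge0 ,
$$
which is precisely the $3\times3$ block LMI in the statement.

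\textbf{Step 3 (necessity).} Sufficiency follows directly from Steps 1–2. Necessity is the main obstacle, because we must justify that $\X$-dissipativity with an arbitrary storage function implies the existence of a \emph{quadratic} one. I would invoke the standard LTI result (the KYP / available-storage argument): for linear systems with quadratic supply, the available storage is a quadratic form. Its positivity, as required in Def. \ref{Def:Dissipativity}, yields $P>0$. I would try to rely on the cited source for this step rather than reprove it.

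Once a quadratic storage is obtained, Steps 1–2 run in reverse, since every step there is an equivalence.
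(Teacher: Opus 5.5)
Your Steps 1 and 2 are correct, and they are the standard derivation; the paper offers no proof of its own and only cites the source. With $V(x)=x^\T Px$, the dissipation inequality over all $(x,u)$ is exactly $W\geq 0$. Since $P>0$ is imposed separately, Lm.~\ref{Lm:SchursComplement} with $\Theta=P$ and $\Phi=[PA\ \ PB]$ turns $W\geq 0$ into the displayed LMI. So the ``if'' direction is fine.

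The gap is in Step 3, and citing a reference does not close it. The sentence ``its positivity, as required in Def.~\ref{Def:Dissipativity}, yields $P>0$'' does not follow. Def.~\ref{Def:Dissipativity} requires positivity of the given storage $V$, but the available storage only satisfies $0\leq V_a\leq V$. It can vanish identically, e.g.\ whenever $s\geq 0$ pointwise, as with $C=\0$, $D=\0$, $\X^{11}\geq 0$. So the most you could extract is $P\geq 0$. Moreover, $V_a$ is a supremum over horizons of quadratic forms that need not be monotone in the horizon, so it is not automatically quadratic.

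The object that works is the required supply from the origin, $V_r(x)\triangleq\inf\sum_t s(u(t),y(t))$ over trajectories from $x(0)=\0$ to $x$:
\begin{itemize}
\item telescoping the dissipation inequality gives $V_r\geq V$, so $V_r(x)>0$ for $x\neq\0$;
\item concatenating trajectories gives $V_r(Ax+Bu)\leq V_r(x)+s(u,Cx+Du)$;
\item \emph{if $(A,B)$ is reachable}, each horizon-$T$ infimum ($T\geq n_x$) is a quadratic form $x^\T R_T x$, since a quadratic cost is minimized over an affine input set depending linearly on $x$ and bounded below by $V(x)$;
\item these forms are non-increasing in $T$, because prepending idle steps at the origin costs nothing.
\end{itemize}
Hence $V_r(x)=x^\T Rx$ with $R>0$, and your Steps 1--2 then apply with $P=R$.

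Some hypothesis of this kind is genuinely needed, because Def.~\ref{Def:Dissipativity} admits arbitrary, even discontinuous, storage functions. Take $A=\bm{1&1\\0&1}$, $B=\0$, $C=\0$, $D=\0$, $\X^{11}\geq 0$. The function $V(x)=x_2^2$ if $x_2\neq 0$, $V(x)=x_1^2$ if $x_2=0$, is positive definite with $\Delta V\equiv 0\leq s$. Yet the LMI reduces to $P-A^\T PA\geq 0$, whose $(1,1)$ entry is $0$ and whose $(1,2)$ entry is $-P_{11}$, forcing $P_{11}=0$. So the ``only if'' direction requires either reachability of $(A,B)$ or a restriction of Def.~\ref{Def:Dissipativity} to quadratic storage functions. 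Reachability does hold in the paper's later use, Co.~\ref{Co:LTISystemXDisspativation}, where the external input enters through $\I$.
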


Consider the LTI system \eqref{Eq:Pr:LTISystemXDisspativity1} with a local controller $u(t) \triangleq Kx(t) + \bar{u}(t)$ and $D\triangleq \0$, and let $\tilde{u}(t) \triangleq B\bar{u}(t)$ denote an external input (which may also include disturbances). The following corollary provides an LMI problem for designing this local controller (i.e., $K$) to enforce/optimize the corresponding closed-loop $\X$-dissipativity from $\tilde{u}(t)$ to $y(t)$. 

\begin{corollary}\label{Co:LTISystemXDisspativation}
\cite{Welikala2025Ax1} The closed-loop LTI system
\begin{equation}\label{Eq:Co:LTISystemXDisspativation1}
\begin{aligned}
x(t+1) &= (A+BK)x(t) + \tilde{u}(t),\\
y(t) &= Cx(t),   
\end{aligned}
\end{equation}
is $\X$-dissipative with $\X^{22}<0$ from external input $\tilde{u}(t)$ to output $y(t)$ if and only if there exists $P>0$ and $\bar{K}$ such that  
\begin{equation}
\nonumber 
\bm{ 
(-\X^{22})^{-1} & \0 & CP & \0 \\
\star & P & AP+B\bar{K} & \I \\
\star & \star & P  & PC^\top \X^{21}\\
\star & \star & \star & \X^{11}} \geq 0,
\end{equation}  
with $K=\bar{K}P^{-1}$.
\end{corollary}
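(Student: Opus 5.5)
Apply Proposition~\ref{Pr:LTISystemXDisspativity} to the closed-loop system \eqref{Eq:Co:LTISystemXDisspativation1}, taking the state matrix $A+BK$, input matrix $\I$ (the input being $\tilde{u}$), output matrix $C$, and feedthrough $D=\0$. This gives the exact characterization: the closed loop is $\X$-dissipative if and only if there is some $\tilde{P}>0$ with
\begin{equation*}
\bm{\tilde{P} & \tilde{P}(A+BK) & \tilde{P} \\ \star & \tilde{P} + C^\T \X^{22} C & C^\T \X^{21} \\ \star & \star & \X^{11}} \geq 0 .
\end{equation*}
This inequality is not jointly linear in $(\tilde{P},K)$. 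The proof therefore consists of turning it, by equivalence-preserving operations only, into the stated LMI in $(P,\bar{K})$.

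\textbf{Linearization.} Apply Lemma~\ref{Lm:Congruence} with the nonsingular block-diagonal matrix $\diag(\tilde{P}^{-1},\tilde{P}^{-1},\I)$ and set $P\triangleq\tilde{P}^{-1}>0$. The result is
\begin{equation*}
\bm{P & (A+BK)P & \I \\ \star & P + PC^\T \X^{22} CP & PC^\T \X^{21} \\ \star & \star & \X^{11}} \geq 0 .
\end{equation*}
Since $K=\bar{K}P^{-1}$, the $(1,2)$ block equals $AP+B\bar{K}$, which is linear in $(P,\bar K)$. Conversely, given $(P,\bar K)$, define $K=\bar K P^{-1}$ and $\tilde P=P^{-1}$; the change of variables is a bijection because $P$ is invertible.

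\textbf{Removing the quadratic term.} The only remaining nonlinearity is $PC^\T\X^{22}CP = -(CP)^\T\big((-\X^{22})^{-1}\big)^{-1}(CP)$ in the $(2,2)$ block. The assumption $\X^{22}<0$ gives $\Theta\triangleq(-\X^{22})^{-1}>0$, so Lemma~\ref{Lm:SchursComplement} applies. Augment the matrix with a new first block row and column containing $\Theta$, with $\Phi=[\0,\ CP,\ \0]$ as the coupling row. This makes the original matrix exactly $\Gamma-\Phi^\T\Theta^{-1}\Phi$. Using the equivalence $\{\Gamma-\Phi^\T\Theta^{-1}\Phi\geq0,\ \Theta>0\}\iff\bm{\Theta&\Phi\\\Phi^\T&\Gamma}\geq0$ then yields exactly the four-block LMI in the statement. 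Every step is an equivalence, which gives both directions of the ``if and only if.''

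\textbf{Main obstacle.} The work is bookkeeping rather than anything conceptual. One must check that the congruence multiplies the $(1,3)$ block $\tilde P\cdot\I$ into $\I$, and that the $(2,3)$ block $C^\T\X^{21}$ becomes $PC^\T\X^{21}$. One must also check that the block ordering after the Schur augmentation matches the statement; if it does not, a permutation via Corollary~\ref{Co:SchurCongruence} fixes it.
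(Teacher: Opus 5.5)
Your proof is correct. The paper gives no proof of its own here, since the corollary is imported from \cite{Welikala2025Ax1}. Your chain is the standard derivation and reproduces the stated LMI block for block, with no permutation needed: Prop.~\ref{Pr:LTISystemXDisspativity} applied to the closed loop with input matrix $\I$ and $D=\0$, then congruence by $\diag(\tilde P^{-1},\tilde P^{-1},\I)$ with $P=\tilde P^{-1}$ and $\bar K=KP$, then a Schur complement on $(-\X^{22})^{-1}$. You also correctly take $(-\X^{22})^{-1}>0$ from the hypothesis $\X^{22}<0$ rather than from the LMI, which makes the Schur step a genuine equivalence in both directions.
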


\subsection{Networked Systems Design}\label{SubSec:NetworkedSystem}

Consider the networked system $\Sigma$ shown in Fig.~\ref{Fig:NetworkedSystem}, comprising $N$ independent discrete-time dynamic subsystems $\Sigma_i, i\in\N_N$ and a static interconnection matrix $M$ that defines how the subsystem inputs and outputs, and the networked system's exogenous inputs $w\in\R^{n_w}$ and performance outputs $z(t)\in\R^{n_z}$ are interconnected. 

\begin{figure}[!t]
    \centering
    \includegraphics[width=1.7in]{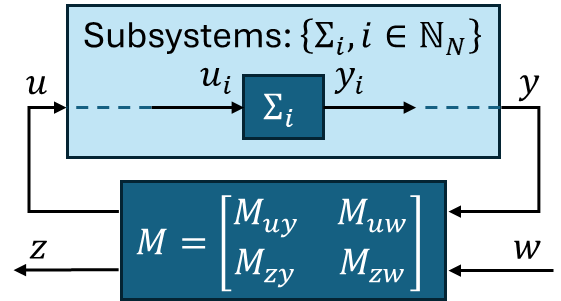}
	\caption{A generic networked system $\Sigma: w \rightarrow z$.}
	\label{Fig:NetworkedSystem}
\end{figure}

Each subsystem $\Sigma_i, i\in\N_N$ follows the dynamics
\begin{equation}\label{Eq:nonlinearSubsystemDynamics}
    \begin{aligned}
        x_i(t+1) &= f_i(x_i(t),u_i(t)),\\
        y_i(t) &= h_i(x_i(t),u_i(t)),
    \end{aligned}
\end{equation}
where the state $x_i(t)\in\R^{n_{xi}}$, input $u_i(t)\in\R^{n_{ui}}$, output $y_i(t)\in\R^{n_{yi}}$ and $t\in\N^0$. Further, under $u_i(t) = \0$, $x_i(t) = \0$ is an equilibrium point, i.e., $\0 = f_i(\0,\0)$, and $h_i(\0,\0) = \0$. Furthermore, it is $\X_i$-dissipative (from $u_i$ to $y_i$) where $\X_i = \X_i^\T \triangleq [\X_i^{kl}]_{k,l\in\N_2}$.

The interconnection matrix $M$ is structured according to the interconnection relationship:
\begin{align}
	\label{Eq:NSC2Interconnection}\bm{u\\z} = M \bm{y\\w} \equiv 
	\bm{M_{uy} & M_{uw} \\ M_{zy} & M_{zw}}\bm{y\\w},
\end{align}
where $u \triangleq [u_i^\T]_{i\in\N_N}^\T \in \R^{n_u}$ and $y \triangleq [y_i^\T]_{i\in\N_N}^\T \in \R^{n_y}$ with $n_u \triangleq \sum_{i\in\N_N} n_{ui}$ and $n_y \triangleq \sum_{i\in\N_N} n_{yi}$.

As shown in \cite{Welikala2025Ax1}, the interconnection matrix $M$ \eqref{Eq:NSC2Interconnection}, possibly together with selected entries of the desired network-level dissipativity matrix $\Y=\Y^\T\triangleq[\Y^{kl}]_{k,l\in\N_2}$, can be designed using an LMI problem to enforce and, when applicable, optimize the $\Y$-dissipativity of the networked system $\Sigma$ from $w$ to $z$. However, this requires two mild assumptions as detailed below.  

\begin{assumption}\label{As:NegativeDissipativity}
The desired $\Y$-dissipativity specification for the networked system $\Sigma$ is such that $\Y^{22}<0$. 
\end{assumption} 

\begin{remark}\label{Rm:As:NegativeDissipativity}
According to Rm. \ref{Rm:X-DissipativityVersions}, As. \ref{As:NegativeDissipativity} holds if we require the networked system $\Sigma$ to be L2G($\gamma$) (i.e., $L_2$-stable), or IF-OFP($\nu,\rho$) with $\rho>0$ (i.e., output feedback passive). As such properties are desirable, As. \ref{As:NegativeDissipativity} is mild.
\end{remark}

\begin{assumption}\label{As:PositiveDissipativity}
Each subsystem $\Sigma_i, i\in\N_N$ in the networked system $\Sigma$ is $\X_i$-dissipative with $\X_i^{11} > 0$. 
\end{assumption}

\begin{remark}\label{Rm:As:PositiveDissipativity}
According to Rm. \ref{Rm:X-DissipativityVersions}, As. \ref{As:PositiveDissipativity} holds if each subsystem $\Sigma_i,i\in\N_N$ is L2G($\gamma_i$) (i.e., $L_2$-stable), or IF-OFP($\nu_i,\rho_i$) with $\nu_i<0$ (i.e., lacks input feedback passivity). If such conditions are not met, local controllers can be used (e.g., see Co. \ref{Co:LTISystemXDisspativation}). Therefore, As.  \ref{As:PositiveDissipativity} is also mild. 
\end{remark}

\begin{proposition}\label{Pr:NSC2Synthesis}
\cite{Welikala2025Ax1} Under Assumptions \ref{As:NegativeDissipativity} and \ref{As:PositiveDissipativity}, the networked system $\Sigma$ is $\Y$-dissipative (from $w$ to $z$) if the interconnection matrix $M$ \eqref{Eq:NSC2Interconnection} is designed via the LMI problem:
\begin{equation}\label{Eq:Pr:NSC2Synthesis}
    \begin{aligned}
    \mbox{Find: }& L_{uy}, L_{uw}, M_{zy}, M_{zw}, \{p_i\in\R: i\in\N_N\}\\
    \mbox{Sub. to: }& p_i > 0, \forall i\in\N_N, \mbox{ and } \eqref{Eq:Pr:NSC2Synthesis2}
    \end{aligned}
\end{equation}
where 
$\X_p^{kl} \triangleq \diag([p_i\X_i^{kl}]_{i\in\N_N}), \forall k,l\in\N_2$, 
$\bar{\X}^{12} \triangleq \diag([(\X_i^{11})^{-1}\X_i^{12}]_{i\in\N_N})$, $\bar{\X}^{21} \triangleq (\bar{\X}^{12})^\T$ with 
$M_{uy} \triangleq (\X_p^{11})^{-1} L_{uy}$ and $M_{uw} \triangleq  (\X_p^{11})^{-1} L_{uw}$.
\end{proposition}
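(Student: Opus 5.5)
We follow the standard compositional dissipativity argument. Take $V_i$ to be the storage function certifying the $\X_i$-dissipativity of $\Sigma_i$. Use the weighted sum $V(x)\triangleq\sum_{i\in\N_N}p_iV_i(x_i)$, with $p_i>0$, as a candidate storage function for $\Sigma$. It clearly satisfies $V(\0)=0$ and $V(x)>0$ for $x\neq\0$.

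Summing the subsystem dissipation inequalities gives
$$
\Delta V\leq \sum_{i}p_i\bm{u_i\\y_i}^\T\X_i\bm{u_i\\y_i}=\bm{u\\y}^\T\bm{\X_p^{11}&\X_p^{12}\\\X_p^{21}&\X_p^{22}}\bm{u\\y}.
$$
To conclude $\Y$-dissipativity from $w$ to $z$ it suffices to show that, for all $y,w$, the supply $\bm{u\\y}^\T\X_p\bm{u\\y}\leq\bm{w\\z}^\T\Y\bm{w\\z}$ whenever $u,z$ are given by \eqref{Eq:NSC2Interconnection}. Write
$$
\bm{u\\y}=\bm{M_{uy}&M_{uw}\\\I&\0}\bm{y\\w},\qquad \bm{w\\z}=\bm{\0&\I\\M_{zy}&M_{zw}}\bm{y\\w}.
$$
This turns the requirement into a single matrix inequality in $(y,w)$:
$$
W\triangleq\bm{\0&\I\\M_{zy}&M_{zw}}^\T\Y\bm{\0&\I\\M_{zy}&M_{zw}}-\bm{M_{uy}&M_{uw}\\\I&\0}^\T\X_p\bm{M_{uy}&M_{uw}\\\I&\0}\geq0.
$$

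The main obstacle is that $W$ is quadratic (bilinear) in the unknowns. It contains $M^\T\X_p^{11}M$-type terms in $M_{uy},M_{uw}$ and $M_z^\T\Y^{22}M_z$ terms in $M_{zy},M_{zw}$. I would remove both nonlinearities with Schur complements, using the sign conditions from the assumptions. For the $z$-part, $-\Y^{22}>0$ by As.~\ref{As:NegativeDissipativity}, so the positive term $-[M_{zy}\ M_{zw}]^\T\Y^{22}[M_{zy}\ M_{zw}]$ would seem to need handling with the right sign. Since it is subtracted in $W$ only through $+\,M_z^\T\Y^{22}M_z$, which is negative, it can be Schur-complemented with the block $-(\Y^{22})^{-1}>0$, and this is linear in $M_{zy},M_{zw}$.

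For the $u$-part, the term $-M_u^\T\X_p^{11}M_u$ is also negative, since $\X_p^{11}>0$ by As.~\ref{As:PositiveDissipativity} and $p_i>0$. Substituting $L_{uy}=\X_p^{11}M_{uy}$ and $L_{uw}=\X_p^{11}M_{uw}$ makes it $L_u^\T(\X_p^{11})^{-1}L_u$. Because $(\X_p^{11})^{-1}=\diag(p_i^{-1}(\X_i^{11})^{-1})$ is not linear in $p_i$, I would instead complete the square. Grouping $\X_p^{11}M_u+\X_p^{12}[\I\ \0]$ gives
$$
\bm{u\\y}^\T\X_p\bm{u\\y}=(L_u+\X_p^{12}E)^\T(\X_p^{11})^{-1}(L_u+\X_p^{12}E)+E^\T\bigl(\X_p^{22}-\X_p^{21}(\X_p^{11})^{-1}\X_p^{12}\bigr)E,\qquad E=[\I\ \0].
$$
The cross terms $\X_p^{21}(\X_p^{11})^{-1}\X_p^{12}=\diag(p_i\X_i^{21}(\X_i^{11})^{-1}\X_i^{12})=\X_p^{21}\bar{\X}^{12}$ are then linear in $p_i$, which explains why $\bar{\X}^{12}$ and $\bar{\X}^{21}$ appear in the statement.

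The remaining nonlinear term $(\cdot)^\T(\X_p^{11})^{-1}(\cdot)$ is handled by a Schur complement with the diagonal block $\X_p^{11}$. That block is linear in $p_i$ and positive definite. Applying Lm.~\ref{Lm:SchursComplement} twice, arranged via Co.~\ref{Co:SchurCongruence}, yields a block LMI of the form
$$
\bm{\X_p^{11}&\0&L_{uy}&L_{uw}\\\star&-(\Y^{22})^{-1}&M_{zy}&M_{zw}\\\star&\star&-L_{uy}^\T\X^{12}-\X^{21}L_{uy}-\X_p^{22}&-\X^{21}L_{uw}+M_{zy}^\T\Y^{21}\\\star&\star&\star&M_{zw}^\T\Y^{21}+\Y^{12}M_{zw}+\Y^{11}}\geq0,
$$
with $\X^{12}$ written as $\bar{\X}^{12}$ (using $\X_p^{12}=\X_p^{11}\bar{\X}^{12}$). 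All entries are affine in $(L_{uy},L_{uw},M_{zy},M_{zw},p_i)$, and this is precisely \eqref{Eq:Pr:NSC2Synthesis2}. Feasibility then gives $W\geq0$, hence $\Delta V\leq s_\Y(w,z)$, so $\Sigma$ is $\Y$-dissipative. The key bookkeeping check is that each Schur step is applied with a strictly positive pivot block, which is guaranteed by the two assumptions and $p_i>0$.
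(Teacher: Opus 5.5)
Your argument is correct in substance and is the standard route for this result. The paper imports it from prior work without proof, but your steps (weighted storage $\sum_i p_iV_i$, reduction to $W\geq 0$, the substitution $L_u=\X_p^{11}M_u$, and a Schur complement with the positive pivot $\diag(\X_p^{11},-(\Y^{22})^{-1})$) are the same manipulations the paper uses when proving Co.~\ref{Co:NSC2Synthesis} and Prop.~\ref{Pr:GlobalControllerDesign}.

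The one muddled step is the completion of the square. It is unnecessary, because the Schur complement with pivot $\X_p^{11}$ already removes $(\X_p^{11})^{-1}$. Taken literally, it also yields a different (congruent) LMI, with $(1,3)$ block $L_{uy}+\X_p^{12}$ and $(3,3)$ block $\X_p^{21}\bar{\X}^{12}-\X_p^{22}$, not the one you display. Your displayed LMI instead follows by Schur-complementing $M_u^\T\X_p^{11}M_u=L_u^\T(\X_p^{11})^{-1}L_u$ directly. The cross terms are already linear because the $p_i$ cancel: $\X_p^{21}M_{uy}=\X_p^{21}(\X_p^{11})^{-1}L_{uy}=\bar{\X}^{21}L_{uy}$. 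That identity, $\bar{\X}^{21}\X_p^{11}=\X_p^{21}$, is the real reason $\bar{\X}^{21}$ appears, not $\X_p^{21}\bar{\X}^{12}$. Equivalently, from your completed-square form, subtract $\bar{\X}^{21}$ times block row~1 from block row~3, using $\X_p^{11}\bar{\X}^{12}=\X_p^{12}$.

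Finally, your second block row $[\,-(\Y^{22})^{-1}\ \ M_{zy}\ \ M_{zw}\,]$ matches \eqref{Eq:Pr:NSC2Synthesis2} only after the congruence $\diag(\I,-\Y^{22},\I,\I)$. So ``precisely'' should read ``congruent to''.
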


The above design technique for $M$ can also be used when $M$ is partially known. A specialized result for such a scenario is given in the following corollary. 

\begin{corollary}\label{Co:NSC2Synthesis}
Under Assumptions \ref{As:NegativeDissipativity} and \ref{As:PositiveDissipativity}, the networked system $\Sigma$ with $M_{uw} \triangleq \I, M_{zy} \triangleq \I$ and $M_{zw} \triangleq \0$, is $\Y$-dissipative (from $w$ to $z$) if the remaining interconnection matrix block $M_{uy}$ \eqref{Eq:NSC2Interconnection} is designed via the LMI problem:
\begin{equation}\label{Eq:Co:NSC2Synthesis}
\begin{aligned}
\mbox{Find: }& L_{uy}, \{p_i \in \R: i\in\N_N\}\\
\mbox{Sub. to: }& p_i > 0, \forall i\in\N_N, \mbox{ and }\\
\span \bm{
\X_p^{11} & \0 & L_{uy} & \X_p^{11} \\
\star & -\Y^{22} & -\Y^{22} & \0 \\ 
\star& \star & -\mathcal{H}(\bar{\X}^{21}L_{uy})-\X_p^{22} & -\X_p^{21}+\Y^{21} \\
\star & \star & \star &  \Y^{11}
} \geq 0.
\end{aligned}
\end{equation}
with $M_{uy} \triangleq (\X_p^{11})^{-1} L_{uy}$.
\end{corollary}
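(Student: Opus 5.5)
This corollary is a direct specialization of Pr.~\ref{Pr:NSC2Synthesis}. The referenced LMI \eqref{Eq:Pr:NSC2Synthesis2} is not reproduced above, so I will recall it from \cite{Welikala2025Ax1}. There it is a $4\times 4$ block LMI in the variables $L_{uy}, L_{uw}, M_{zy}, M_{zw}, \{p_i\}$. Its block rows are associated with $\X_p^{11}$, $-\Y^{22}$, the subsystem outputs $y$, and the exogenous inputs $w$. Its entries have the form
\begin{align*}
&\bm{\X_p^{11} & \0 & L_{uy} & L_{uw} \\ \star & -\Y^{22} & -\Y^{22}M_{zy} & -\Y^{22}M_{zw} \\ \star & \star & \Phi_{33} & \Phi_{34} \\ \star & \star & \star & \Phi_{44}} \geq 0,
\end{align*}
where
\begin{align*}
\Phi_{33} &= -\mathcal{H}(\bar{\X}^{21}L_{uy}) - \X_p^{22} + \mathcal{H}(M_{zy}^\T \Y^{21}) \ \ \text{(or the analogous form)},\\
\Phi_{34} &= -\bar{\X}^{21}L_{uw} + M_{zy}^\T\Y^{21} + \dots,\\
\Phi_{44} &= \mathcal{H}(M_{zw}^\T\Y^{21}) + \Y^{11}.
\end{align*}
The plan is to substitute the fixed blocks $M_{uw}=\I$, $M_{zy}=\I$, $M_{zw}=\0$ into this LMI. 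After the substitution the only remaining decision variables are $L_{uy}$ and $\{p_i\}$, and I expect the LMI to collapse exactly to \eqref{Eq:Co:NSC2Synthesis}.

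The first step handles the relation $M_{uw}=(\X_p^{11})^{-1}L_{uw}$. Fixing $M_{uw}=\I$ forces $L_{uw}=\X_p^{11}$. Since $\X_p^{11}=\diag([p_i\X_i^{11}]_{i\in\N_N})$ is linear in the $p_i$, this substitution keeps the problem an LMI. It also explains the $(1,4)$ block $\X_p^{11}$ in \eqref{Eq:Co:NSC2Synthesis}. Next, substituting $M_{zy}=\I$ yields the $(2,3)$ block $-\Y^{22}$, and substituting $M_{zw}=\0$ yields the $(2,4)$ block $\0$ and the $(4,4)$ block $\Y^{11}$.

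The one step that needs checking is the $(3,4)$ block, where the term $\bar{\X}^{21}L_{uw}$ becomes $\bar{\X}^{21}\X_p^{11}$. Because both matrices are block diagonal, this product is
$$\diag([\X_i^{21}(\X_i^{11})^{-1}p_i\X_i^{11}]_{i\in\N_N}) = \diag([p_i\X_i^{21}]_{i\in\N_N}) = \X_p^{21}.$$
Together with the $M_{zy}^\T\Y^{21}=\Y^{21}$ term, this produces the stated block $-\X_p^{21}+\Y^{21}$. I will also confirm that the $(3,3)$ block matches once the $\Y^{21}$-dependent cross terms are tracked under $M_{zy}=\I$. If the original LMI places those terms differently, I would instead rederive the bound directly, as follows.

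The direct route starts from the subsystem storage functions. Take the total storage $V=\sum_i p_iV_i$; by $\X_i$-dissipativity its increment satisfies
$$\Delta V \leq \sum_i p_i s_i(u_i,y_i) = \bm{u\\y}^\T \X_p \bm{u\\y}.$$
Next, substitute the interconnection $u=M_{uy}y+w$ and $z=y$, and require the resulting quadratic form to be bounded by the network supply, i.e.
$$\bm{u\\y}^\T \X_p \bm{u\\y} \leq \bm{w\\z}^\T \Y \bm{w\\z}.$$
This condition is a quadratic inequality in $(y,w)$. Using Assumptions~\ref{As:NegativeDissipativity} and~\ref{As:PositiveDissipativity}, I would linearize it via Schur complements, which is where $\X_p^{11}>0$ and $-\Y^{22}>0$ enter, with the multiplier change of variables $L_{uy}=\X_p^{11}M_{uy}$. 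Applying Cor.~\ref{Co:SchurCongruence} then gives exactly the displayed LMI. I expect the main obstacle to be careful bookkeeping of the cross terms in this expansion, and in particular the cancellation that turns $\bar{\X}^{21}\X_p^{11}$ into $\X_p^{21}$.
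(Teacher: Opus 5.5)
Your main route is the paper's proof. You substitute $M_{uw}=\I$, $M_{zy}=\I$, $M_{zw}=\0$ into \eqref{Eq:Pr:NSC2Synthesis2} and simplify using $L_{uw}=\X_p^{11}$ and $\bar{\X}^{21}\X_p^{11}=\X_p^{21}$. Your treatment of the $(1,4)$, $(2,3)$, $(2,4)$, $(3,4)$ and $(4,4)$ blocks is correct.

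One correction concerns the parent LMI, which the paper does display as \eqref{Eq:Pr:NSC2Synthesis2}. Its $(3,3)$ block is $-\mathcal{H}(\bar{\X}^{21}L_{uy})-\X_p^{22}$, with no $\mathcal{H}(M_{zy}^\T\Y^{21})$ term. Its $(3,4)$ block is exactly $-\bar{\X}^{21}L_{uw}+M_{zy}^\T\Y^{21}$, with nothing further. With your guessed $\Phi_{33}$, setting $M_{zy}=\I$ would leave a stray $\mathcal{H}(\Y^{21})$, and the stated corollary would not follow, so the check you deferred is essential. Such a term cannot arise, however. The cross part $2w^\T\Y^{12}z$ of the network supply rate only couples $w$ with $y$, so it lands in the $(3,4)$ block. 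The only $\Y$-dependence of the $(y,y)$ part is $M_{zy}^\T\Y^{22}M_{zy}$, and that is carried by the Schur complement through block row $2$.

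Your fallback is also valid: storage $\sum_i p_iV_i$, the pointwise inequality $\sum_i p_i s_i(u_i,y_i)\le s_{\Y}(w,z)$, then Schur complements on $\X_p^{11}>0$ and $-\Y^{22}>0$. It amounts to re-proving Prop. \ref{Pr:NSC2Synthesis} in this special case. On that route, the cancellation you actually need is $(\X_p^{11})^{-1}\X_p^{12}=\bar{\X}^{12}$, with the $p_i$ cancelling blockwise, which turns $-\mathcal{H}(M_{uy}^\T\X_p^{12})$ into $-\mathcal{H}(\bar{\X}^{21}L_{uy})$. The $(3,4)$ block then comes out directly as $-\X_p^{21}M_{uw}+\Y^{21}=-\X_p^{21}+\Y^{21}$, with no cancellation required.
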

\begin{proof}
The proof follows substituting the known interconnection matrix blocks: $M_{uw} \triangleq \I$, $M_{zy} \triangleq \I$, and $M_{zw} \triangleq \0$ directly in \eqref{Eq:Pr:NSC2Synthesis2}, and then simplifying it using the relationships $M_{uw} \triangleq  (\X_p^{11})^{-1} L_{uw}$ and $\bar{\X}^{21}\X_p^{11} = \X_p^{21}$.
\end{proof}

\begin{figure*}[!t]
\begin{equation}\label{Eq:Pr:NSC2Synthesis2}
\bm{
\X_p^{11} & \0 & L_{uy} & L_{uw} \\
\star & -\Y^{22} & -\Y^{22}M_{zy} & -\Y^{22} M_{zw}\\ 
\star& \star & -\mathcal{H}(\bar{\X}^{21}L_{uy})-\X_p^{22} & -\bar{\X}^{21}L_{uw}+M_{zy}^\T\Y^{21} \\
\star & \star & \star &  \mathcal{H}(\Y^{12}M_{zw}) + \Y^{11}
} \geq  0
\end{equation}
\hrulefill
\end{figure*}

\section{Model-Based Hierarchical Design of Networked Systems}
\label{Sec:Model-Based}
In this section, we consider a special case of networked systems discussed in Sec. \ref{SubSec:NetworkedSystem} motivated by their occurrence across diverse applications, such as supply chains \cite{Welikala2025Ax1}, vehicular platoons \cite{WelikalaP72023}, and microgrids \cite{Najafirad2025P1}. Figure \ref{Fig:LinearNetworkedSystem} shows such a networked system, comprising subsystems that follow discrete-time LTI dynamics -- as opposed to generic non-linear dynamics \eqref{Eq:nonlinearSubsystemDynamics}, and hence referred to as a \emph{linear networked system}. 
The model-based design proceeds hierarchically: local controllers first enforce favorable subsystem-level $\X_i$-dissipativity certificates, which are then used to co-design the distributed global controller and the interconnection topology.

\begin{figure}[!t]
    \centering
    \includegraphics[width=2in]{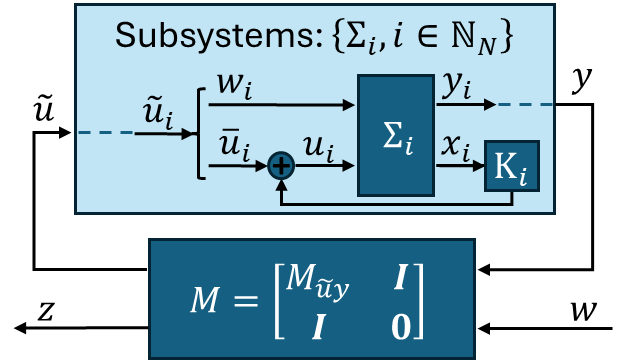}
	\caption{A linear networked system $\Sigma: w \rightarrow z$.}
	\label{Fig:LinearNetworkedSystem}
\end{figure}

\subsection{Local Controller Design}\label{SubSec:local_controller_design}
Consider a linear networked system (as shown in Fig. \ref{Fig:LinearNetworkedSystem}) where each subsystem $\Sigma_i, i\in\N_N$ follows the dynamics  
\begin{equation}\label{Eq:LinearSubsystemDynamics}
\begin{aligned}
x_i(t+1) =&\ A_ix_i(t) + B_iu_i(t) + w_i(t),\\
y_i(t) =&\ C_ix_i(t). 
\end{aligned}
\end{equation}
In \eqref{Eq:LinearSubsystemDynamics}, $w_i(t)$ represents the local disturbance component in $w \triangleq [w_i^\T]_{i\in\N_N}^\T$ (i.e., the exogenous input affecting the networked system). At each subsystem $\Sigma_i, i\in\N_N$, we also include a local state-feedback controller of the form: 
\begin{equation}\label{Eq:SubsystemController}
u_i(t) = K_ix_i(t) + \bar{u}_i(t),    
\end{equation}
where $\bar{u}_i(t)$ is the input generated by the distributed global controller to be designed in the next subsection. Applying \eqref{Eq:SubsystemController} in \eqref{Eq:LinearSubsystemDynamics}, we get the closed-loop subsystem dynamics
\begin{equation}\label{Eq:ClosedLoopSubsystemDynamics}
\begin{aligned}
x_i(t+1) =&\ (A_i+B_i K_i)x_i(t) + \tilde{u}_i(t),\\
y_i(t) =&\ C_ix_i(t). 
\end{aligned}
\end{equation}
where $\tilde{u}_i(t) \triangleq B_i \bar{u}_i(t) + w_i(t)$. 

The following proposition shows how the local controller $K_i$ in \eqref{Eq:SubsystemController} can be designed so that the closed-loop subsystem \eqref{Eq:ClosedLoopSubsystemDynamics} is $\X_i$-dissipative from $\tilde{u}_i$ to $y_i$. The resulting local dissipativity certificate is constrained to be compatible with the subsequent network-level design. In particular, $\X_i^{11}>0$ is enforced as required by As. \ref{As:PositiveDissipativity}, while $\X_i^{22}<0$ is imposed to be compatible with dissipativity classes used in Co. \ref{Co:LTISystemXDisspativation}.

\begin{proposition}\label{Pr:LocalControllerDesign}
At each subsystem $\Sigma_i, i\in\N_N$, the local controller $K_i$ in \eqref{Eq:SubsystemController} can be designed to enforce the closed-loop $\X_i$-dissipativity of \eqref{Eq:ClosedLoopSubsystemDynamics} from $\tilde{u}_i$ to $y_i$, with $\X_i^{11}>0$ and $\X_i^{22}<0$, using the LMI problem:
\begin{equation}\label{Eq:Pr:LocalControllerDesign}
\begin{aligned}
\mbox{Find: } &\bar{K}_i, P_i, \X_i\\
\mbox{Sub. to: } &P_i > 0,\ \X_i^{11}>0,\ \X_i^{22}<0,\  \mbox{ and }\\
\span \bm{ 
(-\X_i^{22})^{-1} & \0 & C_iP_i & \0 \\
\star & P_i & A_iP_i+B_i\bar{K}_i & \I \\
\star & \star & P_i  & P_iC_i^\T \X_i^{21}\\
\star & \star & \star & \X^{11}_i } \geq 0, 
\end{aligned}
\end{equation}
with $K_i=\bar{K}_iP_i^{-1}$.
\end{proposition}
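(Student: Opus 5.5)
The plan is to obtain this result as a direct specialization of Co.~\ref{Co:LTISystemXDisspativation} to each closed-loop subsystem \eqref{Eq:ClosedLoopSubsystemDynamics}, with the single addition that the dissipativity matrix $\X_i$ is now a decision variable. First I will identify \eqref{Eq:ClosedLoopSubsystemDynamics} with \eqref{Eq:Co:LTISystemXDisspativation1} by setting $A=A_i$, $B=B_i$, $C=C_i$, $K=K_i$ and external input $\tilde{u}=\tilde{u}_i = B_i\bar{u}_i + w_i$. The corollary only treats $\tilde{u}$ as an arbitrary external input, so including the disturbance $w_i$ in $\tilde{u}_i$ needs no extra argument. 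Under $\X_i^{22}<0$, the corollary then says that the closed-loop subsystem is $\X_i$-dissipative from $\tilde{u}_i$ to $y_i$ if and only if there exist $P_i>0$ and $\bar{K}_i$ satisfying exactly the displayed matrix inequality in \eqref{Eq:Pr:LocalControllerDesign}, with $K_i=\bar{K}_iP_i^{-1}$. Hence any feasible $(\bar K_i,P_i,\X_i)$ gives the claimed controller and certificate. The extra constraint $\X_i^{11}>0$ is simply imposed so that As.~\ref{As:PositiveDissipativity} holds for the subsequent network-level design.

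The main obstacle is justifying that \eqref{Eq:Pr:LocalControllerDesign} is an LMI once $\X_i$ is a variable, since the block $(-\X_i^{22})^{-1}$ and the product $P_iC_i^\T\X_i^{21}$ are not jointly linear in $(P_i,\X_i)$. I would handle this by a reparametrization, applied before solving.
\begin{itemize}
\item Take $\tilde{\X}_i^{22}\triangleq(-\X_i^{22})^{-1}>0$ as the decision variable in place of $\X_i^{22}$. The constraint $\X_i^{22}<0$ is equivalent to $\tilde{\X}_i^{22}>0$, and the $(1,1)$ block becomes linear.
\item Take $\tilde{\X}_i^{12}\triangleq\X_i^{12}C_iP_i$ as a new variable, so that $P_iC_i^\T\X_i^{21}=(\tilde{\X}_i^{12})^\T$ is linear.
\end{itemize}
The variables $\X_i^{11}$, $\bar K_i$ and $P_i$ already appear linearly.

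Once $\tilde{\X}_i^{12}$ is found, $\X_i^{12}$ must be recovered from it. When $C_i$ has full row rank, this can be done via a right inverse of $C_iP_i$. Otherwise I would fix $\X_i^{12}$ (e.g.\ passivity-type $\tfrac12\I$, or $\0$ for L2G) so that only $\X_i^{11},\X_i^{22}$ are free, in line with Rm.~\ref{Rm:X-DissipativityVersions}. Alternatively, one can apply a Schur complement (Lm.~\ref{Lm:SchursComplement}) / Co.~\ref{Co:SchurCongruence} argument to move the $(-\X_i^{22})^{-1}$ block into an equivalent linear form.

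Finally, I will note that $K_i=\bar K_iP_i^{-1}$ is well defined because $P_i>0$. Combining the steps, feasibility of \eqref{Eq:Pr:LocalControllerDesign} yields $K_i$ such that \eqref{Eq:ClosedLoopSubsystemDynamics} is $\X_i$-dissipative with $\X_i^{11}>0$ and $\X_i^{22}<0$, which proves the proposition.
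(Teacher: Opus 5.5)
Your core argument is exactly the paper's one-line proof, and it is correct: you identify \eqref{Eq:ClosedLoopSubsystemDynamics} with \eqref{Eq:Co:LTISystemXDisspativation1}, treat $\tilde u_i=B_i\bar u_i+w_i$ as the external input, and invoke Co.~\ref{Co:LTISystemXDisspativation}.

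The convexification discussion goes beyond what the paper proves; the paper defers this issue to Rm.~\ref{Rm:LocalXiOptimization} and fixes the coupling blocks via As.~\ref{As:ScalarDissipativityMatrices}. Your substitution $\tilde\X_i^{22}=(-\X_i^{22})^{-1}$ is fine. However, your recovery condition for $\X_i^{12}$ is reversed. Solving $\X_i^{12}C_iP_i=\tilde\X_i^{12}$ for an arbitrary $\tilde\X_i^{12}$ requires $C_i$ to have full \emph{column} rank; then you can take $\X_i^{12}=\tilde\X_i^{12}L$ with $L$ a left inverse of $C_iP_i$. A right inverse works only when $C_i$ is square. Similarly, taking a Schur complement on the $(-\X_i^{22})^{-1}$ block produces the non-linear term $P_iC_i^\T\X_i^{22}C_iP_i$, so it does not linearize the problem.
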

\begin{proof}
Comparing \eqref{Eq:ClosedLoopSubsystemDynamics} with \eqref{Eq:Co:LTISystemXDisspativation1}, it is clear that Co. \ref{Co:LTISystemXDisspativation} is applicable, which leads to the given LMI problem \eqref{Eq:Pr:LocalControllerDesign}. 
\end{proof}

\begin{remark}\label{Rm:LocalXiOptimization}
The constraint $\X_i^{22}<0$ is consistent with common dissipativity specifications such as finite-gain $L_2$ stability and output-feedback passivity, as discussed in Rm. \ref{Rm:As:NegativeDissipativity}. The LMI in Prop. \ref{Pr:LocalControllerDesign} can be used either to certify a prescribed $\X_i$ or to optimize selected blocks of $\X_i$ when the resulting constraints remain convex. In particular, if one wishes to optimize entries of $\X_i^{22}$ appearing through $(-\X_i^{22})^{-1}$, an extra change of variables or a fixed parametrization of $\X_i^{22}$ should be used to preserve the LMI structure.
\end{remark}

\begin{remark}
If each subsystem $\Sigma_i, i\in\N_N$ followed analogous continuous-time dynamics, then the main LMI condition in \eqref{Eq:Pr:LocalControllerDesign} has to be replaced with
$$
\bm{ 
(-\X_i^{22})^{-1} & C_iP_i & \0 \\
\star & -\mathcal{H}(A_iP_i+B_i\bar{K}_i) & -\I + P_iC_i^\T \X_i^{21} \\
\star & \star & \X^{11}_i } \geq 0.
$$
\end{remark}

\subsection{Distributed Global Controller Design}

At each subsystem $\Sigma_i, i\in\N_N$, we consider a distributed global controller $\bar{u}_i(t)$ (see \eqref{Eq:SubsystemController}) of the form:
\begin{equation}\label{Eq:DistributedGlobalController}
\bar{u}_i(t) = \sum_{j \in \N_N} K_{ij} y_j(t) \iff \bar{u}(t) = K y(t) 
\end{equation}
where the equation in the RHS is from vectorizing the equation in the LHS, with the notations $\bar{u} \triangleq [\bar{u}_i^\T]_{i\in\N_N}^\T$, $y \triangleq [y_i^\T]_{i\in\N_N}^\T$ and $K\triangleq [K_{ij}]_{i,j\in\N_N}$. 
Note that, in \eqref{Eq:DistributedGlobalController}, $K_{ij}=\0$ if the output information from the subsystem $\Sigma_j$ is not required at the subsystem $\Sigma_i$ (implying no communication link is necessary from $\Sigma_j$ to $\Sigma_i$). Therefore, while designing the distributed global controller $K$ \eqref{Eq:DistributedGlobalController}, we can also optimize the underlying communication topology. Consequently, this distributed global controller design stage is referred to as a communication and control co-design stage \cite{WelikalaP72023}. 

At this stage, the local controllers \(K_i\) and the corresponding dissipativity certificates \(\X_i\) obtained from the local design are treated as fixed. For this co-design task, we can apply Prop. \ref{Pr:NSC2Synthesis} exploiting the subsystem dissipativity properties enforced by the local controller design \eqref{Eq:Pr:LocalControllerDesign}. However, we first need to identify the underlying interconnection matrix structure of the networked system under consideration. To this end, recall the external input affecting the closed-loop subsystems, i.e., $\tilde{u}_i(t)$ in \eqref{Eq:ClosedLoopSubsystemDynamics}, that can be vectorized and simplified (using \eqref{Eq:DistributedGlobalController}) respectively as
\begin{equation} \label{Eq:ExternalInputOnSubsystems}
\begin{aligned}
\tilde{u}_i(t) \triangleq B_i \bar{u}_i(t) + w_i(t) 
&\iff \tilde{u}(t) = B\bar{u}(t) + w(t) \\ 
&\iff \tilde{u}(t) = BKy(t) + w(t),
\end{aligned}
\end{equation}
where $\tilde{u} \triangleq [\tilde{u}_i^\T]_{i\in\N_N}^\T$ and $B \triangleq \diag([B_i]_{i\in\N_N})$. On the other hand, let us define $z_i(t) \triangleq y_i(t)$ as the local performance output component in $z \triangleq [z_i^\T]^\T_{i\in\N_N}$. Therefore, the performance output of the networked system is $z(t) \triangleq y(t)$. Combining this result with \eqref{Eq:ExternalInputOnSubsystems}, we get the underlying interconnection matrix structure through the interconnection relationship  
\begin{equation}\label{Eq:InterconnectionMatrixLinearNetworkedSystem}
\bm{\tilde{u}\\z} = \bm{M_{uy} & M_{uw} \\ M_{zy} & M_{zw}} \bm{y\\w} \equiv \bm{BK & \I \\ \I & \0}\bm{y\\w}. 
\end{equation}

Based on \eqref{Eq:InterconnectionMatrixLinearNetworkedSystem}, it is clear that we can use Co. \ref{Co:NSC2Synthesis} (instead of Prop. \ref{Pr:NSC2Synthesis}) for this co-design task. The details are summarized in the following proposition. 

\begin{proposition}\label{Pr:GlobalControllerDesign}
Suppose that the local controllers \eqref{Eq:SubsystemController} have been designed (e.g., via \eqref{Eq:Pr:LocalControllerDesign}) so that each closed-loop subsystem \eqref{Eq:ClosedLoopSubsystemDynamics} is $\X_i$-dissipative (from $\tilde{u}_i$ to $y_i$, with $\X_i^{11}>0$, i.e., As. \ref{As:PositiveDissipativity}), and suppose that the desired network-level dissipativity certificate satisfies $\Y^{22}<0$ (i.e., As. \ref{As:NegativeDissipativity}). Then, the distributed global controller and the underlying communication topology  (jointly captured by $K$ in \eqref{Eq:DistributedGlobalController}) can be co-designed to enforce/optimize the $\Y$-dissipativity (from $w$ to $z$) of the closed-loop linear networked system using the LMI problem:  
\begin{equation}\label{Eq:Pr:GlobalControllerDesign}
\begin{aligned}
\min_{\substack{\bar{K}, \Y, \\ \{p_i\in\R: i\in\N_N\}}}&\ J \triangleq \Vert \bar{K} \Vert_1 - \phi([p_i]_{i\in\N_N}) + \psi(\Y)\\
\mbox{Sub. to:}&\ p_i > 0, \forall i\in\N_N,\ \Y^{22}<0,\ \mbox{ and }\\
\span \bm{
\X_p^{11} & \0 & \X^{11} B \bar{K} & \X_p^{11} \\
\star & -\Y^{22} & -\Y^{22} & \0 \\ 
\star& \star & -\mathcal{H}(\X^{21} B \bar{K})-\X_p^{22} & -\X_p^{21}+\Y^{21} \\
\star & \star & \star &  \Y^{11}
} \geq 0,
\end{aligned}    
\end{equation}
where $\X_p^{kl} \triangleq \diag([p_i\X_i^{kl}]_{i\in\N_N})$, $\X^{kl} \triangleq \diag([\X^{kl}_i]_{i\in\N_N})$, $\forall k,l\in\N_2$, $\Y \triangleq [\Y^{kl}]_{k,l\in\N_2}$, and $\bar{K} \triangleq [\bar{K}_{ij}]_{i,j\in\N_N}$ with $K \triangleq [p_i^{-1} \bar{K}_{ij}]_{i,j\in\N_N}$ (functions $J$, $\phi$, and $\psi$ are interpreted as in Rm. \ref{Rm:GlobalObjective}).
\end{proposition}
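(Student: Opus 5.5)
The plan is to reduce the statement to Co.~\ref{Co:NSC2Synthesis} through the identification \eqref{Eq:InterconnectionMatrixLinearNetworkedSystem} and a change of variables. First, I would check that the closed-loop linear networked system fits the framework of Sec.~\ref{SubSec:NetworkedSystem}. The subsystems $\Sigma_i$ are now the closed-loop dynamics \eqref{Eq:ClosedLoopSubsystemDynamics}, with input $\tilde{u}_i$ and output $y_i$. Each is $\X_i$-dissipative with $\X_i^{11}>0$ by hypothesis, so As.~\ref{As:PositiveDissipativity} holds, and As.~\ref{As:NegativeDissipativity} holds through the constraint $\Y^{22}<0$. By \eqref{Eq:InterconnectionMatrixLinearNetworkedSystem}, the interconnection matrix has $M_{uw}=\I$, $M_{zy}=\I$, $M_{zw}=\0$ and $M_{uy}=BK$. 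This is exactly the partially known setting of Co.~\ref{Co:NSC2Synthesis}. Hence $\Y$-dissipativity from $w$ to $z$ follows once we find $p_i>0$ and $L_{uy}$ satisfying \eqref{Eq:Co:NSC2Synthesis} with $BK=(\X_p^{11})^{-1}L_{uy}$, that is, $L_{uy}=\X_p^{11}BK$.

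The key step is the change of variables that turns $L_{uy}$ into the design variable $\bar{K}$ without destroying linearity. Since $\X_p^{11}=\diag([p_i\X_i^{11}])$ and $B=\diag([B_i])$ are block diagonal, the $(i,j)$ block of $\X_p^{11}BK$ is $p_i\X_i^{11}B_iK_{ij}$. Defining $\bar{K}_{ij}\triangleq p_iK_{ij}$ (so $K_{ij}=p_i^{-1}\bar{K}_{ij}$, as in the statement) gives $L_{uy}=\X^{11}B\bar{K}$, with $\X^{11}=\diag([\X_i^{11}])$ fixed. Similarly, $\bar{\X}^{21}=\diag([\X_i^{21}(\X_i^{11})^{-1}])$, so $\bar{\X}^{21}L_{uy}=\X^{21}B\bar{K}$ blockwise, with $\X^{21}=\diag([\X_i^{21}])$. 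Substituting these identities into \eqref{Eq:Co:NSC2Synthesis} yields exactly the matrix inequality in \eqref{Eq:Pr:GlobalControllerDesign}. This inequality is affine in $(\bar{K},\{p_i\},\Y)$ because the local certificates $\X_i$ and the matrices $B_i$ are fixed at this stage.

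The step needing the most care is the one I would expect to be the main obstacle. One must confirm that the commutation $\diag([p_i\X_i^{11}])BK=\X^{11}B\bar{K}$ actually holds, which it does because left multiplication by a block-diagonal matrix scales block row $i$. One must also confirm that the inverse map is well defined, which holds because $p_i>0$. Finally, the sparsity pattern of $\bar{K}$ must coincide with that of $K$, which is what justifies using $\Vert\bar{K}\Vert_1$ as a topology-sparsity surrogate. For the objective, $\Vert\bar{K}\Vert_1$ is convex, and $-\phi$ and $\psi$ are assumed convex per Rm.~\ref{Rm:GlobalObjective}. The feasible set is defined by LMIs, so the problem is convex. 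Any feasible point, in particular an optimal one, certifies $\Y$-dissipativity of the closed-loop networked system via Co.~\ref{Co:NSC2Synthesis}, which completes the argument.
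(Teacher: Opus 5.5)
Your proposal is correct and follows essentially the same route as the paper: identify the interconnection \eqref{Eq:InterconnectionMatrixLinearNetworkedSystem} with $M_{uy}=BK$, invoke Co.~\ref{Co:NSC2Synthesis}, substitute $L_{uy}=\X_p^{11}BK=\X^{11}B\bar{K}$ via the change of variables $\bar{K}_{ij}=p_iK_{ij}$, and use $\bar{\X}^{21}\X^{11}=\X^{21}$ to reach the stated LMI. Your extra remarks go slightly beyond what the paper writes out, and they are sound. They cover invertibility of the change of variables (since $p_i>0$), the preserved sparsity pattern of $\bar{K}$, and convexity of $J$ per Rm.~\ref{Rm:GlobalObjective}.
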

\begin{proof}
We start by applying Co. \ref{Co:NSC2Synthesis} with $M_{uy} = BK$ from \eqref{Eq:InterconnectionMatrixLinearNetworkedSystem}. According to Co. \ref{Co:NSC2Synthesis}, $M_{uy} = (\X_p^{11})^{-1} L_{uy}$. Therefore, we have to replace the $L_{uy}$ term appearing in the main LMI condition in Co. \ref{Co:NSC2Synthesis} using $L_{uy} = \X_p^{11} BK$. Here, since the RHS is bilinear in decision variables $\{p_i\in\R: i\in\N_N\}$ and $K$, we use a change of variables: $\bar{K} \triangleq [\bar{K}_{ij}]_{i,j\in\N_N} \equiv [p_i K_{ij}]_{i,j\in\N_N}$. Consequently, we get: 
\begin{equation*}
\begin{aligned}
L_{uy} =&\ \X_p^{11} B K \\
&\iff L_{uy}^{ij} = p_i \X_i^{11} B_i K_{ij} = \X_i^{11} B_i \bar{K}_{ij}, \forall i,j \in \N_N \\ 
&\iff L_{uy} = \X^{11} B \bar{K}.        
\end{aligned}
\end{equation*}
Using the above derived substitution for $L_{uy}$ with the fact 
\begin{equation*}
\begin{aligned}
\bar{\X}^{21}\X^{11} =&\ \diag([\X_i^{21}(\X_i^{11})^{-1}]_{i\in\N_N}) \diag([\X_i^{11}]_{i\in\N_N})\\
=&\ \diag([\X_i^{21}]_{i\in\N_N}) = \X^{21}
\end{aligned}
\end{equation*}
we get the LMI condition in \eqref{Eq:Pr:GlobalControllerDesign}.  
\end{proof}

\begin{remark}\label{Rm:GlobalObjective}
Unlike in Co. \ref{Co:NSC2Synthesis}, Prop. \ref{Pr:GlobalControllerDesign} optimizes an objective function $J$ in addition to enforcing feasibility. The term $\Vert \bar K\Vert_1$ promotes sparsity of $\bar K$, and therefore sparsity of the communication topology induced by $K$. The term $-\phi([p_i]_{i\in\N_N})$ can be used to encourage larger scaling variables $p_i$, which tends to reduce the recovered gains $K_{ij}=p_i^{-1}\bar K_{ij}$. For example, one may choose $\phi(p)=\alpha\sum_i p_i$ with $\alpha>0$, together with upper bounds on $p_i$ if needed. The term $\psi(\Y)$ is used to optimize the desired network-level performance certificate, such as passivity indices or an $L_2$-gain bound; for instance, $\psi(\Y)=\beta\gamma^2$ when $\Y$ is parameterized by an $L_2$-gain level $\gamma$. The functions $\phi$ and $\psi$, together with any additional bounds or parametrizations, should be selected so that $J$ is convex whenever a convex optimization problem is desired.
\end{remark}


\subsection{Local Controller Design Revisit}

The global co-design problem \eqref{Eq:Pr:GlobalControllerDesign}, particularly its feasibility and effectiveness, is clearly dependent on the subsystem dissipativity properties enforced by the local controller design problem \eqref{Eq:Pr:LocalControllerDesign}. Therefore, we propose to influence the local controller design \eqref{Eq:Pr:LocalControllerDesign} to achieve favorable subsystem dissipativity properties that can potentially lead to feasible and effective global co-designs. The main idea is to identify a necessary condition for the LMI conditions in the global co-design problem \eqref{Eq:Pr:GlobalControllerDesign}, and then to decompose it into a form that can be included in the local controller design problems \eqref{Eq:Pr:LocalControllerDesign} at different subsystems.

Using Lm. \ref{Lm:BEW}, it is easy to see that the main LMI in \eqref{Eq:Pr:GlobalControllerDesign} is equivalent to $[W_{ij}]_{i,j\in\N_N} \geq 0$, where $W_{ij}$ is as defined in \eqref{Eq:NecessaryCondStep1}. Now, a set of necessary conditions for $[W_{ij}]_{i,j\in\N_N} \geq 0$ can be identified as $\{W_{ii} \geq 0, \forall i\in\N_N\}$. Dividing these necessary conditions by the respective scalars $\{p_i: i\in\N_N\}$ and defining $\bar{\Y}^{kl}_{ii} \triangleq p_i^{-1}\Y^{kl}_{ii}, \forall k,l\in\N_2, i\in\N_N$, we get the necessary conditions as $p_i^{-1}W_{ii} \geq 0$, i.e., \eqref{Eq:NecessaryCondStep2}, $\forall i\in\N_N$.

\begin{figure*}
\begin{equation}\label{Eq:NecessaryCondStep1}
W_{ij} \triangleq 
\bm{
p_i\X_i^{11}\e_{ij} & \0 & \X_i^{11} B_i \bar{K}_{ij} & p_i\X_i^{11}\e_{ij} \\
\star & -\Y^{22}_{ij} & -\Y^{22}_{ij} & \0 \\ 
\star& \star & -\X_i^{21}B_i\bar K_{ij}-(\X_j^{21}B_j\bar K_{ji})^\T-p_i\X_i^{22}\e_{ij} & -p_i\X_i^{21}\e_{ij}+\Y^{21}_{ij} \\
\star & \star & \star &  \Y^{11}_{ij}
}
\end{equation}
\begin{equation}\label{Eq:NecessaryCondStep2}
\bm{
\X_i^{11} & \0 & p_i^{-1}\X_i^{11} B_i \bar{K}_{ii} & \X_i^{11} \\
\star & -p_i^{-1}\Y^{22}_{ii} & -p_i^{-1}\Y^{22}_{ii} & \0 \\ 
\star& \star & -\mathcal{H}(p_i^{-1}\X^{21}_i B_i \bar{K}_{ii})-\X_i^{22} & -\X_i^{21}+ p_i^{-1} \Y^{21}_{ii} \\
\star & \star & \star &  p_i^{-1}\Y^{11}_{ii}
} \geq 0  
\end{equation} 
\begin{equation}\label{Eq:NecessaryCondStep3}
\bm{
(-\x_i^{22})\x_i^{11}\I & \0 & p_i^{-1}\x_i^{11}B_i \bar{K}_{ii} & (-\x_i^{22})\x_i^{11}\I \\
\star & -p_i^{-1}(-\x_i^{22})^{-1}\y^{22}_{ii}\I & -p_i^{-1}(-\x_i^{22})^{-1}\y^{22}_{ii}\I & \0 \\ 
\star& \star & -\mathcal{H}(p_i^{-1}(-\x_i^{22})^{-1}\x^{21}_i B_i \bar{K}_{ii}) + \I & -\x_i^{21}\I+ p_i^{-1}\y^{21}_{ii}\I \\
\star & \star & \star &  p_i^{-1}(-\x_i^{22})\y^{11}_{ii}\I
} \geq 0 
\end{equation}
\hrulefill
\end{figure*}

To further simplify these necessary conditions, we make the following technical assumption on the dissipativity properties of the desired subsystem and the networked system.

\begin{assumption}\label{As:ScalarDissipativityMatrices}
The desired subsystem and networked system dissipativity properties, i.e., $\{\X_i: i\in\N_N\}$ and $\Y\triangleq[[\Y_{ij}^{kl}]_{i,j\in\N_N}]_{k,l\in\N_2}$ matrices, respectively, have inner blocks that are scalar matrices such that $\X_i^{kl} = \x_i^{kl}\I$ and $\Y_{ii}^{kl} = \y_{ii}^{kl}\I$, for any $k,l \in \N_2, i\in\N_N$. Further, the scalars corresponding to input-output couplings, i.e., $\x_i^{kl}$ and $\y_{ii}^{kl}$ for any $k \neq l$, are known. 
\end{assumption}

\begin{remark}
Assumption \ref{As:ScalarDissipativityMatrices} holds for the first three dissipativity properties given in Rm. \ref{Rm:X-DissipativityVersions}. This includes the widely used case where each subsystem $\Sigma_i, i\in\N_N$ is enforced to be IF-OFP($\nu_i,\rho_i$) while the closed-loop networked system is enforced to be L2G($\gamma$) (or any similar combination). If the desired subsystem and networked system dissipativity properties violate this assumption, an alternative set of necessary conditions can be derived from \eqref{Eq:NecessaryCondStep2} by following similar steps, as detailed below. 
\end{remark}
Under As. \ref{As:ScalarDissipativityMatrices}, applying Lm. \ref{Lm:Congruence} to \eqref{Eq:NecessaryCondStep2} with \(P\triangleq\diag((-\x_i^{22})\I,\I,\I,(-\x_i^{22})\I)\), and then multiplying the resulting inequality by the positive scalar \((-\x_i^{22})^{-1}\), gives \eqref{Eq:NecessaryCondStep3}.
Next, via the change of variables: 
\begin{equation}\label{Eq:ChangeOfVariables}
\begin{gathered}
\bar{\y}_{ii}^{11} \triangleq p_i^{-1}(-\x_i^{22})\y^{11}_{ii}, \quad \quad
\bar{\y}_{ii}^{12} \triangleq \bar{\y}_{ii}^{21} \triangleq p_i^{-1}\y^{21}_{ii},\\
\bar{\y}_{ii}^{22} \triangleq p_i^{-1}(-\x_i^{22})^{-1}\y^{22}_{ii}, \quad \quad
\tilde{K}_{ii} \triangleq p_i^{-1}\x_i^{11} \bar{K}_{ii}, \\
\bar{\x}_i^{11} \triangleq (-\x_i^{22})\x_i^{11},
\end{gathered}
\end{equation}
\eqref{Eq:NecessaryCondStep3} can be expressed as 
\begin{equation}\label{Eq:NecessaryConditions}
\begin{aligned}
\bm{
\bar{\x}_i^{11}\I & \0 & B_i \tilde{K}_{ii} & \bar{\x}_i^{11}\I \\
\star & -\bar{\y}_{ii}^{22}\I & -\bar{\y}_{ii}^{22}\I & \0 \\ 
\star& \star & -\mathcal{H}(B_i\hat{K}_{ii}) + \I & -\x_i^{21}\I+ \bar{\y}_{ii}^{21}\I \\
\star & \star & \star &  \bar{\y}_{ii}^{11}\I
} \\ \geq 0 
\end{aligned}
\end{equation}
This condition is obtained by relaxing the nonlinear equality $\hat{K}_{ii}=\x_i^{21}(\bar{\x}_i^{11})^{-1}\tilde K_{ii}$, which is equivalent to 
$\hat{K}_{ii}=\theta_i\tilde K_{ii}$ with $\theta_i \triangleq \x^{21}_i(\bar{\x}_i^{11})^{-1}$. Therefore, the relaxed LMIs \eqref{Eq:NecessaryConditions}, for all $i\in\N_N$, provide a tractable set of necessary conditions associated with the global co-design problem \eqref{Eq:Pr:GlobalControllerDesign}.

Note that the relaxed condition \eqref{Eq:NecessaryConditions} is an LMI in the local variables
$\bar{\x}_i^{11}$ (a transformed subsystem dissipativity variable), $\bar{\y}_{ii}^{11},\bar{\y}_{ii}^{21},\bar{\y}_{ii}^{22}$ (transformed diagonal blocks of the desired network-level dissipativity matrix), and $\tilde K_{ii},\hat K_{ii}$ (transformed local components of the global controller). The original subsystem parameter $\x_i^{11}$ is not optimized directly in \eqref{Eq:NecessaryConditions}; instead, it will be recovered after the local design using $\x_i^{11}=(-\x_i^{22})^{-1}\bar{\x}_i^{11}$, where $\x_i^{22}<0$ is optimized through the local dissipativity LMI. 
Similarly, the original quantities $p_i$, $\Y_{ii}^{kl}$, and $\bar{K}_{ii}$ are not required to be recovered at this stage; when needed, they can be reconstructed from \eqref{Eq:ChangeOfVariables} for interpretation, initialization, or objective-function design. Therefore, \eqref{Eq:NecessaryConditions} can be inserted into the local controller design problem as an additional convex constraint that biases the local certificate toward improving the feasibility and effectiveness of the subsequent global co-design problem.

Using the obtained LMI variables and the used change of variables relationships, while it appears that one can ``recover'' some of the global co-design variables (e.g., when $\y_{ii}^{12}=\y_{ii}^{21}=0$ and $\y_{ii}^{22}$ is known, $p_i = (\bar{\y}_{ii}^{22})^{-1} (-\x_i^{22})^{-1} \y_{ii}^{22}$,
$\y_{ii}^{11} = p_i(-\x_i^{22})^{-1}\bar{\y}_{ii}^{11}$, and 
$\bar{K}_{ii} = p_i(\x_i^{11})^{-1}\tilde{K}_{ii}$), such ``recovered'' variables are neither required nor expected to align perfectly with the subsequent corresponding global co-design variable values observed when \eqref{Eq:Pr:GlobalControllerDesign} is solved. After all, \eqref{Eq:NecessaryConditions} is only necessary but not sufficient for \eqref{Eq:Pr:GlobalControllerDesign}.

\begin{remark}
When each subsystem $\Sigma_i$ has an input matrix $B_i$ structured in a way that the matrix block $B_i \tilde{K}_{ii}$ in \eqref{Eq:NecessaryConditions} has at least one zero in its main diagonal, we can reapply Lm. \ref{Lm:BEW} to identify a set of necessary conditions for  \eqref{Eq:Pr:GlobalControllerDesign} as
\begin{equation}
\label{Eq:NecessaryConditions3}
\begin{aligned}
\bm{
\bar{\x}_i^{11} & 0 & 0 & \bar{\x}_i^{11} \\
\star & -\bar{\y}^{22}_{ii} & -\bar{\y}^{22}_{ii} & \0 \\ 
\star& \star & 1 & -\x_i^{21}+\bar{\y}^{21}_{ii} \\
\star & \star & \star &  \bar{\y}^{11}_{ii} 
} \geq 0,  
\forall i\in\N_N.
\end{aligned}
\end{equation}
\end{remark}

The following proposition combines the obtained necessary conditions \eqref{Eq:NecessaryConditions} with the respective local LMI problems \eqref{Eq:Pr:LocalControllerDesign} to formulate a comprehensive local controller design problem that positively impacts the feasibility and the effectiveness of the subsequent global co-design problem \eqref{Eq:Pr:GlobalControllerDesign}.

\begin{proposition}\label{Pr:LocalControllerDesignRevisit}
At each subsystem $\Sigma_i, i\in\N_N$, under As.  \ref{As:ScalarDissipativityMatrices}, to enforce/optimize the closed-loop subsystem $\X_i$-dissipativity in \eqref{Eq:ClosedLoopSubsystemDynamics} (from $\tilde{u}_i$ to $y_i$, where $\X_i \triangleq [\x_i^{kl}\I]_{k,l\in\N_2}$ with $\x_i^{11} > 0$ and $\x_i^{22} < 0$) so that it favors the subsequent global co-design stage (i.e., \eqref{Eq:Pr:GlobalControllerDesign} in Prop. \ref{Pr:GlobalControllerDesign}), the local controller $K_i$ \eqref{Eq:SubsystemController} can be designed using the LMI problem: 
\begin{equation}\label{Eq:Pr:LocalControllerDesignRevisit}
\begin{aligned}
\mbox{Find: } &\tilde{K}_i, \bar{P}_i, \{\bar{\x}_i^{11}, \x_i^{22}\}, \\ 
&\{\tilde{K}_{ii}, \hat{K}_{ii}, \bar{\y}_{ii}^{11}, \bar{\y}_{ii}^{21}, \bar{\y}_{ii}^{22}\}\\
\mbox{Sub. to: } &\bar{P}_i > 0,\ \bar{\x}_i^{11}>0,\ \x_i^{22} < 0,\ \eqref{Eq:NecessaryConditions},\ \bar{\y}^{22}_{ii} < 0,\ \mbox{ and }\\
\span \bm{ 
\I & \0 & C_i \bar{P}_i & \0 \\
\star & \bar{P}_i & A_i\bar{P}_i+B_i\tilde{K}_i & (-\x_i^{22})\I \\
\star & \star & \bar{P}_i  & \x^{21}_i \bar{P}_i C_i^\T \\
\star & \star & \star & \bar{\x}^{11}_i\I 
} \geq 0,
\end{aligned}
\end{equation}
with $K_i=\tilde{K}_i \bar{P}_i^{-1}$ and $\x_i^{11}=(-\x_i^{22})^{-1}\bar{\x}_i^{11}$. The scalars $\x_i^{12}$ and $\x_i^{21}$ are known due to As. \ref{As:ScalarDissipativityMatrices}. The remaining global co-design variables, such as $p_i$, $\Y_{ii}^{kl}$, and $\bar K_{ii}$, need not be recovered for this local design step, but can be reconstructed from \eqref{Eq:ChangeOfVariables} when needed.
\end{proposition}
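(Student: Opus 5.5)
The plan is to show that the main LMI in \eqref{Eq:Pr:LocalControllerDesignRevisit} is exactly the LMI of Prop. \ref{Pr:LocalControllerDesign} after a positive scaling and a change of variables, specialized to the scalar structure of As. \ref{As:ScalarDissipativityMatrices}. The extra constraints \eqref{Eq:NecessaryConditions} and $\bar{\y}^{22}_{ii}<0$ are then simply appended as convex constraints in their own variables. They only bias the solution toward feasibility of \eqref{Eq:Pr:GlobalControllerDesign} and do not affect the dissipativity certificate itself.

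First, under As. \ref{As:ScalarDissipativityMatrices} I substitute $\X_i^{kl}=\x_i^{kl}\I$ into the main LMI of \eqref{Eq:Pr:LocalControllerDesign}. The $(1,1)$ block becomes $(-\x_i^{22})^{-1}\I$, the $(3,4)$ block becomes $\x_i^{21}P_iC_i^\T$, and the $(4,4)$ block becomes $\x_i^{11}\I$. Set $c_i\triangleq-\x_i^{22}>0$ and multiply the whole inequality by $c_i$; this preserves the inequality because $c_i$ is a positive scalar (equivalently, apply Lm. \ref{Lm:Congruence} with $P=\sqrt{c_i}\,\I$).

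Next, introduce the variables $\bar P_i\triangleq c_iP_i$, $\tilde K_i\triangleq c_i\bar K_i$ and $\bar\x_i^{11}\triangleq c_i\x_i^{11}$, where the last one coincides with the definition in \eqref{Eq:ChangeOfVariables}. The scaled matrix then becomes
$$\bm{\I & \0 & C_i\bar P_i & \0\\ \star & \bar P_i & A_i\bar P_i+B_i\tilde K_i & (-\x_i^{22})\I\\ \star&\star&\bar P_i & \x_i^{21}\bar P_iC_i^\T\\ \star&\star&\star&\bar\x_i^{11}\I},$$
which is precisely the LMI in \eqref{Eq:Pr:LocalControllerDesignRevisit}. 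Because $\x_i^{21}$ is known, this matrix is affine in $(\bar P_i,\tilde K_i,\x_i^{22},\bar\x_i^{11})$, so the nonconvex term $(-\x_i^{22})^{-1}$ of the original formulation has been removed.

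The map from the original variables to the new ones is a bijection on the admissible set. Specifically, $\bar P_i>0\iff P_i>0$, and $\bar\x_i^{11}>0\iff\x_i^{11}>0$ given $\x_i^{22}<0$. The inverse map is $\x_i^{11}=(-\x_i^{22})^{-1}\bar\x_i^{11}$ and $P_i=c_i^{-1}\bar P_i$. The controller is unchanged, since $K_i=\tilde K_i\bar P_i^{-1}=\bar K_iP_i^{-1}$. Hence, by Prop. \ref{Pr:LocalControllerDesign} (i.e., Co. \ref{Co:LTISystemXDisspativation}), feasibility yields closed-loop $\X_i$-dissipativity of \eqref{Eq:ClosedLoopSubsystemDynamics} with $\x_i^{11}>0$ and $\x_i^{22}<0$, so As. \ref{As:PositiveDissipativity} holds for the subsequent global stage.

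Finally, I would note that \eqref{Eq:NecessaryConditions} is an LMI in $\bar\x_i^{11}$, $\tilde K_{ii}$, $\hat K_{ii}$ and $\bar\y_{ii}^{kl}$, and that it shares only $\bar\x_i^{11}$ with the main LMI. Therefore the combined problem stays a convex LMI problem. Since \eqref{Eq:NecessaryConditions} is a relaxed necessary condition for \eqref{Eq:Pr:GlobalControllerDesign}, as derived above from \eqref{Eq:NecessaryCondStep1}–\eqref{Eq:NecessaryCondStep3}, including it favors the global co-design stage.

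The main obstacle is checking that the $(2,4)$ block transforms correctly. It becomes $c_i\I=(-\x_i^{22})\I$, which is linear in $\x_i^{22}$, so convexity is preserved. It is also worth checking that the same $\bar\x_i^{11}$ can consistently link both LMIs.
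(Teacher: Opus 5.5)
Your proposal is correct and follows the paper's proof. You impose the scalar structure of As.~\ref{As:ScalarDissipativityMatrices} in \eqref{Eq:Pr:LocalControllerDesign}, scale by $-\x_i^{22}>0$ with $\bar{P}_i=(-\x_i^{22})P_i$, $\tilde{K}_i=(-\x_i^{22})\bar{K}_i$, $\bar{\x}_i^{11}=(-\x_i^{22})\x_i^{11}$, and append the relaxed necessary condition \eqref{Eq:NecessaryConditions} as an extra LMI. The one point the paper makes explicit that you leave unstated is why $\bar{\y}_{ii}^{22}<0$ is imposed: it guarantees that $p_i$, reconstructed via \eqref{Eq:ChangeOfVariables} under $\y_{ii}^{22}<0$ and $\x_i^{22}<0$, comes out positive.
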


\begin{proof}
Applying As. \ref{As:ScalarDissipativityMatrices} in \eqref{Eq:Pr:LocalControllerDesign} and then multiplying its LMIs by $(-\x_i^{22})$, 
we obtain the constraints \(\bar P_i>0\), \(\bar{\x}_i^{11}>0\), \(\x_i^{22}<0\), and the final LMI in \eqref{Eq:Pr:LocalControllerDesignRevisit}, with the change of variables: $\bar{\x}_i^{11}\triangleq (-\x_i^{22})\x_i^{11}$, $\bar{P}_i \triangleq (-\x_i^{22}) P_i$, and $\tilde{K}_i \triangleq (-\x_i^{22})\bar{K}_i$.

From the preceding derivation, \eqref{Eq:NecessaryConditions} provides a relaxed necessary condition for the feasibility of the global co-design stage \eqref{Eq:Pr:GlobalControllerDesign}. After the relaxation 
$\hat K_{ii}=\theta_i\tilde K_{ii}$, this condition is an LMI in the variables 
$\{\bar{\x}_i^{11},\bar{\y}_{ii}^{11},\bar{\y}_{ii}^{21},\bar{\y}_{ii}^{22},\tilde K_{ii},\hat K_{ii}\}$. 
Thus, it can be imposed directly in \eqref{Eq:Pr:LocalControllerDesignRevisit}. The additional constraint $\bar{\y}_{ii}^{22}<0$ is included so that, whenever the corresponding global variable $p_i$ is reconstructed through \eqref{Eq:ChangeOfVariables} under $\y_{ii}^{22}<0$ and $\x_i^{22}<0$, the recovered value satisfies $p_i>0$. This completes the proof.
\end{proof}

\begin{remark}
In the revised local controller design problem \eqref{Eq:Pr:LocalControllerDesignRevisit}, one can include an objective function (in addition to finding a feasible solution) to maximize the influence on the feasibility and effectiveness of the global co-design problem. To see this, consider the scenario where 
$\y_{ii}^{12}=\y_{ii}^{21}=0$ and $\y_{ii}^{22}$ is known - regarding the global co-design specifications. For this scenario, we saw that the global co-design problem variables can be recovered from \eqref{Eq:Pr:LocalControllerDesignRevisit} as 
$p_i = (\bar{\y}_{ii}^{22})^{-1}(-\x_i^{22})^{-1} \y_{ii}^{22}$, 
$\y_{ii}^{11} = p_i(-\x_i^{22})^{-1}\bar{\y}_{ii}^{11}$, 
and 
$\bar{K}_{ii} = p_i(\x_i^{11})^{-1}\tilde{K}_{ii}$. Therefore, to optimize the global co-design problem variables $p_i, \y^{11}_{ii}$ and $\bar{K}_{ii}$, the local design variables $\bar{\y}_{ii}^{22}, \bar{\y}_{ii}^{11}$, and $\tilde K_{ii}$ can be optimized, respectively.
\end{remark}

\begin{remark}
In the revised local controller design problem \eqref{Eq:Pr:LocalControllerDesignRevisit}, we can include the bilinear matrix inequality constraint:
$$
\bar{\x}_i^{11}\hat{K}_{ii} = \x_i^{21}\tilde{K}_{ii} 
$$
to undo the relaxation $\hat{K}_{ii} = \theta_i\tilde{K}_{ii}$ introduced in \eqref{Eq:NecessaryConditions}. This will make \eqref{Eq:NecessaryConditions} more consistent/aligned with the global co-design problem \eqref{Eq:Pr:GlobalControllerDesign}. Alternatively, we can keep the term $(\theta_i\tilde{K}_{ii})$ in \eqref{Eq:NecessaryConditions} and treat $\theta_i$ as a design parameter - to be prespecified or tuned using a 1-D search. 
\end{remark}

\begin{remark}
If each subsystem $\Sigma_i, i\in\N_N$ follows analogous continuous-time dynamics, the final LMI condition in  \eqref{Eq:Pr:LocalControllerDesignRevisit} has to be replaced with
\begin{equation}
\bm{
\I & C_i\bar{P}_i & \0 \\
\star & -\mathcal{H}(A_i \bar{P}_i + B_i \tilde{K}_i) & \x_i^{22}\I +  \x^{21}_i \bar{P}_i C_i^\T \\
\star & \star & \bar{\x}^{11}_i \I 
}  \geq 0.
\end{equation}
\end{remark}

\section{Data-Driven Hierarchical Design of Networked Systems}
\label{Sec:Data-Driven}

In this section, for the class of linear networked systems considered (see Fig. \ref{Fig:LinearNetworkedSystem}), we present a data-driven hierarchical control design framework. In particular, we exploit the model-based LMI formulations developed for local controller design \eqref{Eq:Pr:LocalControllerDesignRevisit} and global co-design \eqref{Eq:Pr:GlobalControllerDesign} in Sec. \ref{Sec:Model-Based}, but strategically replace the role of subsystem model parameters by modifying the overall design framework using data trajectories obtained from the subsystems. 

Overall, the proposed approach follows a direct data-driven control design strategy: the local and global controllers are synthesized directly from observed subsystem trajectory data, without explicitly identifying subsystem model parameters. This avoids a separate model-identification step and naturally leads to robust designs over all models consistent with the data and disturbance bounds. Moreover, it leads to robust networked system designs that can be conveniently extended to integrate online design, fault detection, and recovery mechanisms. 

\subsection{Data-Driven Local Controller Design}

While each subsystem $\Sigma_i, i\in\N_N$ follows the dynamic model \eqref{Eq:LinearSubsystemDynamics}, here we assume that we do not know its exact model parameters $\Theta_i \equiv (A_i, B_i, C_i)$ in \eqref{Eq:LinearSubsystemDynamics}, but have finite data trajectories $\Phi_i \triangleq \{(u_i(t),x_i(t),y_i(t)): t\in\N_T^0\}$ at our disposal. For each subsystem $\Sigma_i, i\in\N_N$, let us define the data matrices (as $1 \times T$ block row matrices):
\begin{equation}\label{Eq:DataMatrices}
\begin{aligned}
X_i \triangleq [x_i(t)]_{t\in\N_{T-1}^0},& \quad  
\bar{X}_i \triangleq [x_i(t+1)]_{t\in\N_{T-1}^0},\\ 
U_i \triangleq [u_i(t)]_{t\in\N_{T-1}^0},& \quad
Y_i \triangleq [y_i(t)]_{t\in\N_{T-1}^0},\\ 
\span W_i \triangleq [w_i(t)]_{t\in\N_{T-1}^0}.
\end{aligned}
\end{equation}
Note that, while we know the ``system'' data matrices $\bar{\Phi}_i \triangleq  (X_i, \bar{X}_i, U_i, Y_i),\ i\in\N_N$, we do not know ``disturbance'' data matrices $W_i, i\in\N_N$. However, we make the following assumption about such disturbance data matrices. 

\begin{assumption}\label{As:DisturbanceData}
For each subsystem $\Sigma_i, i\in\N_N$, its disturbance data matrix $W_i$ \eqref{Eq:DataMatrices} is bounded such that
\begin{equation}\label{Eq:As:DisturbanceData}
\bm{\I \\ W_i^\T}^\T Q_{wi} \bm{\I  \\ W_i^\T} \geq 0
\end{equation} 
where $Q_{wi} \triangleq [Q_{wi}^{kl}]_{k,l\in\N_2}$ is symmetric and $Q_{wi}^{22}<0$.
\end{assumption}

\begin{remark}
The above As. \ref{As:DisturbanceData} is a standard assumption used in data-driven control literature \cite{Waarde2022}. It represents an elliptical bound on disturbances, and hence covers simple spherical bounds on disturbances like $W_i W_i^\T \leq \gamma_i \I$ where $\gamma_{i} > 0$ is a constant. 
In some instances (e.g., with spherical bounds on disturbances), when a family of admissible disturbance bounds is available, \(Q_{wi}\) may be selected or tuned within that family to reduce conservatism while preserving the validity of the bound. 
\end{remark}
 
We also impose the following standard data-richness condition on the data $\Phi_i$, which is closely related to the persistence of excitation in data-driven control \cite{de2019formulas}.

\begin{assumption}\label{As:PersistenceOfExcitation}
For each subsystem $\Sigma_i, i\in\N_N$, 
$$
\text{rank} \bm{ U_i \\ X_i } = n_{ui} + n_{xi} \quad \mbox{(i.e., full row rank)}.
$$
\end{assumption}

For each subsystem $\Sigma_i, i\in\N_N$, as the data matrices \eqref{Eq:DataMatrices} result from the dynamics \eqref{Eq:LinearSubsystemDynamics}, they satisfy:
\begin{equation}\label{Eq:DataDynamics}
\begin{aligned}
\bar{X}_i =&\ A_i X_i + B_i U_i + W_i,\\
Y_i =&\ C_i X_i.
\end{aligned}
\end{equation}
Consequently, the set of model parameters that is consistent under the observed data matrices $\bar{\Phi}_i$ for some disturbance data matrix $W_i$ satisfying \eqref{Eq:As:DisturbanceData}, can be defined as:
\begin{equation}
\begin{aligned}
\Psi_i \triangleq \Big\{(A_i,B_i,C_i):\ &\eqref{Eq:DataDynamics} \mbox{ holds for some } \\
&W_i \mbox{ satisfying } \eqref{Eq:As:DisturbanceData}\Big\}.
\end{aligned}   
\end{equation}

Using \eqref{Eq:DataDynamics}, for any $W_i$ that satisfies \eqref{Eq:As:DisturbanceData}, we can obtain the following quadratic matrix inequality (QMI) constraint on $A_i$ and $B_i$ in $\Theta_i$:
\begin{equation}\label{Eq:QMIonParameters}
\bm{\I \\ A_i^\T \\ B_i^\T}^\T \underbrace{\bm{\I & \bar{X}_i \\ \0 & -X_i \\ \0 & -U_i} Q_{wi} \bm{\I & \bar{X}_i \\ \0 & -X_i \\ \0 & -U_i}^\T}_{\triangleq \bar{Q}_{wi}} \bm{\I \\ A_i^\T \\ B_i^\T} \geq 0.
\end{equation}
On the other hand, using \eqref{Eq:DataDynamics} and As. \ref{As:PersistenceOfExcitation}, we can obtain the following constraint on $C_i$ in $\Theta_i$:  
\begin{equation}\label{Eq:EqualityConstraintOnC}
C_i = Y_i \check{K}_i \mbox{ where } \check{K}_i \in \Pi_i \triangleq\{\check{K}_i : X_i \check{K}_i = \I\}. 
\end{equation}

Using Assumptions \ref{As:DisturbanceData} and \ref{As:PersistenceOfExcitation}, we have the following lemma. 

\begin{lemma}\label{Lm:Equivalence}
Under Assumptions \ref{As:DisturbanceData} and \ref{As:PersistenceOfExcitation}, we have 
$$
\begin{aligned}
\Psi_i = \Gamma_i \triangleq 
\Big\{(A_i,B_i,C_i):\ &\eqref{Eq:QMIonParameters} \mbox{ holds and } \\
&C_i=Y_i\check K_i \mbox{ for some } \check K_i\in\Pi_i
\Big\},  
\end{aligned}
$$
and for any $\Theta_i \in \Psi_i$, \eqref{Eq:EqualityConstraintOnC} uniquely determines $C_i$.
\end{lemma}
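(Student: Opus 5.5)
The plan is to prove the set equality $\Psi_i=\Gamma_i$ by double inclusion, treating the $(A_i,B_i)$ part and the $C_i$ part separately. Then I will prove uniqueness of $C_i$ from the rank condition in As. \ref{As:PersistenceOfExcitation}.

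\emph{The QMI part.} For any candidate $(A_i,B_i)$, set $W_i\triangleq \bar X_i - A_iX_i - B_iU_i$. This is the only disturbance matrix compatible with the first equation of \eqref{Eq:DataDynamics} for that $(A_i,B_i)$. A direct expansion gives
$$\bm{\I & \bar{X}_i \\ \0 & -X_i \\ \0 & -U_i}^\T\bm{\I \\ A_i^\T \\ B_i^\T}=\bm{\I \\ \bar X_i^\T - X_i^\T A_i^\T - U_i^\T B_i^\T}=\bm{\I\\ W_i^\T},$$
so the LHS of \eqref{Eq:QMIonParameters} equals the LHS of \eqref{Eq:As:DisturbanceData} evaluated at this $W_i$. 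Hence \eqref{Eq:QMIonParameters} holds if and only if the first equation of \eqref{Eq:DataDynamics} holds for some $W_i$ satisfying \eqref{Eq:As:DisturbanceData}. The "if" direction is the computation above. For the "only if" direction, the existence of some $W_i$ forces that $W_i$ to equal the one defined above, which is why the equivalence is exact.

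\emph{The output part.} As. \ref{As:PersistenceOfExcitation} says $\bm{U_i\\X_i}$ has full row rank, so $X_i$ has full row rank as well. Therefore $X_i$ has a right inverse and $\Pi_i\neq\emptyset$.
\begin{enumerate}
\item For $\Psi_i\subseteq\Gamma_i$: if $Y_i=C_iX_i$, then for any $\check K_i\in\Pi_i$ we get $Y_i\check K_i=C_iX_i\check K_i=C_i$.
\item For $\Gamma_i\subseteq\Psi_i$: if $C_i=Y_i\check K_i$ with $X_i\check K_i=\I$, we must show $Y_i=C_iX_i$. The data were generated by the true model, so $Y_i=C_i^\star X_i$ for the true $C_i^\star$. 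Then $C_i=C_i^\star X_i\check K_i=C_i^\star$, and hence $C_iX_i=Y_i$.
\end{enumerate}
Combining this with the QMI part gives $\Psi_i=\Gamma_i$.

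\emph{Uniqueness.} If two matrices $C_i,C_i'$ both satisfy $Y_i=C_iX_i$, then $(C_i-C_i')X_i=\0$. Since $X_i$ has full row rank, this gives $C_i=C_i'$. It also shows that $Y_i\check K_i$ does not depend on the choice of $\check K_i\in\Pi_i$.

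The main obstacle is the reverse inclusion for $C_i$. It relies on the data actually being generated by \eqref{Eq:DataDynamics}, i.e., $Y_i\in\{CX_i\}$ for some $C$, because the constraint $C_i=Y_i\check K_i$ alone does not encode $Y_i=C_iX_i$. I will state this explicitly and use the full row rank of $X_i$ to close the argument.
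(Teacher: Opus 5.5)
Your proposal is correct and follows essentially the same argument as the paper. The substitution $W_i=\bar X_i-A_iX_i-B_iU_i$ turns \eqref{Eq:QMIonParameters} into \eqref{Eq:As:DisturbanceData} for the $(A_i,B_i)$ part, while the true $C_i^\star$ with $Y_i=C_i^\star X_i$ and the full row rank of $X_i$ (from As.~\ref{As:PersistenceOfExcitation}) settle both inclusions and the uniqueness of $C_i$; your explicit remark that the reverse inclusion relies on the data actually being generated by the output equation matches the paper's appeal to $C_i^\star$.
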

\begin{proof}
Since the output data are assumed to be noise-free and generated by the subsystem output equation, there exists a true output matrix $C_i^\star$ such that $Y_i=C_i^\star X_i$. Moreover, by As. \ref{As:PersistenceOfExcitation}, $X_i$ has full row rank, and hence $\Pi_i=\{\check K_i:X_i\check K_i=\I\}$ is nonempty.

First, we prove that $\Gamma_i \subseteq \Psi_i$ by showing that any $\Theta_i \in \Gamma_i \implies \Theta_i \in \Psi_i$. 
Consider a particular $\Theta_i \in \Gamma_i$, which implies that \eqref{Eq:QMIonParameters} and \eqref{Eq:EqualityConstraintOnC} hold for that $\Theta_i$. 
Using this $\Theta_i$ and the observed data matrices $\bar{\Phi}_i$, we can define a disturbance matrix as $W_i(\Theta_i) \triangleq \bar{X}_i - A_i X_i - B_i U_i$. 
Clearly, this $W_i$ satisfies 
$$\bm{\I \\ W_i^\T} = \bm{\I & \bar{X}_i \\ \0 & -X_i \\ \0 & -U_i}^\T \bm{\I \\ A_i^\T \\ B_i^\T},$$ and thus, it also satisfies \eqref{Eq:As:DisturbanceData} (via \eqref{Eq:QMIonParameters}). Moreover, by definition, for this $W_i$, $\Theta_i$ (particularly its $A_i$ and $B_i$) satisfies the first equation in \eqref{Eq:DataDynamics}. 

On the other hand, from \eqref{Eq:EqualityConstraintOnC}, we have $C_i=Y_i\check K_i$ for some $\check K_i\in\Pi_i$. Using $Y_i=C_i^\star X_i$ and $X_i\check K_i=\I$, we obtain
$
C_i=Y_i\check K_i=C_i^\star X_i\check K_i=C_i^\star.
$
Therefore, $Y_i=C_i^\star X_i=C_iX_i$, and hence $C_i$ satisfies the second equation in \eqref{Eq:DataDynamics}. Thus, $\Theta_i$ satisfies both equations in \eqref{Eq:DataDynamics} for a disturbance matrix $W_i$ satisfying \eqref{Eq:As:DisturbanceData}. Therefore, $\Theta_i\in\Psi_i$, and we conclude that $\Gamma_i\subseteq\Psi_i$. 

Next, we prove that $\Gamma_i \supseteq \Psi_i$ by showing that any $\Theta_i \in \Psi_i \implies \Theta_i \in \Gamma_i$. Note that, if any $\Theta_i \in \Psi_i$, that $\Theta_i$ satisfies \eqref{Eq:DataDynamics} for some $W_i$, where $W_i = \bar{X}_i - A_i X_i - B_i U_i$ and satisfies \eqref{Eq:As:DisturbanceData}. Using these two facts about $W_i$, we can obtain \eqref{Eq:QMIonParameters}, implying $\Theta_i$ (particularly its $A_i$ and $B_i$) satisfies \eqref{Eq:QMIonParameters}. 

On the other hand, as $\Theta_i$ satisfies \eqref{Eq:DataDynamics}, we have $Y_i = C_i X_i$. Let us select $\check{K}_i$ such that $X_i \check{K}_i = \I$. Using these two equations, we get $C_iX_i \check{K}_i = C_i \iff Y_i\check{K}_i = C_i$. This implies that $C_i$ satisfies \eqref{Eq:EqualityConstraintOnC}. Therefore, we can conclude that any $\Theta_i \in \Psi_i \implies \Theta_i \in \Gamma_i$, in other words, $\Gamma_i \supseteq \Psi_i$.

Combining these results gives $\Psi_i=\Gamma_i$. Finally, \eqref{Eq:EqualityConstraintOnC} uniquely determines $C_i$. Indeed, for any $\check K_i\in\Pi_i$,
\[
Y_i\check K_i=C_i^\star X_i\check K_i=C_i^\star.
\]
Thus, $Y_i\check K_i$ is independent of the particular right inverse $\check K_i$, and all admissible choices recover the same output matrix $C_i=C_i^\star$. This completes the proof.
\end{proof}

\begin{remark}
Measurement noise in the recorded state and output data can also be incorporated by augmenting the uncertainty description in \eqref{Eq:DataDynamics}. In that case, the boundedness assumption \eqref{Eq:As:DisturbanceData} should be extended to account for measurement noise, and the QMI \eqref{Eq:QMIonParameters} should be replaced by a unified data-consistency condition involving all subsystem parameters $A_i,B_i,C_i$, rather than recovering $C_i$ uniquely through \eqref{Eq:EqualityConstraintOnC}. Similar data-consistency characterizations arise in dualization-based output-feedback data-driven synthesis methods \cite{kristovic2024output}. The results in this paper can be extended along these lines, but we omit this generalization to keep the exposition focused on the core disturbance-robust data-driven design mechanism.
\end{remark}

Now, the goal is to design a local controller $K_i$ \eqref{Eq:SubsystemController} at each subsystem $\Sigma_i, i\in\N_N$, to make the closed-loop subsystem \eqref{Eq:ClosedLoopSubsystemDynamics} $\X_i$-dissipative under all possible model parameters $\Theta_i \in \Psi_i$ given the observed data matrices $\bar{\Phi}_i$. The required data-driven local controller design process is proven in the following proposition.

\begin{proposition}\label{Pr:DataDrivenLocalControllerDesign}
At each subsystem $\Sigma_i, i \in \N_N$, using the observed data matrices $\bar{\Phi}_i$ and under Assumptions \ref{As:DisturbanceData} and \ref{As:PersistenceOfExcitation}, to enforce/optimize $\X_i$-dissipativity for the closed-loop subsystem dynamics \eqref{Eq:ClosedLoopSubsystemDynamics} (with $\X_i^{11}>0$ and $\X_i^{22}<0$) for any subsystem model parameters $\Theta_i \in \Psi_i$, the local controller $K_i$ \eqref{Eq:SubsystemController} can be designed using the LMI problem:
\begin{equation}\label{Eq:Pr:DataDrivenLocalControllerDesign}
\begin{aligned}
\mbox{Find: } &\bar{K}_i, \tilde{K}_i, \X_i, \lambda_i\\
\mbox{Sub. to: } &X_i\tilde K_i>0,\ \ \X_i^{11}>0,\ \ \X_i^{22}<0,\ \ \lambda_i \geq 0,\\
&Q_i > 0,\ \ \mbox{and}\ \ \bm{Q_i & S_i \\ S_i^\T & R_i} \geq 0,
\end{aligned}
\end{equation}
where $Q_i,S_i,R_i$ are as given in \eqref{Eq:Pr:DataDrivenLocalControllerDesignDefinitions}, and $K_i{=}\bar{K}_i(X_i \tilde{K}_i)^{-1}$.
\end{proposition}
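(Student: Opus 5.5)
We start from the model-based certificate of Co. \ref{Co:LTISystemXDisspativation} (equivalently Prop. \ref{Pr:LocalControllerDesign}). For fixed $(A_i,B_i,C_i)$, closed-loop $\X_i$-dissipativity with $\X_i^{22}<0$ holds iff there are $P_i>0$ and $\bar K_i$ satisfying the $4\times 4$ block LMI there. In that LMI the unknown parameters appear only through $C_iP_i$ and $A_iP_i+B_i\bar K_i$.

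\textbf{Step 1: eliminate $C_i$ and parametrize $P_i$ by data.} We use Lm. \ref{Lm:Equivalence}: for every $\Theta_i\in\Psi_i$, $C_i=Y_i\check K_i$ with $X_i\check K_i=\I$. By As. \ref{As:PersistenceOfExcitation}, we can write $P_i=X_i\tilde K_i$ for some decision matrix $\tilde K_i$, and require $X_i\tilde K_i>0$ (implicitly symmetric). Since $Y_i=C_iX_i$, this gives $C_iP_i=Y_i\tilde K_i$, which is data-only. This explains the constraint $X_i\tilde K_i>0$ and the recovery $K_i=\bar K_i(X_i\tilde K_i)^{-1}$.

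\textbf{Step 2: isolate the uncertainty.} The remaining uncertain block is $A_iP_i+B_i\bar K_i=[A_i\ B_i]\,[P_i;\bar K_i]$. We apply the Schur complement (Lm. \ref{Lm:SchursComplement}, via Co. \ref{Co:SchurCongruence}) to the LMI, eliminating the known $(-\X_i^{22})^{-1}$ block and moving the $P_i$ block that multiplies the uncertain term. The goal is a condition of the form
\[
\bm{\I\\ A_i^\T\\ B_i^\T}^\T \mathcal{M}_i \bm{\I\\ A_i^\T\\ B_i^\T} \geq 0 \quad (\text{or } >0).
\]
Here $\mathcal{M}_i$ is affine in $(\tilde K_i,\bar K_i,\X_i)$ after a further Schur step. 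The main obstacle is this rearrangement. The LMI is linear in $[A_i\ B_i]$, not quadratic in the form required by Lm. \ref{Lm:MatrixSLemma}. The remedy is a dualization step: take the Schur complement with respect to the $P_i$ block in position $(2,2)$. This produces a term $(A_iP_i+B_i\bar K_i)^\T P_i^{-1}(\cdots)$. We then rewrite the condition in the primal form $\bm{\I & Z^\T}\mathcal M\bm{\I\\ Z}$ with $Z=[A_i\ B_i]^\T$, and re-expand with a Schur complement so that the $P_i^{-1}$ coupling becomes a data-dependent linear block.

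\textbf{Step 3: apply the matrix S-lemma.} We need the implication from \eqref{Eq:QMIonParameters} (which, by Lm. \ref{Lm:Equivalence}, characterizes all $(A_i,B_i)$ in $\Psi_i$) to the quadratic dissipativity condition. Because $Q_{wi}^{22}<0$ and As. \ref{As:PersistenceOfExcitation} holds, the set defined by $\bar Q_{wi}$ is bounded. Lm. \ref{Lm:MatrixSLemma} then shows the implication holds iff $\mathcal{M}_i-\lambda_i\bar Q_{wi}\geq 0$ for some $\lambda_i\ge0$; for a sufficient design condition, only the "if" direction is needed. The last step is to collect this matrix into the partition $\bm{Q_i & S_i\\ S_i^\T & R_i}\ge0$ of \eqref{Eq:Pr:DataDrivenLocalControllerDesignDefinitions}, with $Q_i>0$ playing the role of the pivot block. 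A Schur complement (Lm. \ref{Lm:SchursComplement}) linearizes the remaining $P_i^{-1}$ and $(-\X_i^{22})^{-1}$ terms, giving an LMI in $(\bar K_i,\tilde K_i,\X_i,\lambda_i)$. Reading the equivalences backwards then yields dissipativity for every $\Theta_i\in\Psi_i$.
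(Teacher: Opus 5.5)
Your Steps~1 and~3 are sound and agree with the paper up to ordering. $Y_i=C_iX_i$ together with $P_i=X_i\tilde K_i$ gives $C_iP_i=Y_i\tilde K_i$ for every $\Theta_i\in\Psi_i$. You are also right that only the trivial ``if'' direction of Lm.~\ref{Lm:MatrixSLemma} is needed for a sufficient design condition.

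The gap is in Step~2, which you correctly identify as the crux but resolve in the wrong direction. Pivoting on the $(2,2)$ block $P_i$ yields $\Gamma'-\Phi'^\T P_i^{-1}\Phi'\geq 0$. Here $\Gamma'$ is the principal block on rows/columns $1,3,4$, and $\Phi'=\bm{\0 & A_iP_i+B_i\bar K_i & \I}=\bm{\I & A_i & B_i}N_i$ with $N_i\triangleq\bm{\0&\0&\I\\ \0&P_i&\0\\ \0&\bar K_i&\0}$. With $Z\triangleq\bm{A_i & B_i}^\T$, the uncertainty enters only through the outer product $\bm{\I\\ Z}P_i^{-1}\bm{\I\\ Z}^\T$, sandwiched between the decision-dependent factors $N_i^\T$ and $N_i$. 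This is the dual form. Lm.~\ref{Lm:MatrixSLemma} cannot combine it with \eqref{Eq:QMIonParameters}, which is in the primal form $\bm{\I\\ Z}^\T(\cdot)\bm{\I\\ Z}$. The ``rewrite in the primal form'' you assert is exactly the missing step. It would need either a dualization lemma, with inertia hypotheses you do not supply, or undoing this Schur step. Your Step~3 is also inconsistent with this route: there $Q_i$ is the pivot, and no $P_i^{-1}$ term remains to be linearized.

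The missing idea is to take the complementary Schur complement via Co.~\ref{Co:SchurCongruence}. Keep $P_i$ (block~2) as the remainder and pivot on $\Gamma'$ itself, which is precisely $Q_i$, with entries $(-\X_i^{22})^{-1}$, $C_iP_i$, $P_i$, $P_iC_i^\T\X_i^{21}$, $\X_i^{11}$. The off-diagonal column is then $\Phi=\bm{\0\\ P_iA_i^\T+\bar K_i^\T B_i^\T\\ \I}=S_i\bm{\I\\ Z}$. Hence $P_i-\Phi^\T Q_i^{-1}\Phi=\bm{\I\\ Z}^\T\big(\diag(P_i,\0,\0)-S_i^\T Q_i^{-1}S_i\big)\bm{\I\\ Z}$ is already in primal form, and the S-lemma applies directly. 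The final Schur complement then re-expands $Q_i^{-1}$, not $P_i^{-1}$, which is why $Q_i>0$ appears as a separate constraint.

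Contrary to your Steps~2 and~3, $(-\X_i^{22})^{-1}$ is neither eliminated nor linearized. It stays as the $(1,1)$ block of $Q_i$. The problem is therefore an LMI only for fixed $\X_i^{22}$, or after the reparametrization mentioned in Rm.~\ref{Rm:LocalXiOptimization}.
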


\begin{figure*}
\begin{equation}\label{Eq:Pr:DataDrivenLocalControllerDesignDefinitions}
\begin{aligned}
Q_i{\triangleq}\bm{ 
(-\X_i^{22})^{-1} & Y_i \tilde{K}_i & \0 \\
\star & X_i \tilde{K}_i  & \tilde{K}_i^\T Y_i^\T \X^{21}_i\\
\star & \star & \X^{11}_i},\ 
S_i{\triangleq}\bm{\0 & \0 & \0 \\ \0 & X_i \tilde{K}_i & \bar{K}_i^\T \\ \I & \0 & \0},\ 
R_i{\triangleq}\bm{X_i \tilde{K}_i & \0 & \0 \\ \star & \0 & \0 \\ \star & \star & \0 } - \lambda_i \bm{\I & \bar{X}_i \\ \0 & -X_i \\ \0 & -U_i} Q_{wi}\star
\end{aligned}
\end{equation}
\hrulefill
\end{figure*}

\begin{proof}
We start by applying Prop. \ref{Pr:LocalControllerDesign}, which requires:
$$
\bm{ 
(-\X_i^{22})^{-1} & \0 & C_iP_i & \0 \\
\star & P_i & A_iP_i+B_i\bar{K}_i & \I \\
\star & \star & P_i  & P_iC_i^\T \X^{21}_i\\
\star & \star & \star & \X^{11}_i } \geq 0. 
$$
Using Co. \ref{Co:SchurCongruence}, we can obtain an equivalent condition as
$$P_i - \bm{\star}^\T Q_i^{-1} \bm{\0 \\ P_i A_i^\T + \bar{K}_i^\T B_i^\T \\ \I} \geq 0$$
where 
$$Q_i \triangleq  
\bm{ 
(-\X_i^{22})^{-1} & C_iP_i & \0 \\
\star & P_i  & P_iC_i^\T \X^{21}_i\\
\star & \star & \X^{11}_i }. 
$$
This condition can be restated as a QMI in $A_i$ and $B_i$ as 
$$\bm{\I \\ A_i^\T \\ B_i^\T}^\T  \bar{R}_i \bm{\I \\ A_i^\T \\ B_i^\T} \geq 0$$
where 
$$
\bar{R}_i \triangleq \bm{P_i & \0 & \0 \\ \0 & \0 & \0 \\ \0 & \0 & \0 } - \bm{\star}^\T Q_i^{-1} \bm{\0 & \0 & \0 \\ \0 & P_i & \bar{K}_i^\T \\ \I & \0 & \0}.
$$

For this QMI to hold for any $\Theta_i \triangleq (A_i,B_i,C_i) \in \Psi_i$ (see also Lm. \ref{Lm:Equivalence} and \eqref{Eq:QMIonParameters}), both necessary and sufficient conditions can be found using the matrix S-lemma (Lm. \ref{Lm:MatrixSLemma}, under its regularity condition) as
$$
\bar{R}_i - \lambda_i \bar{Q}_{wi} \geq 0,
$$
for some $\lambda_i \geq 0$ where we recall
$$
\bar{Q}_{wi} \triangleq \bm{\I & \bar{X}_i \\ \0 & -X_i \\ \0 & -U_i} Q_{wi} \bm{\I & \bar{X}_i \\ \0 & -X_i \\ \0 & -U_i}^\T.
$$
Now, we apply Lm. \ref{Lm:SchursComplement} to obtain an equivalent condition as
$$
\bar{R}_i - \lambda_i \bar{Q}_{wi} \geq 0 \iff \bm{ Q_i & S_i \\ S_i^\T & R_i } \geq 0,
$$
where $Q_i$ is as defined before,  
\begin{align*}
S_i \triangleq& \bm{\0 & \0 & \0 \\ \0 & P_i & \bar{K}_i^\T \\ \I & \0 & \0},\ \mbox{ and }\ 
R_i \triangleq
\bm{P_i & \0 & \0 \\ \0 & \0 & \0 \\ \0 & \0 & \0 } - \lambda_i \bar{Q}_{wi}. 
\end{align*}

Note that this LMI contains $C_i$ terms, particularly inside its block $Q_i$. As established in Lm. \ref{Lm:Equivalence}, such terms can be replaced using \eqref{Eq:EqualityConstraintOnC}. Consequently, $Q_i$ can be restated as   
$$
Q_i =  
\bm{ 
(-\X_i^{22})^{-1} & Y_i \check{K}_iP_i & \0 \\
\star & P_i  & P_i \check{K}_i^\T Y_i^\T \X^{21}_i\\
\star & \star & \X^{11}_i }, 
$$
where $\check{K}_i$ satisfies $X_i \check{K}_i = \I$. To remove the bilinear terms in $\check{K}_i$ and $P_i$, we next use the change of variables: 
$$
\tilde{K}_i \triangleq \check{K}_i P_i \mbox{ and } P_i = X_i \tilde{K}_i,
$$
where the latter is due to $X_i \check{K}_i = \I \iff X_i \check{K}_i P_i = P_i$. Upon making these substitutions, we obtain the main LMI condition in \eqref{Eq:Pr:DataDrivenLocalControllerDesign} along with the same $Q_i, S_i$ and $R_i$ definitions given in \eqref{Eq:Pr:DataDrivenLocalControllerDesignDefinitions}. 

Finally, as $K_i = \bar{K}_i P_i^{-1}$ according to Prop. \ref{Pr:LocalControllerDesign}, here, it becomes $K_i = \bar{K}_i (X_i \tilde{K}_i)^{-1}$ due to the introduced change of variables. This completes the proof.
\end{proof}

The above result is the data-driven version of Prop. \ref{Pr:LocalControllerDesign}, and thus Rm. \ref{Rm:LocalXiOptimization} also applies here. As the next step, we present the data-driven version of Prop.  \ref{Pr:LocalControllerDesignRevisit}, which is aimed at designing the local controllers so that they positively impact the feasibility and the effectiveness of the subsequent global co-design stage.

\begin{proposition}\label{Pr:DataDrivenLocalControllerDesignRevisit}
At each subsystem $\Sigma_i, i \in \N_N$, using the observed data matrices $\bar{\Phi}_i$ and under Assumptions \ref{As:ScalarDissipativityMatrices}, \ref{As:DisturbanceData} and \ref{As:PersistenceOfExcitation}, to enforce/optimize $\X_i$-dissipativity for the closed-loop subsystem dynamics \eqref{Eq:ClosedLoopSubsystemDynamics} (from $\tilde{u}_i$ to $y_i$, where $\X_i \triangleq [\x_i^{kl}\I]_{k,l\in\N_2}$ with $\x_i^{11} > 0$ and $\x_i^{22} < 0$) for any subsystem model parameters $\Theta_i \in \Psi_i$ so that it favors the subsequent global co-design stage, the local controller $K_i$ \eqref{Eq:SubsystemController} can be designed using the LMI problem:
\begin{equation}\label{Eq:Pr:DataDrivenLocalControllerDesignRevisit}
\begin{aligned}
\mbox{Find: } &\tilde{K}_i, \hat{K}_i, \lambda_{i1}, \lambda_{i2}, \{\bar{\x}_i^{11}, \x_i^{22}\}, \\ 
&\{\tilde{K}_{ii}, \hat{K}_{ii}, \bar{\y}^{11}_{ii}, \bar{\y}_{ii}^{21}, \bar{\y}_{ii}^{22}\},\\
\mbox{Sub. to: } &X_i \hat{K}_i > 0,\ \ \bar{\x}_i^{11}>0,\ \ \x_i^{22} < 0,\ \ \bar{\y}^{11}_{ii} > 0,\\ 
&\bar{\y}^{22}_{ii} < 0,\ \ \lambda_{i1} \geq 0,\ \ \lambda_{i2} \geq 0,\ \ Q_{i1}>0,\\ 
&\bm{Q_{i1} & S_{i1} \\ S_{i1}^\T & R_{i1}} \geq 0,\ \ \mbox{and}\ \ \bm{Q_{i2} & S_{i2} \\ S_{i2}^\T & R_{i2}} \geq 0,
\end{aligned}
\end{equation}
where $Q_{i1},S_{i1},R_{i1}$ and $Q_{i2},S_{i2},R_{i2}$ are as given in \eqref{Eq:Pr:DataDrivenLocalControllerDesignRevisitDefinitions}, and $K_i = \tilde{K}_i (X_i \hat{K}_i)^{-1}$ and $\x_i^{11} = (-\x_i^{22})^{-1} \bar{\x}^{11}_i$.
\end{proposition}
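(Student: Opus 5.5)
The proof will mirror that of Prop.~\ref{Pr:DataDrivenLocalControllerDesign}, now applied to the two LMIs of the model-based revisited design, Prop.~\ref{Pr:LocalControllerDesignRevisit}. The starting point is \eqref{Eq:Pr:LocalControllerDesignRevisit}. It contains two matrix inequalities that depend on the unknown model parameters:
\begin{enumerate}
\item the scaled local dissipativity LMI, which involves $A_i$, $B_i$ and $C_i$;
\item the necessary condition \eqref{Eq:NecessaryConditions}, which involves only $B_i$ (through $B_i\tilde{K}_{ii}$ and $B_i\hat{K}_{ii}$).
\end{enumerate}
Each will be robustified over $\Psi_i$ separately, with its own S-lemma multiplier. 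This is why two multipliers $\lambda_{i1},\lambda_{i2}\geq 0$ and two block LMIs appear in \eqref{Eq:Pr:DataDrivenLocalControllerDesignRevisit}. By Lm.~\ref{Lm:Equivalence}, $\Psi_i=\Gamma_i$, so robustness over $\Psi_i$ amounts to robustness over all $(A_i,B_i)$ satisfying the QMI \eqref{Eq:QMIonParameters}, with $C_i=Y_i\check K_i$ fixed.

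For the first LMI, I follow the steps of Prop.~\ref{Pr:DataDrivenLocalControllerDesign} exactly.
\begin{enumerate}
\item Apply Co.~\ref{Co:SchurCongruence} to move the block row containing $A_i\bar P_i+B_i\tilde K_i$ and $(-\x_i^{22})\I$ into a Schur-complement form, with a $Q_{i1}$ built from $\I$, $C_i\bar P_i$, $\bar P_i$, $\x_i^{21}\bar P_iC_i^\T$ and $\bar{\x}_i^{11}\I$.
\item Rewrite the resulting condition as a QMI $[\I;A_i^\T;B_i^\T]^\T\bar R_{i1}[\star]\geq 0$.
\item Invoke Lm.~\ref{Lm:MatrixSLemma} (under its regularity condition, with $Q_{wi}^{22}<0$ ensuring the needed boundedness structure) to obtain $\bar R_{i1}-\lambda_{i1}\bar Q_{wi}\geq 0$.
\item Apply Lm.~\ref{Lm:SchursComplement} again to reach the linear block form $[Q_{i1},S_{i1};S_{i1}^\T,R_{i1}]\geq 0$, with $Q_{i1}>0$ required.
\item Eliminate $C_i$ through \eqref{Eq:EqualityConstraintOnC}, and linearize via the change of variables $\hat K_i\triangleq\check K_i\bar P_i$, $\bar P_i=X_i\hat K_i$ (justified by $X_i\check K_i=\I$). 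This gives $X_i\hat K_i>0$ and the recovery $K_i=\tilde K_i(X_i\hat K_i)^{-1}$.
\end{enumerate}
The only difference from the earlier proposition is that the $(-\x_i^{22})\I$ entry replaces $\I$ in $S_{i1}$.

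For \eqref{Eq:NecessaryConditions}, observe that $B_i$ enters affinely in the third block row/column only. I would write the matrix as $F_i+\mathcal{H}(G_i^\T B_i H_i)$, with $G_i$, $H_i$ built from $\tilde K_{ii}$ and $\hat K_{ii}$. The difficulty is that this is not yet a QMI in $[\I;A_i^\T;B_i^\T]$ of the form required by Lm.~\ref{Lm:MatrixSLemma}, because $B_i$ appears only linearly.

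This is the main obstacle. My first attempt would use the term $-\mathcal{H}(B_i\hat K_{ii})+\I$, in which $\I$ provides a quadratic anchor. I would apply Schur/congruence (Co.~\ref{Co:SchurCongruence}) to split off the positive blocks $\bar{\x}_i^{11}\I$ and $-\bar{\y}_{ii}^{22}\I$, which are positive by the imposed sign constraints. This yields a condition of the form $\Gamma_i-\Phi_i^\T\Theta_i^{-1}\Phi_i\geq 0$ that contains $B_i\tilde K_{ii}$ quadratically. The linear $B_i\hat K_{ii}$ terms can then be absorbed as cross terms of a quadratic form in $[\I;A_i^\T;B_i^\T]$, with zero weight on $A_i$. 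That produces $\bar R_{i2}$, and the S-lemma gives $\bar R_{i2}-\lambda_{i2}\bar Q_{wi}\geq 0$. A final Schur complement yields the second block LMI $[Q_{i2},S_{i2};S_{i2}^\T,R_{i2}]\geq 0$. Strict positivity of its leading block follows from $\bar{\x}_i^{11}>0$, $\bar{\y}_{ii}^{22}<0$ and $\bar{\y}^{11}_{ii}>0$, which explains why the constraint $\bar{\y}^{11}_{ii}>0$ is added.

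The proof closes by noting three points. All remaining constraints are inherited from Prop.~\ref{Pr:LocalControllerDesignRevisit}. Each S-lemma step is necessary and sufficient under regularity, so the LMI problem certifies $\X_i$-dissipativity, together with the global-favoring condition, for every $\Theta_i\in\Psi_i$. Finally, $\x_i^{11}$ is recovered as $(-\x_i^{22})^{-1}\bar{\x}_i^{11}$.
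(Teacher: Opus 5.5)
Your treatment of the first LMI matches the paper's proof step for step: Co.~\ref{Co:SchurCongruence} with $Q_{i1}$, the QMI with $\bar R_{i1}$, the S-lemma, the Schur complement back, then $C_i=Y_i\check K_i$ and $\hat K_i=\check K_i\bar P_i$ with $\bar P_i=X_i\hat K_i$.

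The gap is in the second LMI. You split off $\bar{\x}_i^{11}\I$ and $-\bar{\y}_{ii}^{22}\I$, i.e., blocks 1 and 2 of \eqref{Eq:NecessaryConditions}. Then the $(1,3)$ entry $B_i\tilde K_{ii}$ generates the term $\tilde K_{ii}^\T B_i^\T(\bar{\x}_i^{11})^{-1}B_i\tilde K_{ii}$, so $B_i$ enters as $B_i^\T(\cdot)B_i$. The only quadratic forms that can be paired with the data QMI \eqref{Eq:QMIonParameters} in Lm.~\ref{Lm:MatrixSLemma} are $\bar\Theta_i^\T M\bar\Theta_i$ with $\bar\Theta_i\triangleq[\I;A_i^\T;B_i^\T]$. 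Such forms produce quadratic terms in $B_i$ only of the type $B_iM_{33}B_i^\T$. Hence your ``absorb into $\bar R_{i2}$'' step cannot be carried out, and the S-lemma cannot be invoked.

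Moreover, what remains after your split (blocks 3 and 4) is a $2\times 2$ block matrix, which is not a congruence of a single $\bar\Theta_i$. So ``$\bar R_{i2}-\lambda_{i2}\bar Q_{wi}\geq 0$'' with one copy of $\bar Q_{wi}$ does not fit. Indeed, the stated $R_{i2}$ contains $\lambda_{i2}\diag(\bar Q_{wi},\bar Q_{wi})$, and the stated $Q_{i2}$ is $\diag(-\bar{\y}_{ii}^{22}\I,\bar{\y}_{ii}^{11}\I)$, not your pivot. Your justification of $\bar{\y}_{ii}^{11}>0$ is also inconsistent with your own split, since your pivot $\diag(\bar{\x}_i^{11}\I,-\bar{\y}_{ii}^{22}\I)$ does not involve $\bar{\y}_{ii}^{11}$.

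The missing idea is to Schur-complement only the parameter-free blocks. The paper permutes \eqref{Eq:NecessaryConditions} into block order $2,4,1,3$ and pivots on $Q_{i2}=\diag(-\bar{\y}_{ii}^{22}\I,\bar{\y}_{ii}^{11}\I)>0$. This is exactly where $\bar{\y}_{ii}^{22}<0$ and $\bar{\y}_{ii}^{11}>0$ are needed. The result is
\[
\bm{\bar{\x}_i^{11}\I & B_i\tilde K_{ii}\\ \star & -\mathcal{H}(B_i\hat K_{ii})+\I}-\bar S_{i2}^\T Q_{i2}^{-1}\bar S_{i2}\geq 0,
\]
where $\bar S_{i2}$ is free of $(A_i,B_i)$ and $B_i$ enters only affinely.

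Both terms are then written as congruences by $\diag(\bar\Theta_i,\bar\Theta_i)$:
\begin{enumerate}
\item The identity block of $\bar\Theta_i$ carries the constants $\bar{\x}_i^{11}\I$, $\I$, and $\bar S_{i2}=S_{i2}\diag(\bar\Theta_i,\bar\Theta_i)$.
\item $B_i\tilde K_{ii}$ and $B_i\hat K_{ii}$ appear as cross terms between the $\I$ and $B_i^\T$ rows, via the blocks $\bar R_{i2}^{12}$ and $\bar R_{i2}^{22}$.
\item There is zero weight on $B_i(\cdot)B_i^\T$.
\end{enumerate}
The S-lemma is then applied against the duplicated data QMI $\diag(\bar\Theta_i,\bar\Theta_i)^\T\diag(\bar Q_{wi},\bar Q_{wi})\diag(\bar\Theta_i,\bar\Theta_i)\geq 0$. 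A final Schur complement yields $\bm{Q_{i2} & S_{i2}\\ S_{i2}^\T & R_{i2}}\geq 0$ as stated.
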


\begin{figure*}
\begin{equation}\label{Eq:Pr:DataDrivenLocalControllerDesignRevisitDefinitions}
\begin{gathered}
Q_{i1} \triangleq \bm{ 
\I & Y_i \hat{K}_i & \0 \\
\star & X_i \hat{K}_i  & \x^{21}_i \hat{K}_i^\T Y_i^\T\\
\star & \star & \bar{\x}^{11}_i\I 
},\
S_{i1} \triangleq \bm{\0 & \0 & \0 \\ \0 & X_i \hat{K}_i & \tilde{K}_i^\T \\ (-\x_i^{22})\I & \0 & \0},\ 
R_{i1} \triangleq \bm{X_i \hat{K}_i & \0 & \0 \\ \star & \0 & \0 \\ \star & \star & \0 } - \lambda_{i1} \bar{Q}_{wi},\\ 
Q_{i2} \triangleq 
\bm{
-\bar{\y}^{22}_{ii}\I & \0 \\ 
\star & \bar{\y}^{11}_{ii}\I 
},\ 
S_{i2} \triangleq 
\bm{
\0 & \0 & \0 & -\bar{\y}^{22}_{ii}\I & \0 & \0\\
\bar{\x}_i^{11}\I & \0 & \0 & -\x_i^{21}\I + \bar{\y}^{21}_{ii}\I & \0 & \0},\ 
R_{i2} \triangleq  
\bm{\bar{R}_{i2}^{11} & \bar{R}_{i2}^{12} \\ \star & \bar{R}_{i2}^{22}} 
- \lambda_{i2}\bm{\bar{Q}_{wi} & \0 \\ \0 & \bar{Q}_{wi}},\\
\bar{R}_{i2}^{11} \triangleq \bm{ \bar{\x}_i^{11}\I & \0 & \0 \\ \star & \0 & \0 \\ \star & \star & \0},\     
\bar{R}_{i2}^{12} \triangleq \bm{ \0 & \0 & \0 \\ \0 & \0 & \0 \\ \tilde{K}_{ii} & \0 & \0},\  
\bar{R}_{i2}^{22} \triangleq \bm{ \I & \0 & -\hat{K}_{ii}^\T \\ \star & \0 & \0 \\ \star & \star & \0} 
\end{gathered}
\end{equation}
\hrulefill
\end{figure*}

\begin{proof}
We start by applying Prop. \ref{Pr:LocalControllerDesignRevisit}, which requires two main LMI conditions, where the first main LMI condition is:
$$
\bm{ 
\I & \0 & C_i\bar{P}_i & \0 \\
\star & \bar{P}_i & A_i\bar{P}_i+B_i\tilde{K}_i & (-\x_i^{22})\I \\
\star & \star & \bar{P}_i  & \x^{21}_i \bar{P}_iC_i^\T \\
\star & \star & \star & \bar{\x}^{11}_i\I 
} \geq 0.
$$
For this, by applying Co. \ref{Co:SchurCongruence}, an equivalent condition can be obtained as
$$
\bar{P}_i - [\star]^\T Q_{i1}^{-1} \bm{\0 \\ \bar{P}_i A_i^\T + \tilde{K}_i^\T B_i^\T \\ (-\x_i^{22})\I} \geq 0,
$$
where
$$
Q_{i1} \triangleq 
\bm{ 
\I & C_i\bar{P}_i & \0 \\
\star & \bar{P}_i  & \x^{21}_i \bar{P}_iC_i^\T \\
\star & \star & \bar{\x}^{11}_i\I 
}.
$$
This LMI condition can now be restated as a QMI in $A_i$ and $B_i$ as
$$
\bm{\I \\ A_i^\T \\ B_i^\T}^\T  \bar{R}_{i1} \bm{\I \\ A_i^\T \\ B_i^\T} \geq 0,
$$
where 
$$
\bar{R}_{i1} \triangleq 
\bm{\bar{P}_i & \0 & \0 \\ \0 & \0 & \0 \\ \0 & \0 & \0 } - \bm{\star}^\T Q_{i1}^{-1} \bm{\0 & \0 & \0 \\ \0 & \bar{P}_i & \tilde{K}_i^\T \\ (-\x_i^{22})\I & \0 & \0}.
$$
For this QMI to hold for any $(A_i,B_i,C_i)\in\Psi_i$, the matrix S-lemma (Lm. \ref{Lm:MatrixSLemma}, under its regularity condition) gives a necessary and sufficient condition as
$\bar{R}_{i1} - \lambda_{i1} \bar{Q}_{wi} \geq 0$, for some $\lambda_{i1} \geq 0$. Applying Lm. \ref{Lm:SchursComplement}, we get an equivalent condition for this as 
$$
\bm{Q_{i1} & S_{i1} \\ S_{i1}^\T & R_{i1}} \geq 0,
$$
where $Q_{i1}$ is as defined before, 
$$
S_{i1} \triangleq \bm{\0 & \0 & \0 \\ \0 & \bar{P}_i & \tilde{K}_i^\T \\ (-\x_i^{22})\I & \0 & \0}, \mbox{ and }
$$
$$
R_{i1} \triangleq \bm{\bar{P}_i & \0 & \0 \\ \0 & \0 & \0 \\ \0 & \0 & \0 } - \lambda_{i1} \bar{Q}_{wi}.
$$
The $C_i$ terms in this LMI (particularly inside its block $Q_{i1}$) now can be equivalently replaced using \eqref{Eq:EqualityConstraintOnC}, which leads to 
$$
Q_{i1} = \bm{ 
\I & Y_i \check{K}_i \bar{P}_i & \0 \\
\star & \bar{P}_i  & \x^{21}_i \bar{P}_i \check{K}_i^\T Y_i^\T\\
\star & \star & \bar{\x}^{11}_i\I 
},
$$
where $\check{K}_i$ satisfies $X_i \check{K}_i = \I$. Now, using the change of variables
$$
\hat{K}_i \triangleq \check{K}_i \bar{P}_i \mbox{ with } \bar{P}_i = X_i \hat{K}_i,
$$
where the latter is due to $X_i \check{K}_i = \I \iff X_i \check{K}_i \bar{P}_i = \bar{P}_i$, we can obtain the first main LMI condition given in \eqref{Eq:Pr:DataDrivenLocalControllerDesignRevisit} along with the same $Q_{i1}, S_{i1}$ and $R_{i1}$ definitions given in \eqref{Eq:Pr:DataDrivenLocalControllerDesignRevisitDefinitions}.  

Next, we consider the second main LMI condition required in Prop. \ref{Pr:LocalControllerDesignRevisit}, which takes the form
$$ 
\bm{
\bar{\x}_i^{11}\I & \0 & B_i \tilde{K}_{ii} & \bar{\x}_i^{11}\I \\
\star & -\bar{\y}^{22}_{ii}\I & -\bar{\y}^{22}_{ii}\I & \0 \\ 
\star& \star & -\mathcal{H}(B_i\hat{K}_{ii})+\I & -\x_i^{21}\I + \bar{\y}^{21}_{ii}\I \\
\star & \star & \star &  \bar{\y}^{11}_{ii}\I 
} \geq 0,
$$
where $\hat{K}_{ii}$ is the relaxed variable introduced in \eqref{Eq:NecessaryConditions}; enforcing 
$\hat{K}_{ii}=\theta_i\tilde K_{ii}$ would recover the unrelaxed necessary condition. 
First, using Lm. \ref{Lm:Congruence}, we obtain an equivalent LMI condition for it as
$$
\bm{
-\bar{\y}^{22}_{ii}\I  & \0 & \0 & -\bar{\y}^{22}_{ii}\I \\ 
\star &  \bar{\y}^{11}_{ii}\I & \bar{\x}_i^{11}\I & -\x_i^{21}\I + \bar{\y}^{21}_{ii}\I \\
\star & \star & \bar{\x}_i^{11}\I & B_i \tilde{K}_{ii} \\
\star & \star & \star& -\mathcal{H}(B_i\hat{K}_{ii})+\I
} \geq 0,
$$
which, through applying Lm. \ref{Lm:SchursComplement}, can be restated as 
\begin{align*}
&\bm{ \bar{\x}_i^{11}\I &  B_i \tilde{K}_{ii} \\ \star & -\mathcal{H}(B_i\hat{K}_{ii})+\I} 
- \bar{S}_{i2}^\T Q_{i2}^{-1} \bar{S}_{i2} \geq 0
\end{align*}
where
$$
Q_{i2} \triangleq 
\bm{
-\bar{\y}^{22}_{ii}\I & \0 \\ 
\star & \bar{\y}^{11}_{ii}\I 
} \mbox{ and }
\bar{S}_{i2} \triangleq \bm{ \0 & -\bar{\y}^{22}_{ii}\I \\ \bar{\x}_i^{11}\I & -\x_i^{21}\I + \bar{\y}^{21}_{ii}\I}.
$$ 
To write this condition as a QMI in $A_i$ and $B_i$, we first define $\bar{\Theta}_i^\T \triangleq \bm{\I & A_i & B_i}$, using which we express its first term as 
$$
\bm{ \bar{\x}_i^{11}\I &  B_i \tilde{K}_{ii} \\ \star & -\mathcal{H}(B_i\hat{K}_{ii})+\I} =
\bm{\star}^\T 
\underbrace{\bm{\bar{R}_{i2}^{11} & \bar{R}_{i2}^{12} \\ \star & \bar{R}_{i2}^{22}}}_{\triangleq \bar{R}_{i2}}
\bm{\bar{\Theta}_i & \0 \\ \0 & \bar{\Theta}_i}
$$
where 
\begin{align*}
\bar{R}_{i2}^{11} \triangleq \bm{ \bar{\x}_i^{11}\I & \0 & \0 \\ \0 & \0 & \0 \\ \0 & \0 & \0}, \quad     
\bar{R}_{i2}^{12} \triangleq \bm{ \0 & \0 & \0 \\ \0 & \0 & \0 \\ \tilde{K}_{ii} & \0 & \0}\\ \span
\bar{R}_{i2}^{22} \triangleq \bm{ \I & \0 & -\hat{K}_{ii}^\T \\ \0 & \0 & \0 \\ -\hat{K}_{ii} & \0 & \0}. 
\end{align*}
Next, we express $\bar{S}_{i2}$ (in its second term) as
$$
\bar{S}_{i2} \triangleq \bm{ \0 & -\bar{\y}^{22}_{ii}\I \\ \bar{\x}_i^{11}\I & -\x_i^{21}\I + \bar{\y}^{21}_{ii}\I} 
= \underbrace{\bm{S_{i2}^{11} & S_{i2}^{12} \\ S_{i2}^{21}  & S_{i2}^{22}}}_{\triangleq S_{i2}}
\bm{\bar{\Theta}_i & \0 \\ \0 & \bar{\Theta}_i}
$$
where
\begin{align*}\span
S_{i2}^{11} \triangleq \bm{ \0 & \0 & \0 }, \quad     
S_{i2}^{12} \triangleq \bm{ -\bar{\y}^{22}_{ii}\I & \0 & \0 },\\  \span
S_{i2}^{21} \triangleq \bm{ \bar{\x}_i^{11}\I & \0 & \0 }, \quad 
S_{i2}^{22} \triangleq \bm{ -\x_i^{21}\I + \bar{\y}^{21}_{ii}\I & \0 & \0}. 
\end{align*}
Using these two representations, we obtain the QMI:
$$
\bm{\bar{\Theta}_i & \0 \\ \0 & \bar{\Theta}_i}^\T 
\underbrace{\big( \bar{R}_{i2} - S_{i2}^\T Q_{i2}^{-1} S_{i2}  \big)}_{\triangleq \tilde{R}_{i2}} \bm{\bar{\Theta}_i & \0 \\ \0 & \bar{\Theta}_i} \geq 0.
$$
On the other hand, from \eqref{Eq:QMIonParameters}, we have the QMI: 
$$
\bm{\bar{\Theta}_i & \0 \\ \0 & \bar{\Theta}_i}^\T 
\underbrace{\bm{\bar{Q}_{wi} & \0 \\ \0 & \bar{Q}_{wi}}}_{\triangleq \tilde{Q}_{wi}}
\bm{\bar{\Theta}_i & \0 \\ \0 & \bar{\Theta}_i} \geq 0.
$$
For this latter QMI to imply the former QMI, the matrix S-lemma (Lm. \ref{Lm:MatrixSLemma}, under its regularity condition) gives an equivalent condition as 
$
\tilde{R}_{i2} - \lambda_{i2} \tilde{Q}_{wi} \geq 0,
$
for some $\lambda_{i2} \geq 0$. Applying Lm. \ref{Lm:SchursComplement}, we can obtain an equivalent condition for this as
$$
\bm{Q_{i2} & S_{i2} \\ S_{i2}^\T & R_{i2}} \geq 0,
$$
where $R_{i2} \triangleq  \bar{R}_{i2} - \lambda_{i2} \tilde{Q}_{wi}$, which is the second main LMI condition given in \eqref{Eq:Pr:DataDrivenLocalControllerDesignRevisit} along with the same $Q_{i2}, S_{i2}$ and $R_{i2}$ definitions given in \eqref{Eq:Pr:DataDrivenLocalControllerDesignRevisitDefinitions}. 

Finally, as $K_i = \tilde{K}_i \bar{P}_i^{-1}$ according to Prop. \ref{Pr:LocalControllerDesignRevisit}, here, it becomes $K_i = \tilde{K}_i (X_i \hat{K}_i)^{-1}$ due to the introduced change of variables. This completes the proof.
\end{proof}

\subsection{Data-Driven Global Co-design}

The global co-design problem formulated in Prop. \ref{Pr:GlobalControllerDesign} primarily uses the already enforced closed-loop subsystem dissipativity properties. However, as evident from the $B\bar{K}$ terms appearing in \eqref{Eq:Pr:GlobalControllerDesign}, the global co-design problem also requires the subsystem input matrices $B \triangleq \diag([B_i]_{i\in\N_N})$. 
Consequently, when the subsystem input matrices are unknown, one cannot directly follow the data-driven local controller designs with the model-based global co-design in Prop. \ref{Pr:GlobalControllerDesign}.
To address this issue, using the observed subsystem data matrices $\{\bar{\Phi}_i: i\in\N_N\}$ subject to As. \ref{As:DisturbanceData}, the following proposition presents a data-driven global co-design problem.

\begin{proposition}\label{Pr:DataDrivenGlobalControllerDesign}
Suppose that the local controllers \eqref{Eq:SubsystemController} have been designed (e.g., via \eqref{Eq:Pr:DataDrivenLocalControllerDesignRevisit}) so that each closed-loop subsystem \eqref{Eq:ClosedLoopSubsystemDynamics} is $\X_i$-dissipative (from $\tilde{u}_i$ to $y_i$, with $\X_i^{11}>0$, i.e., As. \ref{As:PositiveDissipativity}) for any subsystem model parameters $\Theta_i \in \Psi_i, \forall i\in \N_N$ (under Assumptions \ref{As:DisturbanceData} and \ref{As:PersistenceOfExcitation}), and suppose that the desired network-level dissipativity certificate satisfies $\Y^{22}<0$ (i.e., As. \ref{As:NegativeDissipativity}). Then, the distributed global controller and the underlying communication topology  (jointly captured by $K$ in \eqref{Eq:DistributedGlobalController}) can be co-designed to enforce/optimize the $\Y$-dissipativity (from $w$ to $z$) of the closed-loop linear networked system (for any subsystem model parameters $\Theta_i \in \Psi_i, \forall i\in \N_N$), using the LMI problem:     
\begin{equation}\label{Eq:Pr:DataDrivenGlobalControllerDesign}
\begin{aligned}
\min_{\substack{\bar{K}, \Y, \lambda \\ \{p_i\in\R: i\in\N_N\}}}&\ J \triangleq \Vert \bar{K} \Vert_1 - \phi([p_i]_{i\in\N_N}) + \psi(\Y)\\
\mbox{Sub. to:}&\ p_i > 0, \forall i\in\N_N,\ \Y^{11}>0,\ \Y^{22}<0,\\
&\ \lambda \geq 0,\ \mbox{ and } \ \bm{Q & S \\ S^\T & R} \geq 0,
\end{aligned}    
\end{equation}
where $Q,S,R$ are as given in \eqref{Eq:Pr:DataDrivenGlobalControllerDesignDefinitions}, 
$\Y \triangleq [\Y^{kl}]_{k,l\in\N_2}$, 
$\X_p^{kl} \triangleq \diag([p_i\X_i^{kl}]_{i\in\N_N})$,
$\X^{kl} \triangleq \diag([\X^{kl}_i]_{i\in\N_N}), \forall k,l\in\N_2$,
$\bar{\X}^{11} \triangleq (\X^{11})^{-1}$, 
$\bar{\X}_p^{11} \triangleq \bar{\X}^{11} \X_p^{11} \bar{\X}^{11}$, 
and $\bar{K} \triangleq [\bar{K}_{ij}]_{i,j\in\N_N}$ with $K \triangleq [p_i^{-1} \bar{K}_{ij}]_{i,j\in\N_N}$ (functions $J$, $\phi$, and $\psi$ are interpreted as in Rm. \ref{Rm:GlobalObjective}).
\end{proposition}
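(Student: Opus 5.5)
We start from Prop. \ref{Pr:GlobalControllerDesign}. Since the local stage certifies $\X_i$-dissipativity of each closed-loop subsystem for every $\Theta_i\in\Psi_i$, Co. \ref{Co:NSC2Synthesis} applies for every admissible $B=\diag([B_i]_{i\in\N_N})$. Hence it suffices to make the main LMI in \eqref{Eq:Pr:GlobalControllerDesign} hold robustly for all $B_i$ consistent with the data. The only model-dependent quantity in that LMI is the product $B\bar K$, which enters affinely through $\X^{11}B\bar K$ and $-\mathcal{H}(\X^{21}B\bar K)$.

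The first step removes the troublesome $\X^{11}$ factor in the first block row. Apply Lm. \ref{Lm:Congruence} with $P=\diag(\bar\X^{11},\I,\I,\I)$, where $\bar{\X}^{11}=(\X^{11})^{-1}$. This turns the $(1,1)$ block into $\bar\X^{11}_p$ and the $(1,3)$ block into $B\bar K$; it also explains the definition of $\bar\X_p^{11}$ in the statement. Next, collect the blocks of row/column $1$ and the $B$-dependent $(3,3)$ block on one side, and the $B$-independent blocks on the other. Using Co. \ref{Co:SchurCongruence} (with $\Y^{11}>0$ and $-\Y^{22}>0$ available as the invertible pivot, which is why $\Y^{11}>0$ is added to the statement), rewrite the condition as a QMI in $\bar\Theta=\diag([\bar\Theta_i]_{i\in\N_N})$ with $\bar\Theta_i^\T=[\I\ A_i\ B_i]$, following the pattern used for \eqref{Eq:Pr:DataDrivenLocalControllerDesignRevisit}. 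The first term is then $\star^\T\bar R\,\bar\Theta$, where $\bar R$ is affine in $(\bar K,p_i)$ and places $\bar K$ in the $B$-row (just as $\bar R^{12}_{i2}$ and $\bar R^{22}_{i2}$ do). The Schur-complemented term is $\star^\T Q^{-1}S\bar\Theta$, with $Q$ built from $\Y$ and $\X_p$.

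The data enter through \eqref{Eq:QMIonParameters}. Stacking the $N$ QMIs gives $\bar\Theta^\T\diag([\bar Q_{wi}]_{i\in\N_N})\bar\Theta\ge0$. Apply Lm. \ref{Lm:MatrixSLemma} to the implication ``data QMI $\Rightarrow$ design QMI''. This yields $\bar R-S^\T Q^{-1}S-\lambda\,\diag([\bar Q_{wi}]_{i\in\N_N})\ge0$ for some $\lambda\ge0$, and a final application of Lm. \ref{Lm:SchursComplement} gives $\bigl[\begin{smallmatrix}Q&S\\S^\T&R\end{smallmatrix}\bigr]\ge0$ with $R=\bar R-\lambda\diag([\bar Q_{wi}]_{i\in\N_N})$. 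All blocks are affine in $(\bar K,\Y,p_i,\lambda)$, so the result is an LMI. The recovery $K_{ij}=p_i^{-1}\bar K_{ij}$ and the objective $J$ carry over unchanged from Prop. \ref{Pr:GlobalControllerDesign}.

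The main obstacle is the S-lemma step. The uncertainty is block-diagonal, so each subsystem has its own QMI, but the condition uses a single multiplier $\lambda$ on the stacked set. The stacked data QMI must contain the Cartesian product $\Psi_1\times\cdots\times\Psi_N$ for the step to be sufficient. I would argue sufficiency only: every $\Theta\in\prod_i\Psi_i$ satisfies each $\bar\Theta_i^\T\bar Q_{wi}\bar\Theta_i\ge0$, hence the block-diagonal stacked inequality. The S-procedure direction ``$M-\lambda N\ge0\Rightarrow$ implication'' always holds, so no regularity or necessity claim is needed. A second delicate point is placing $B\bar K$ so that it appears \emph{linearly} in $\bar\Theta$ rather than as $\bar K^\T B^\T(\cdot)B\bar K$. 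I would handle this by keeping the $\mathcal{H}(\X^{21}B\bar K)$ term in $\bar R$ (off-diagonal between the identity and $B$ rows) and moving only the block-row-$1$ coupling $B\bar K$ into $S$, mirroring the treatment of $B_i\tilde K_{ii}$ in the proof of Prop. \ref{Pr:DataDrivenLocalControllerDesignRevisit}.
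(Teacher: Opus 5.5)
Your skeleton matches the paper's. You use a congruence with $\diag(\bar\X^{11},\I,\I,\I)$, then a Schur complement on the pivot $\diag(\Y^{11},-\Y^{22})$, which is where $\Y^{11}>0$ comes from. You lift the remainder into a QMI in $(A,B)$, stack the subsystem data QMIs with one multiplier, and finish with a Schur complement. Your remark that only the sufficiency direction of the S-lemma is needed is also correct.

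The gap is at the step you flag as delicate: the term $\mathcal{H}(\X^{21}B\bar K)$ cannot be kept in $\bar R$ ``off-diagonal between the identity and $B$ rows.'' Take the lifting $\diag(\tilde\Theta,\tilde\Theta)$ with $\tilde\Theta^\T=[\I\ A\ B]$; your subsystem-ordered $\bar\Theta$ differs only by the permutation $E$. A $B$-independent entry $G$ in the ($B$-row, identity-column) slot produces exactly $BG$, and its mirror produces $G^\T B^\T$. Reproducing $\X^{21}B\bar K$ for all data-consistent $B$ would therefore require $\X^{21}B=BN$ with $N$ fixed. In general that holds only if each $\X_i^{21}$ is a scalar multiple of the identity, i.e., As.~\ref{As:ScalarDissipativityMatrices}, which this proposition does not assume. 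The proof of Prop.~\ref{Pr:DataDrivenLocalControllerDesignRevisit} never met this obstacle, because there the scalar coupling coefficient had already been absorbed into $\hat K_{ii}$. So ``mirroring'' that proof does not transfer.

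There is also an inconsistency in your plan. With your pivot, both $B$-dependent blocks stay in $\bar R$. The paper's $S$ is $B$-free, containing only $\X_p^{11}\bar\X^{11}$, $-\X_p^{12}+\Y^{12}$ and $-\Y^{22}$ in the identity column. So ``moving $B\bar K$ into $S$'' contradicts your own pivot choice.

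The missing idea is the auxiliary block $\hat\Theta\triangleq\tilde\Theta\X^{12}$, i.e., $\hat\Theta^\T=\X^{21}\tilde\Theta^\T$, together with the three-block lifting $\bm{\tilde\Theta & \0\\ \0 & \tilde\Theta\\ \0 & \hat\Theta}$. With it, $\X^{21}B\bar K=\hat\Theta^\T\tilde R_{12}\tilde\Theta$. This is why $R$ in \eqref{Eq:Pr:DataDrivenGlobalControllerDesignDefinitions} has a third block row and column, coupled to the second through $-\tilde R_{12}^\T$.

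On the data side you then also need a QMI for $\hat\Theta$. It follows by congruence: $\hat\Theta^\T E\bar Q_wE^\T\hat\Theta=\X^{21}\bigl(\tilde\Theta^\T E\bar Q_wE^\T\tilde\Theta\bigr)\X^{12}\geq0$. That is why the multiplier term is $\lambda\diag(E\bar Q_wE^\T,\0,E\bar Q_wE^\T)$. Your single copy $\lambda\diag([\bar Q_{wi}]_{i\in\N_N})$ does not even match the size of a two-block lifting.

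Without this device, your argument yields at best a different two-block LMI that is valid only for scalar $\X_i^{21}$. It does not yield the stated $Q,S,R$.
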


\begin{figure*}
\begin{equation}\label{Eq:Pr:DataDrivenGlobalControllerDesignDefinitions}
\begin{gathered}
Q \triangleq \bm{\Y^{11} & \0 \\ \star & -\Y^{22}},\ 
S \triangleq \bm{S_{11} & S_{12} & \0 \\ S_{21} & S_{22} & \0},\ 
\begin{aligned}
S_{11} \triangleq& \bm{\X_p^{11}\bar{\X}^{11} & \0 & \0},\ 
&S_{12} \triangleq \bm{-\X_p^{12}+\Y^{12} & \0 & \0},\\
S_{21} \triangleq& \bm{\0 & \0 & \0},\ 
&S_{22} \triangleq \bm{-\Y^{22} & \0 & \0},
\end{aligned}\\
R \triangleq 
\bm{
\tilde{R}_{11} & \tilde{R}_{12} & \0 \\ \star & -\tilde{R}_{22} & -\tilde{R}_{12}^\T \\ \star & \star & \0 
} - \lambda \bm{E \bar{Q}_{w} E^\T & \0 & \0 \\ \0 & \0 & \0 \\ \0 & \0 & E\bar{Q}_{w} E^\T},\ 
\tilde{R}_{11} \triangleq 
\bm{
\bar{\X}_p^{11} & \0 & \0 \\
\0 & \0 & \0 \\
\0 & \0 & \0 
}, \quad
\tilde{R}_{12} \triangleq 
\bm{
\0 & \0 & \0 \\
\0 & \0 & \0 \\
\bar{K} & \0 & \0 
},\\
\tilde{R}_{22} \triangleq 
\bm{
\X_p^{22} & \0 & \0 \\
\0 & \0 & \0 \\
\0 & \0 & \0 
},\ 
E\triangleq\bm{\diag([E_{1i}]_{i\in\N_N})\\\diag([E_{2i}]_{i\in\N_N})\\\diag([E_{3i}]_{i\in\N_N})},\ 
\begin{aligned}
E_{1i}&\triangleq
\bm{\I_{n_{xi}} & \0_{n_{xi}\times n_{xi}} & \0_{n_{xi}\times n_{ui}}},\\
E_{2i}&\triangleq
\bm{\0_{n_{xi}\times n_{xi}} & \I_{n_{xi}} & \0_{n_{xi}\times n_{ui}}},\\
E_{3i}&\triangleq
\bm{\0_{n_{ui}\times n_{xi}} & \0_{n_{ui}\times n_{xi}} & \I_{n_{ui}}},
\end{aligned}\ 
\bar{Q}_w \triangleq \diag([\bar{Q}_{wi}]_{i\in\N_N})
\end{gathered}
\end{equation} 
\hrulefill
\end{figure*}

\begin{proof}
We start by applying Prop. \ref{Pr:GlobalControllerDesign}, which requires:
$$
\bm{
\X_p^{11} & \0 & \X^{11} B \bar{K} & \X_p^{11} \\
\star & -\Y^{22} & -\Y^{22} & \0 \\ 
\star& \star & -\mathcal{H}(\X^{21} B \bar{K})-\X_p^{22} & -\X_p^{21}+\Y^{21} \\
\star & \star & \star &  \Y^{11}
} \geq 0.
$$
Using Lm. \ref{Lm:Congruence}, we get the equivalent LMI condition
$$
\bm{
\bar{\X}_p^{11} & \0 & B \bar{K} & \bar{\X}^{11}\X_p^{11} \\
\star & -\Y^{22} & -\Y^{22} & \0 \\ 
\star& \star & -\mathcal{H}(\X^{21} B \bar{K}) - \X_p^{22} & - \X_p^{21} + \Y^{21} \\
\star & \star & \star &  \Y^{11}
} \geq 0,
$$
where we have used the notations 
$\bar{\X}^{11} \triangleq (\X^{11})^{-1}$, and 
$\bar{\X}_p^{11} \triangleq \bar{\X}^{11} \X_p^{11} \bar{\X}^{11}$ from \eqref{Eq:Pr:DataDrivenGlobalControllerDesignDefinitions}. 
By a block-permutation congruence, we restate the above LMI condition as
$$
\bm{
\Y^{11} & \0 & \X_p^{11}\bar{\X}^{11} & - \X_p^{12} + \Y^{12} \\
\star & -\Y^{22} & \0 & -\Y^{22} \\ 
\star & \star & \bar{\X}_p^{11} & B \bar{K} \\
\star & \star & \star & -\mathcal{H}(\X^{21} B \bar{K}) - \X_p^{22} \\
} \geq 0,
$$
and then via Lm. \ref{Lm:SchursComplement} (under the imposed constraint $Q>0$), we obtain the equivalent condition 
$$
\bar{R} - 
\bar{S}^\T Q^{-1} \bar{S} \geq 0,
$$
where 
$$
\bar{R} \triangleq \bm{
\bar{\X}_p^{11} & B \bar{K} \\
\star & -\mathcal{H}(\X^{21} B \bar{K}) - \X_p^{22}}, 
$$
$$
Q \triangleq \bm{\Y^{11} & \0 \\ \star & -\Y^{22}}, \mbox{ and }
\bar{S} \triangleq \bm{\X_p^{11}\bar{\X}^{11} & -\X_p^{12}+\Y^{12} \\ \0 & -\Y^{22}}.
$$
Defining $\tilde{\Theta}^\T \triangleq \bm{\I & A & B}$ and $\hat{\Theta}^\T \triangleq \X^{21} \tilde{\Theta}^\T$, the $\bar{R}$ and $\bar{S}$ terms in this equivalent condition can be expressed as  
\begin{gather*}
\bar{R} = 
\bm{\tilde{\Theta} & \0 \\ \0 & \tilde{\Theta} \\ \0 & \hat{\Theta}}^\T 
\underbrace{\bm{\tilde{R}_{11} & \tilde{R}_{12} & \0 \\ \tilde{R}_{21} & -\tilde{R}_{22} & -\tilde{R}_{21} \\ \0 & -\tilde{R}_{12} & \0 }}_{\triangleq \tilde{R}} 
\bm{\tilde{\Theta} & \0 \\ \0 & \tilde{\Theta} \\ \0 & \hat{\Theta}},\ \mbox{ and }\\
\bar{S} = \underbrace{\bm{S_{11} & S_{12} & \0 \\ S_{21} & S_{22} & \0}}_{\triangleq S} \bm{\tilde{\Theta} & \0 \\ \0 & \tilde{\Theta} \\ \0 & \hat{\Theta}},    
\end{gather*}
respectively, where all the inner elements of $\tilde{R}$ and $S$ take the forms given in \eqref{Eq:Pr:DataDrivenGlobalControllerDesignDefinitions}. Consequently, the obtained equivalent condition can be restated as a QMI:
\begin{equation}\label{Eq:Pr:DataDrivenGlobalControllerDesignStep1}
\bm{\tilde{\Theta} & \0 \\ \0 & \tilde{\Theta} \\ \0 & \hat{\Theta}}^\T  
\underbrace{\big( \tilde{R}-S^\T Q^{-1}S \big)}_{\triangleq \check{R}} 
\bm{\tilde{\Theta} & \0 \\ \0 & \tilde{\Theta} \\ \0 & \hat{\Theta}} \geq 0.    
\end{equation}

To obtain a similarly structured QMI from \eqref{Eq:QMIonParameters}, let us first denote $\bar{\Theta}_i^\T \triangleq \bm{\I & A_i & B_i},\ \forall i\in\N_N$ and $\bar{\Theta} \triangleq \diag([\bar{\Theta}_i]_{i\in\N_N})$. Using these notations, we can restate the local QMIs \eqref{Eq:QMIonParameters} as: 
$\bar{\Theta}_i^\T \bar{Q}_{wi} \bar{\Theta}_i \geq 0, \forall i\in\N_N$, which can be collectively represented as 
$$\bar{\Theta}^\T \bar{Q}_{w} \bar{\Theta} \geq 0,$$
where $\bar{Q}_{w} \triangleq \diag([\bar{Q}_{wi}]_{i\in\N_N})$ (as defined in \eqref{Eq:Pr:DataDrivenGlobalControllerDesignDefinitions}). 

Next, we identify the permutation matrix $E$ that maps the subsystem-stacked matrix $\bar{\Theta}$ to the type-stacked matrix $\tilde{\Theta}$, i.e., $\tilde{\Theta}=E\bar{\Theta}$. Recall that 
$\bar{\Theta}_i^\T=\bm{\I & A_i & B_i}, 
\bar{\Theta}\triangleq\diag([\bar{\Theta}_i]_{i\in\N_N})$ and $\tilde{\Theta}^\T=\bm{\I & A & B}$, where $A\triangleq\diag([A_i]_{i\in\N_N})$ and $B\triangleq\diag([B_i]_{i\in\N_N})$. Let 
$E \triangleq \bm{E_1^\T & E_2^\T & E_3^\T}^\T$, $E_k \triangleq \diag([E_{ki}]_{i\in\N_N}), \forall k\in\N_3$, where  
\begin{gather*}
E_{1i} \triangleq \bm{\I_{n_{xi}} & \0_{n_{xi}\times n_{xi}} & \0_{n_{xi}\times n_{ui}}},\\ 
E_{2i} \triangleq \bm{\0_{n_{xi}\times n_{xi}} & \I_{n_{xi}} & \0_{n_{xi}\times n_{ui}}},\\
E_{3i} \triangleq \bm{\0_{n_{ui}\times n_{xi}} & \0_{n_{ui}\times n_{xi}} & \I_{n_{ui}}}.
\end{gather*}
Then, 
$E_{1i}\bar{\Theta}_i=\I_{n_{xi}}$, 
$E_{2i}\bar{\Theta}_i=A_i^\T$, and 
$E_{3i}\bar{\Theta}_i=B_i^\T$. 
Therefore,
\[ E\bar{\Theta} = \begin{bmatrix} \operatorname{diag}\!\left([\I_{n_{xi}}]_{i\in\N_N}\right)\\[2mm] \operatorname{diag}\!\left([A_i^\T]_{i\in\N_N}\right)\\[2mm] \operatorname{diag}\!\left([B_i^\T]_{i\in\N_N}\right) \end{bmatrix} = \begin{bmatrix} \I\\ A^\T\\ B^\T \end{bmatrix} = \tilde{\Theta}. \] Since $E$ only permutes the block rows of $\bar{\Theta}$ into the required type-stacked order, it is a square permutation matrix under this stacking, and hence $E^{-1}=E^\T$. Thus, $\bar{\Theta}=E^\T\tilde{\Theta}$.


Consequently, the aggregate QMI obtained from \eqref{Eq:QMIonParameters} can be written as $\tilde{\Theta}^\T E\bar Q_wE^\T\tilde{\Theta}\geq0$. Since 
$\hat{\Theta}^\T=\X^{21}\tilde{\Theta}^\T$, we have $\hat{\Theta}=\tilde{\Theta}\X^{12}$, and therefore
$
\hat{\Theta}^\T E\bar Q_wE^\T\hat{\Theta}
=
(\X^{12})^\T\tilde{\Theta}^\T E\bar Q_wE^\T\tilde{\Theta}\X^{12}\geq0.
$
Now, combining these two QMIs, we get a unified QMI as 
\begin{equation}\label{Eq:Pr:DataDrivenGlobalControllerDesignStep2}    
\bm{\tilde{\Theta} & \0 \\ \0 & \tilde{\Theta} \\ \0 & \hat{\Theta}}^\T 
\underbrace{\bm{E \bar{Q}_{w} E^\T & \0 & \0 \\ \0 & \0 & \0 \\ \0 & \0 & E\bar{Q}_{w} E^\T}}_{\triangleq \tilde{Q}_{w}}
\bm{\tilde{\Theta} & \0 \\ \0 & \tilde{\Theta} \\ \0 & \hat{\Theta}}
\geq 0.
\end{equation}

For the implication $\eqref{Eq:Pr:DataDrivenGlobalControllerDesignStep2} \implies \eqref{Eq:Pr:DataDrivenGlobalControllerDesignStep1}$ to hold, the matrix S-lemma (Lm. \ref{Lm:MatrixSLemma}, under its regularity condition) gives an equivalent condition as 
$
\check{R} - \lambda \tilde{Q}_{w} \geq 0,
$
for some $\lambda \geq 0$. Applying Lm. \ref{Lm:SchursComplement}, we can obtain an equivalent condition for this as
$$
\bm{Q & S \\ S^\T & R} \geq 0,
$$
where $R \triangleq  \tilde{R} - \lambda \tilde{Q}_{w}$, which is the main LMI condition given in \eqref{Eq:Pr:DataDrivenGlobalControllerDesign} along with the same $Q, S$ and $R$ definitions given in \eqref{Eq:Pr:DataDrivenGlobalControllerDesignDefinitions}. 

Finally, note that the identified condition above is still an LMI in design variables $\bar{K}$ and $\{p_i: i\in\N_N\}$. Therefore, according to Prop. \ref{Pr:GlobalControllerDesign}, we recover the distributed controller gain $K$ from $\bar{K}\triangleq [\bar{K}_{ij}]_{i,j\in\N_N}$ and evaluating $K \triangleq [p_i^{-1} \bar{K}_{ij}]_{i,j\in\N_N}$. This completes the proof.
\end{proof}

\section{Application to DC Microgrid Control}\label{Sec:Results}

To demonstrate the effectiveness of the proposed model-based and data-driven co-design techniques, we consider the DC microgrid (DCMG) control problem, where the goal is to design local and distributed global controllers to simultaneously achieve voltage regulation and balanced current sharing with respect to rated voltages and currents.

\subsection{Problem Formulation}

\subsubsection{\textbf{DCMG Model}} In particular, we consider a DCMG with $N$ distributed generators (DGs) $\{\Sigma_i^{DG}:i\in\N_N\}$ and $L$ lines $\{\Sigma_l^{Line}:l\in\N_L\}$. An example DCMG with 6 DGs and 7 lines is shown in Fig. \ref{Fig:DCMG}. The electrical schematic diagrams of a DG $\Sigma_i^{DG}$ and a connected line $\Sigma_l^{Line}$ in a DCMG are as shown in Fig. \ref{Fig:DCMGSchematic}. We use the convention in which electrical loads are modeled as $ZI$-loads (each with a constant impedance ($Z$) and a constant current ($I$) component) connected locally at each DG. The total load current $I_{Li}(t)$ at $\Sigma_i^{DG}, i\in\N_N$ can be expressed as 
$$
I_{Li}(t) \triangleq I_{Li}^{Z}(t) + I_{Li}^{I}(t) \equiv \frac{1}{R_{Li}}V_i(t) + \bar{I}_{Li},
$$
where $R_{Li}$ is the load's constant resistance and $\bar{I}_{Li}$ is the load's constant current. 
On the other hand, each line $\Sigma_l^{Line}\equiv(\Sigma_i^{DG},\Sigma_j^{DG})$ is modeled as a resistive connection $R_{ij}$ between two corresponding DG points of common coupling (PCCs) with voltages $V_i$ and $V_j$. Thus, the line current $I_l$ contribution from $\Sigma_i^{DG}$ to $\Sigma_j^{DG}$ is determined algebraically by Ohm's law as $I_l=(V_i(t)-V_j(t)/R_{ij}$. Consequently, the dynamic subsystems in the co-design formulation are the DG units, while the contributions from loads and physical topology enter the DGs as a fixed disturbance and an uncontrollable input, respectively, as detailed below.

Applying Kirchhoff's current and voltage laws, we can obtain the state space representation for $\Sigma_i^{DG}$ as 
\begin{align}\nonumber
\underbrace{\bm{\dot{V}_i(t) \\ \dot{I}_{ti}(t)}}_{\dot{x}_i(t)} = 
\underbrace{\bm{-\frac{1}{C_iR_{Li}} & \frac{1}{C_i} \\ -\frac{1}{L_i} & -\frac{R_i}{L_i}}}_{\triangleq A_i} 
\underbrace{\bm{V_i(t) \\ I_{ti}(t)}}_{\triangleq x_i(t)} 
+ \underbrace{\bm{-\frac{1}{C_i} \bar{I}_{Li} \\ 0}}_{\triangleq \theta_i} \\
+ \underbrace{\bm{-\frac{1}{C_i} & 0 \\ 0 & \frac{1}{L_i}}}_{\triangleq B_i}
\underbrace{\bm{I_i(t) \\ V_{ti}(t)}}_{\triangleq u_i(t)}
+ \underbrace{\bm{-\frac{1}{C_i} & 0 \\ 0 & \frac{1}{L_i}}}_{\triangleq B_{wi}} w_i(t),
\label{Eq:DGDynamics1}
\end{align}
where $L_i,R_i,C_i$ are the DG circuit elements (see Fig. \ref{Fig:DCMGSchematic}). The DG state $x_i(t)$ is comprised of the components: $V_i(t)$, which denotes the voltage at the coupling point, and $I_{ti}(t)$, which denotes the internal current. The input $u_i(t)$ driving the DG dynamics is comprised of the components: $I_i(t)$ that represents the net current exported from DG into the physical DCMG line network, and $V_{ti}(t)$ that represents the voltage command applied to the voltage source converter (VSC). The disturbance vector $w_i(t)$  affecting the DG dynamics is comprised of the components that represent the disturbances in $I_i(t)$ and $V_{ti}(t)$, respectively. These disturbance components are assumed to be zero-mean and bounded. Using the $A_i,B_i,B_{wi}$ and $\theta_i$ notations introduced in \eqref{Eq:DGDynamics1}, the dynamics of $\Sigma_i^{DG}$ can be expressed concisely as 
\begin{equation}\label{Eq:DGDynamics}
\dot{x}_i(t) = A_i x_i(t) + \theta_i + B_i u_i(t) + B_{wi} w_i(t).
\end{equation}

Although the DCMG dynamics are written in continuous time for physical interpretability, both continuous-time and sampled-data interpretations are useful in this section. The model-based results reported below use the continuous-time versions of the local dissipativity LMIs stated in the remarks following Props. \ref{Pr:LocalControllerDesign} and \ref{Pr:LocalControllerDesignRevisit}. In contrast, the forthcoming data-driven implementation will use a sampled-data realization of the same DCMG error dynamics so that the trajectory data satisfy the discrete-time data equation used in Sec. \ref{Sec:Data-Driven}. To avoid introducing duplicate notation, the same symbols $A_i$ and $B_i$ are used below for the physical continuous-time matrices, while their sampled-data counterparts will be used when applying the data-driven design.

\subsubsection{\textbf{DCMG Input}} Note that the input $u_i(t)$ of $\Sigma_i^{DG}$ contains an uncontrollable component $I_i(t)$ along with a controllable component $V_{ti}(t)$. In particular, under the sign convention shown in Fig. \ref{Fig:DCMGSchematic}, $I_i(t)$ denotes the net current leaving the DG $i$ bus and flowing into the physical line network. Therefore, by Ohm's law over the resistive lines connected to DG $i$, $I_i(t)$ depends on neighboring DG voltages and takes the form
\begin{align}\nonumber
I_i(t) =&\ \sum_{j\in\E_i^P}\frac{1}{R_{ij}}(V_i(t) - V_j(t)) = \sum_{j\in\N_N} \bar{R}_{ij}V_j(t)\\ 
\label{Eq:DGInputPart1}
=&\ \sum_{j\in\N_N} \bar{R}_{ij} D_j^\T x_j(t),
\end{align}
where $\E_i^P$ represents the set of physically connected neighboring DGs to $\Sigma_i^{DG}$, $R_{ij}$ represents the resistance of the line between $\Sigma_i^{DG}$ and $\Sigma_j^{DG}$, and we denote
\begin{equation*}
\begin{aligned}
\bar{R}_{ij} \triangleq&\  
\begin{cases}
-\frac{1}{R_{ij}}, \quad &j \in \E_i^P,\\
\sum_{k \in \E_i^P}\frac{1}{R_{ik}}, \quad &j=i,\\
0, \quad &j \not\in \E_i^P\cup\{i\},\\
\end{cases}\\
D_i \triangleq&\ \bm{1 & 0}^\T, \quad \quad \quad \quad \forall i\in\N_N.
\end{aligned}
\end{equation*}

Equivalently, if $\mathcal{B}_P\in\R^{N\times L}$ denotes any oriented incidence matrix of the physical DCMG line graph and $G_P\triangleq\diag([1/R_l]_{l\in\N_L})$ denotes the diagonal line-conductance matrix, then $\bar R=\mathcal{B}_P G_P \mathcal{B}_P^\T$. The entrywise definition above is used to avoid introducing additional graph notation and to make the sign convention for $I_i(t)$ explicit.

On the other hand, $V_{ti}(t)$ is determined using a hierarchical controller involving: a steady-state controller (denoted $V_{Sti}$), a local feedback controller (denoted $V_{Lti}(t)$), and a global distributed feedback controller (denoted $V_{Gti}(t)$) implemented over the communication topology of the DCMG. In particular, $V_{ti}(t)$ takes the form 
\begin{equation}\label{Eq:DGInputPart2}
\begin{aligned}
V_{ti}(t) 
=&\ V_{Sti} + V_{Lti}(t) + V_{Gti}(t)\\
=&\ V_{Sti} + K_i^{DG}(x_i(t)-x_{Ei}) + \sum_{j\in\N_N} K_{ij}^{DG} (x_j(t)-x_{Ej}),
\end{aligned}
\end{equation}
where the exact forms of $V_{Sti}$ (steady-state input), $x_{Ei}$ (DG state at the equilibrium), and controller gains $K_i^{DG}, [K_{ij}^{DG}]_{j\in\N_N}$ will be designed in the sequel. 

Now, combining \eqref{Eq:DGInputPart1} and \eqref{Eq:DGInputPart2}, we can express the input $u_i(t)$ of $\Sigma_i^{DG}$ as
\begin{equation}\label{Eq:DGInput}
\begin{aligned}
u_i(t) =&\ \bar{D}_i V_{Sti} + \bm{0 \\ K_i^{DG}(x_i(t)-x_{Ei})} \\
&+ \sum_{j\in\N_N} \bm{\bar{R}_{ij}D_j^\T x_j(t) \\ K_{ij}^{DG}(x_j(t)-x_{Ej})},    
\end{aligned}
\end{equation}
where $\bar{D}_i\triangleq \bm{0 & 1}^\T, \forall i\in\N_N$.

\begin{figure}
    \centering
    \includegraphics[width=0.9\linewidth]{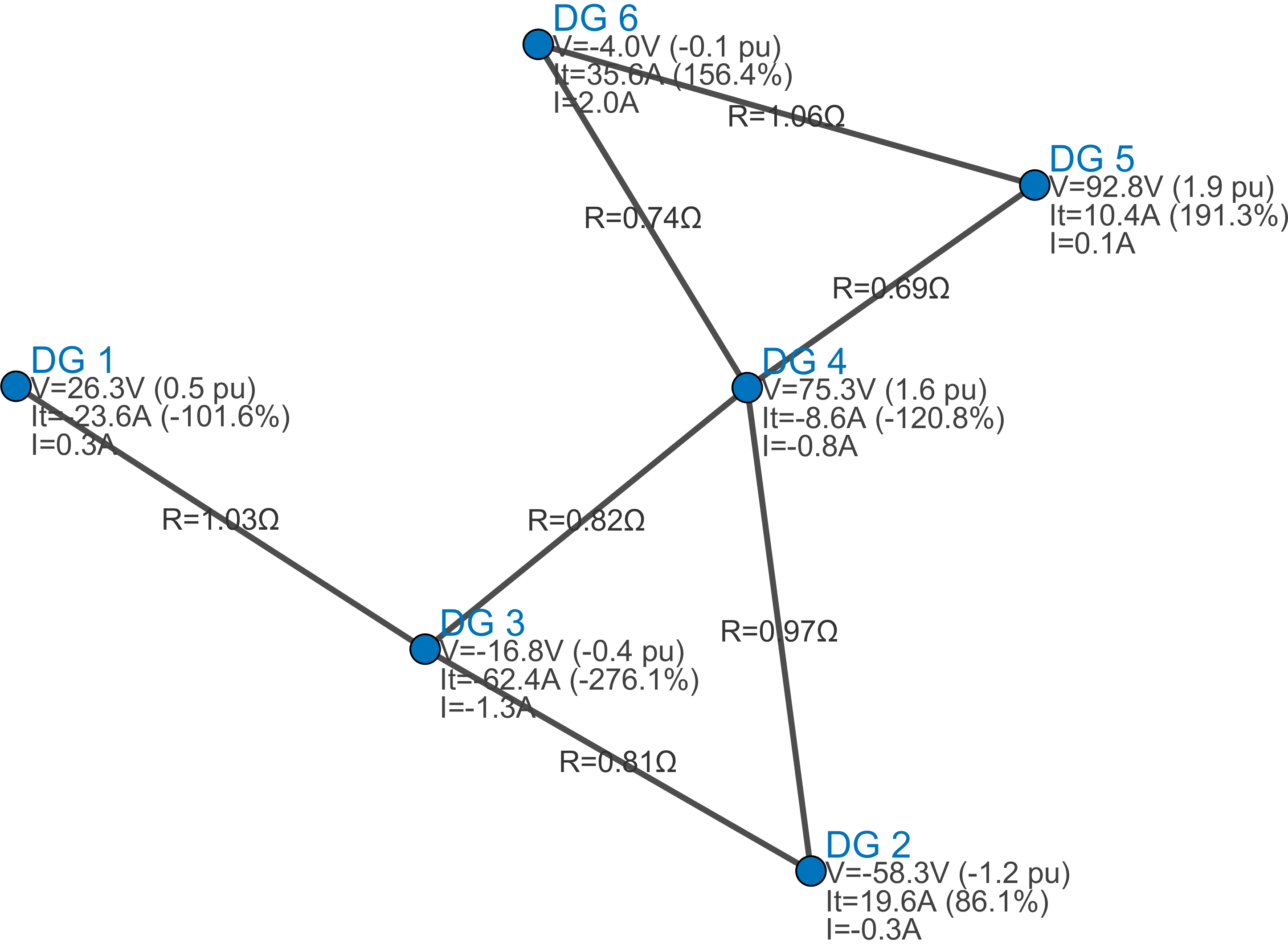}
    \caption{An example DCMG with 6 DGs and 7 lines (initial configuration).}
    \label{Fig:DCMG}
\end{figure}

\begin{figure}
    \centering
    \includegraphics[width=\linewidth]{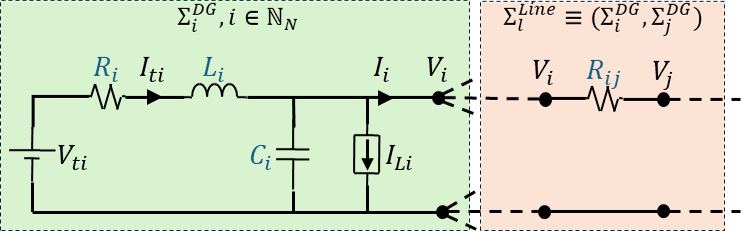}
    \caption{Electrical schematic diagram of a DG $\Sigma_i^{DG}$ and a connected line $\Sigma_l^{Line}$ in a DCMG.}
    \label{Fig:DCMGSchematic}
\end{figure}

\subsubsection{\textbf{Equilibrium Point Design}}
For each DG $\Sigma_i, i\in\N_N$, we denote 
the rated state as $x_{Rat,i} \triangleq \bm{V_{Rat,i} & I_{Rat,ti}}^\T$, the rated input as $u_{Rat,i} \triangleq \bm{I_{Rat,i} & V_{Rat,ti}}^\T$, 
the equilibrium state as $x_{Ei} \triangleq \bm{V_{Ei} & I_{Eti}}^\T$, and 
the equilibrium input as $u_{Ei} \triangleq \bm{I_{Ei} & V_{Eti}}^\T$. 
Note that, while $x_{Rat,i}$ and $u_{Rat,i}$ are known, both $x_{Ei}$ and $u_{Ei}$ are to be designed. In particular, the design specifications for $x_{Ei}$ and $u_{Ei}$ are assumed to be convex constraints of the form:
$$
x_{Ei} \in \mathcal{U}_{xi}(x_{Rat,i}), \quad u_{Ei} \in \mathcal{U}_{ui}(u_{Rat,i}).
$$
For example, to achieve proportional current sharing among the DGs and bound deviations from the rated voltage level (both at the equilibrium), we can constrain $x_{Ei}$ such that:
\begin{equation}\label{Eq:DGSpecs1}
x_{Ei} = \bm{\epsilon_{Vi} & 0 \\ 0 & \epsilon_{Iti}} x_{Rat,i}, \quad 
\vert \epsilon_{Vi} - 1 \vert \leq \delta_{Vi},\quad 
\epsilon_{Iti} = \delta_I,
\end{equation}
for some (or prespecified) $\delta_{Vi} \in [0,1],\ \forall i\in\N_N$ and $\delta_I \in [0,1]$. Similarly, to limit the total current injected to/from DGs and bound the voltage command applied to the VSC (at the equilibrium), we can constrain $u_{Ei}$ so that 
\begin{equation}\label{Eq:DGSpecs2}
u_{Ei} = \bm{\epsilon_{Ii} & 0 \\ 0 & \epsilon_{Vti}} u_{Rat,i}, \quad 
\vert \epsilon_{Ii} \vert \leq \delta_{Ii},\quad 
\vert \epsilon_{Vti} - 1 \vert \leq \delta_{Vti},
\end{equation}
for some (or prespecified) $\delta_{Ii}, \delta_{Vti} \in [0,1],\ \forall i\in\N_N$.

Besides such design specifications, $x_{Ei}$ and $u_{Ei}$ must satisfy the equilibrium conditions (from \eqref{Eq:DGDynamics} and \eqref{Eq:DGInput}):
\begin{equation}\label{Eq:DGSpecs3}
\begin{aligned}
\0 =&\ A_i x_{Ei} + \theta_i + B_i u_{Ei}, \\
u_{Ei} =&\ \bar{D}_i V_{Sti} + \sum_{j\in\N_N} \bm{\bar{R}_{ij}D_j^\T x_{Ej} \\ 0}.
\end{aligned}
\end{equation}
Vectorizing these equilibrium conditions we can obtain:
\begin{equation}\label{Eq:DGSpecs4}
\begin{aligned}
\0 =&\ A x_E + \theta + B u_E, \\
u_E =&\ \bar{D} V_{St} + D \bar{R} D^\T x_E.
\end{aligned}
\end{equation}
where we have defined the vectors:    
$x_E \triangleq [x_{Ei}^\T]_{i\in\N_N}^\T$, 
$u_E \triangleq [u_{Ei}^\T]_{i\in\N_N}^\T$, 
$\theta \triangleq [\theta_i^\T]_{i\in\N_N}^\T$, 
$V_{St} \triangleq [V_{Sti}]_{i\in\N_N}^\T$, 
and the matrices:
$A \triangleq \diag([A_i]_{i\in\N_N})$,
$B \triangleq \diag([B_i]_{i\in\N_N})$,
$D \triangleq \diag([D_i]_{i\in\N_N})$,
$\bar{D} \triangleq \diag([\bar{D}_i]_{i\in\N_N})$,
$\bar{R} \triangleq [\bar{R}_{ij}]_{i,j\in\N_N}$.

In our implementation, the equilibrium point $(x_E, u_E)$ is designed such that constraints \eqref{Eq:DGSpecs1}-\eqref{Eq:DGSpecs4} are satisfied. Note that, through this equilibrium point design, we also indirectly obtain the required steady state control inputs $V_{St}$ (for \eqref{Eq:DGInput}) as the second component of $u_{Ei}$ is $V_{Sti}, \forall i\in\N_N$ (see \eqref{Eq:DGSpecs3}).

\subsubsection{\textbf{Error Dynamics}}
By defining error variables $\tilde{x}_i(t) \triangleq x_i(t) - x_{Ei}$ and $\tilde{u}_i(t) \triangleq u_i(t) - u_{Ei}$ and applying them in \eqref{Eq:DGDynamics} and \eqref{Eq:DGInput}, we obtain the error dynamics of DG $\Sigma_i^{DG}$ as
\begin{equation}\label{Eq:DGErrorDynamics}
\begin{aligned}
\dot{\tilde{x}}_i(t) =&\ A_i \tilde{x}_i(t) + B_i \tilde{u}_i(t) + \tilde{w}_i(t),\\
\tilde{y}_i(t) =&\ C_i \tilde{x}_i(t),\\
\tilde{u}_i(t) =&\ 
K_i \tilde{x}_i(t) + \sum_{j\in\N_N} K_{ij} \tilde{y}_j(t).
\end{aligned}
\end{equation}
where we have used the equilibrium conditions \eqref{Eq:DGSpecs3} and defined 
$\tilde{w}_i(t)\triangleq B_{wi}w_i(t)$, $C_i\triangleq\I$, and the effective local and global gains structured respectively as 
$$
K_i \triangleq \bm{0 \\ K_i^{DG}},
\quad \mbox{ and }\quad 
K_{ij} \triangleq \bm{\bar{R}_{ij}D_j^\T \\ K_{ij}^{DG}}.
$$ 
With these definitions, the DCMG error dynamics have the same structure as the networked-system model used in this paper: the local gain $K_i$ corresponds to the local controller in \eqref{Eq:SubsystemController}, while $K_{ij}$ corresponds to the distributed global controller gain in \eqref{Eq:DistributedGlobalController}.

Clearly, this DG error dynamics, and consequently, the overall closed-loop DCMG error dynamics, are identical to the networked system model used in this paper. Therefore, we can apply both the model-based and data-driven hierarchical control and topology co-design techniques developed in this paper (in Sections \ref{Sec:Model-Based} and \ref{Sec:Data-Driven}, respectively) to design the local feedback controllers, global distributed controllers, and the communication topology for the considered DCMG system, particularly to evaluate the DG control inputs \eqref{Eq:DGInputPart2}.

However, it is worth noting the additional design constraints on the effective local controller gains $K_i, i\in\N_N$ and the effective global distributed controller gains $K_{ij}, i,j\in\N_N$, due to the physical interconnections (lines) in \eqref{Eq:DGErrorDynamics}, that can be expressed as:
\begin{equation}\label{Eq:DGControlGainConstraints}
\begin{aligned}
&D_i^\T K_i = 0,\quad \forall i\in\N_N,\\
&D_i^\T K_{ij} = \bar{R}_{ij}D_j^\T,\quad \forall i,j\in\N_N
\iff D^\T K = \bar{R}D^\T.
\end{aligned}
\end{equation} 
The physical controller constraints in \eqref{Eq:DGControlGainConstraints} can be incorporated directly into the model-based and data-driven LMI formulations given in Props. \ref{Pr:LocalControllerDesign}-\ref{Pr:DataDrivenGlobalControllerDesign} by expressing them in terms of the corresponding decision variables introduced in each proposition. 

For the model-based local controller synthesis problem in Prop. \ref{Pr:LocalControllerDesign}, the controller gain is recovered as $K_i=\bar K_iP_i^{-1}$. Since $P_i$ is nonsingular, the constraint $D_i^\T K_i=0$ is equivalent to $D_i^\T\bar K_i=0$. Therefore, the constraint $D_i^\T\bar K_i=0$ should be added to the LMI problem in Prop. \ref{Pr:LocalControllerDesign}. Similarly, the constraint $D_i^\T\tilde K_i=0$ should be added to the LMI problem in Prop. \ref{Pr:LocalControllerDesignRevisit}. 
For the data-driven local controller synthesis problem of Prop. \ref{Pr:DataDrivenLocalControllerDesign}, the controller gain is recovered as $ K_i=\bar K_i(X_i\tilde K_i)^{-1}$. Therefore, the constraint $D_i^\T K_i=0$ is equivalently enforced by adding $D_i^\T\bar K_i=0$ to the LMI problem in Prop. \ref{Pr:DataDrivenLocalControllerDesign}. Likewise, the constraint $D_i^\T\tilde K_i=0$ should be added to the LMI problem in Prop.  \ref{Pr:DataDrivenLocalControllerDesignRevisit}. 

For the global co-design problems in Props. \ref{Pr:GlobalControllerDesign} and \ref{Pr:DataDrivenGlobalControllerDesign}, the recovered distributed controller gain satisfies $K=[p_i^{-1}\bar K_{ij}]_{i,j\in\N_N}$. The physical constraint $D^\T K=\bar R D^\T$ therefore becomes $D_i^\T(p_i^{-1}\bar K_{ij}) = \bar R_{ij}D_j^\T, \forall i,j\in\N_N,$ or equivalently, \begin{equation}\label{Eq:GlobalConstraintExpanded} 
D_i^\T\bar K_{ij} = p_i\bar R_{ij}D_j^\T, \qquad \forall i,j\in\N_N. 
\end{equation} 
In compact form, defining $P\triangleq\diag([p_i]_{i\in\N_N})$, the constraint \eqref{Eq:GlobalConstraintExpanded} can be written as 
\begin{equation}\label{Eq:GlobalConstraintCompact} 
D^\T\bar K=P\bar R D^\T. 
\end{equation} 
Consequently, the affine equality constraint \eqref{Eq:GlobalConstraintExpanded}, or equivalently \eqref{Eq:GlobalConstraintCompact}, should be added directly to the co-design problems formulated in Props. \ref{Pr:GlobalControllerDesign} and \ref{Pr:DataDrivenGlobalControllerDesign}. Since the constraint is affine in the decision variables $\bar K$ and $p_i$, it preserves the convexity of the corresponding LMI programs.

 
Note also that when applying the data-driven results, the disturbance data matrix $W_i$ in \eqref{Eq:DataMatrices} corresponds to samples of $\tilde{w}_i$, not the primitive converter/input disturbance $w_i$. Thus, the quadratic bound in As. \ref{As:DisturbanceData} should be imposed on the realized additive state-channel disturbance in the error dynamics. If a bound is instead specified on the primitive disturbance $w_i$, it should be mapped through the disturbance channel $B_{wi}$, or through its sampled-data counterpart in the discrete-time implementation, to obtain a valid bound on $W_i$.

\subsection{Results and Discussion}
We first evaluate the model-based hierarchical co-design framework using the DCMG model described above (the data-driven implementation results, along with a more comprehensive comparison, will be added soon). The equilibrium point is selected to satisfy the voltage-regulation and proportional-current-sharing specifications in \eqref{Eq:DGSpecs1}--\eqref{Eq:DGSpecs4}. Around this equilibrium, the local controller design is used to enforce subsystem-level dissipativity certificates, and the global co-design problem is then solved to synthesize the distributed coordination gains and the communication topology. For comparison, we also implement a stabilizing global controller that achieves closed-loop stability but does not explicitly optimize the dissipativity certificate or communication sparsity.

Figure \ref{Fig:DCMG-Comparison} compares the resulting communication topologies and DG state trajectories. The stabilizing design uses a fully connected communication topology, as shown in Fig. \ref{Fig:Stabilizing_DCMG_Topology}, whereas the proposed dissipativity-based hierarchical design yields a sparser communication topology, as shown in Fig. \ref{Fig:DCMG_Topology}. The corresponding trajectories in Figs. \ref{Fig:Stabilizing-Design} and \ref{Fig:Model-Based-Design} show that both controllers regulate the DG voltages and currents to their desired steady-state values, while the proposed hierarchical co-design achieves this with fewer communication links and explicitly enforces the desired dissipativity certificate. 


\begin{figure*}[ht]
    \centering

    \begin{subfigure}{0.48\textwidth}
        \centering
        \includegraphics[width=\linewidth]{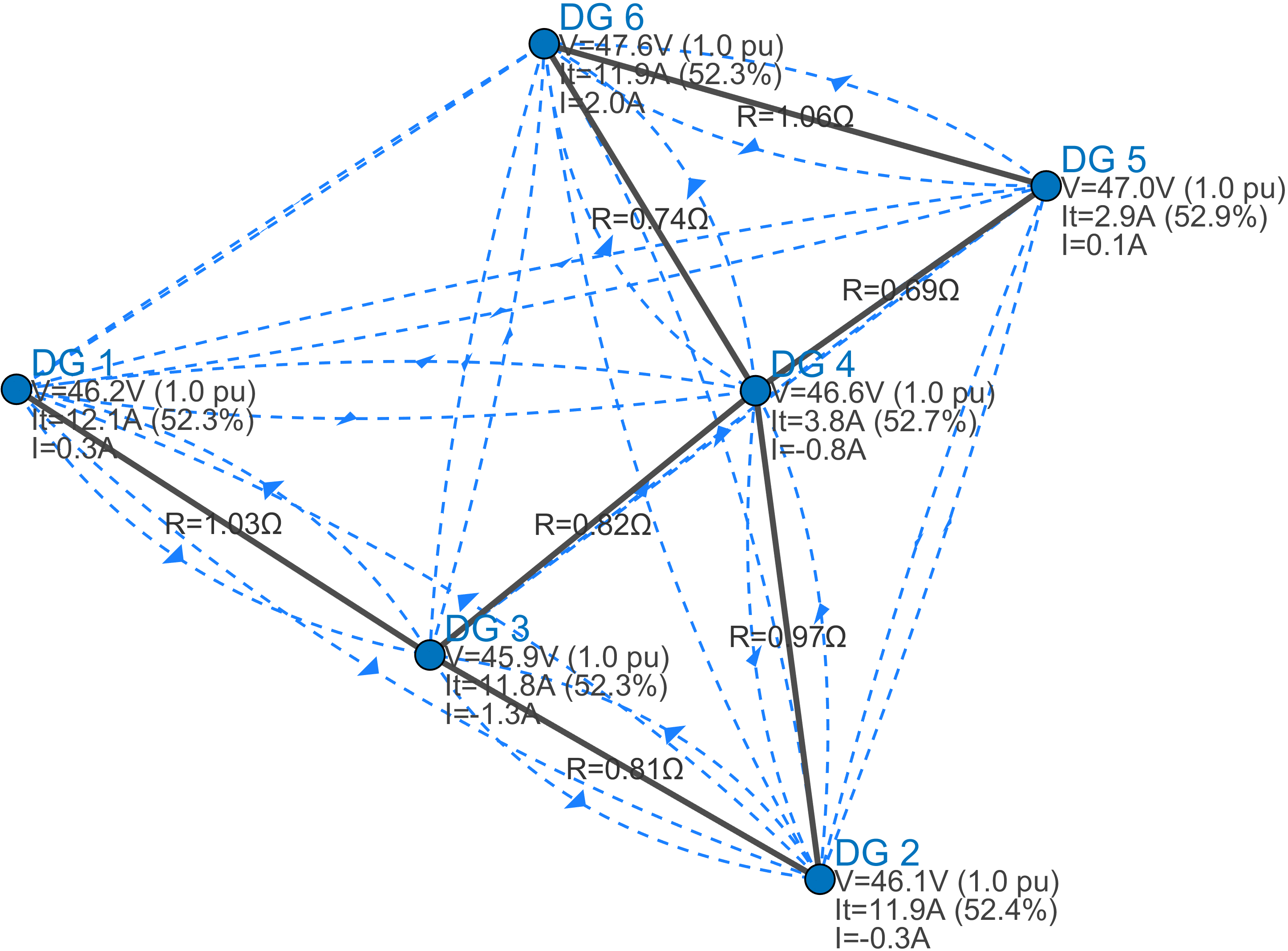}
        \caption{Communication topology (fully-connected) and the final DCMG configuration observed under a model-based stabilizing global DCMG controller.}
        \label{Fig:Stabilizing_DCMG_Topology}
    \end{subfigure}
    \hfill
    \begin{subfigure}{0.48\textwidth}
        \centering
        \includegraphics[width=\linewidth]{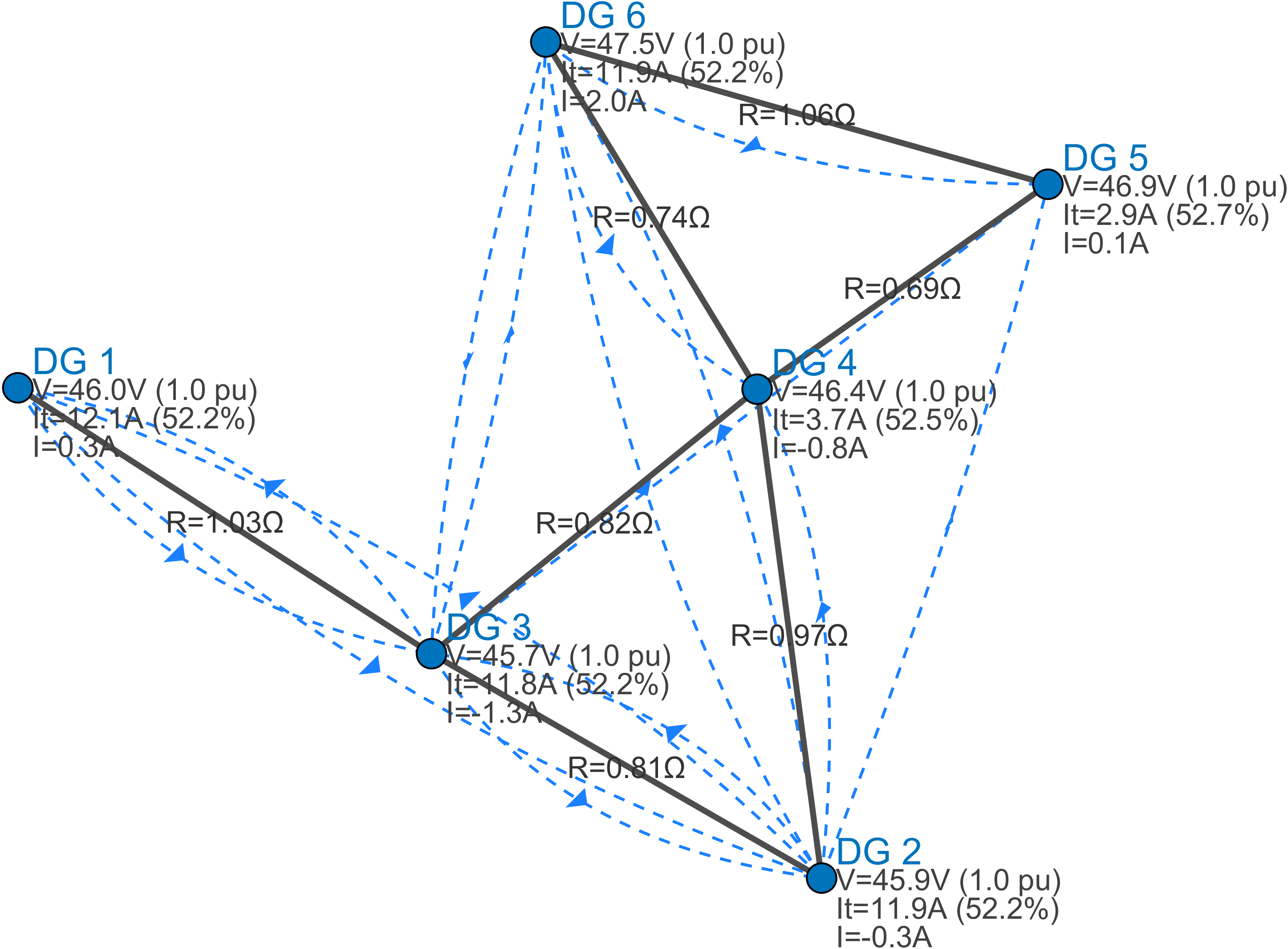}
        \caption{Communication topology and the final DCMG configuration observed under the proposed model-based dissipativating hierarchical distributed DCMG controller.}
        \label{Fig:DCMG_Topology}
    \end{subfigure}
    \begin{subfigure}{0.48\textwidth}
        \centering
        \includegraphics[width=\linewidth]{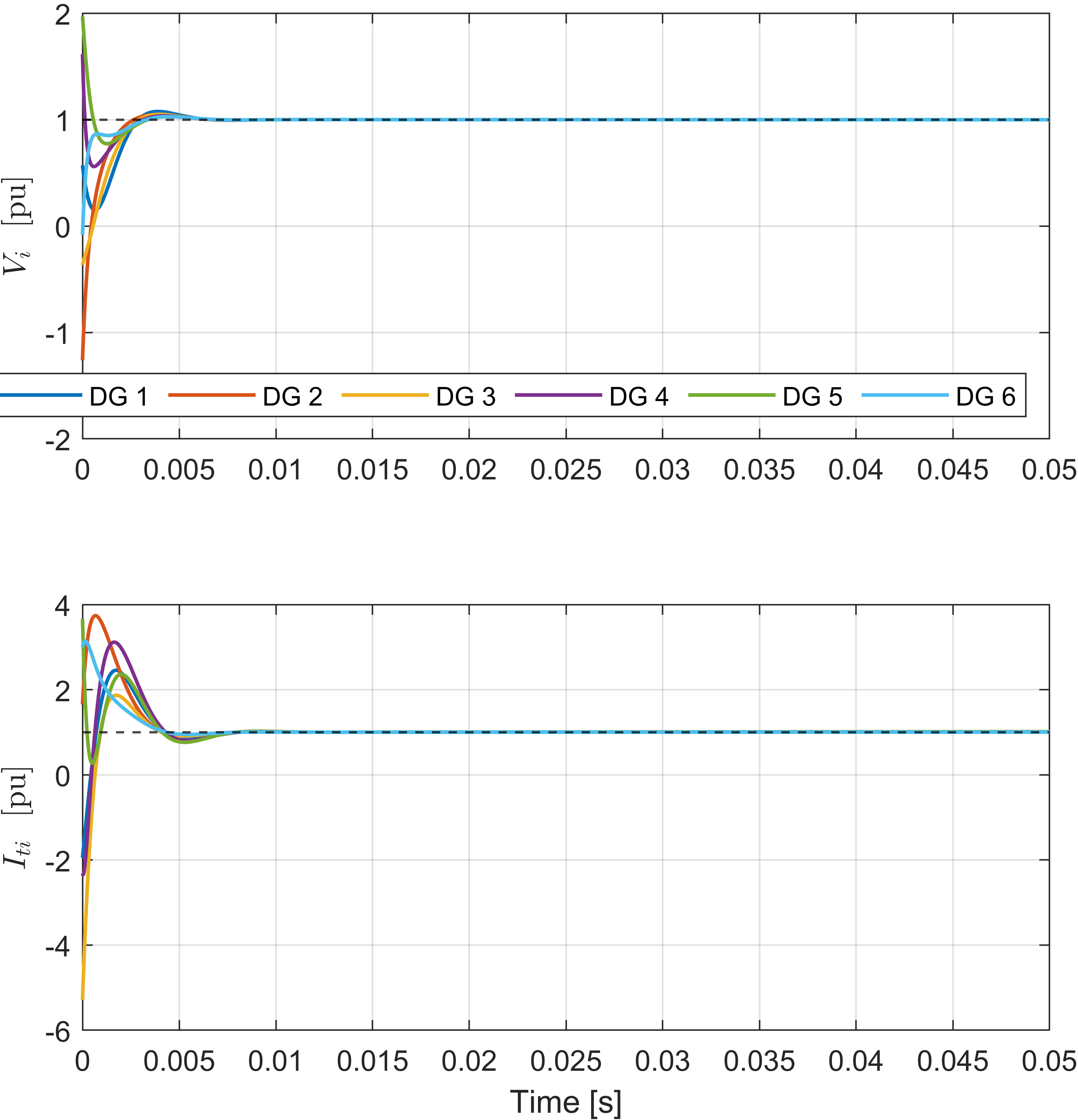}
        \caption{DG state trajectories observed under a model-based stabilizing global DCMG controller.}
        \label{Fig:Stabilizing-Design}
    \end{subfigure}
    \hfill
    \begin{subfigure}{0.48\textwidth}
        \centering
        \includegraphics[width=\linewidth]{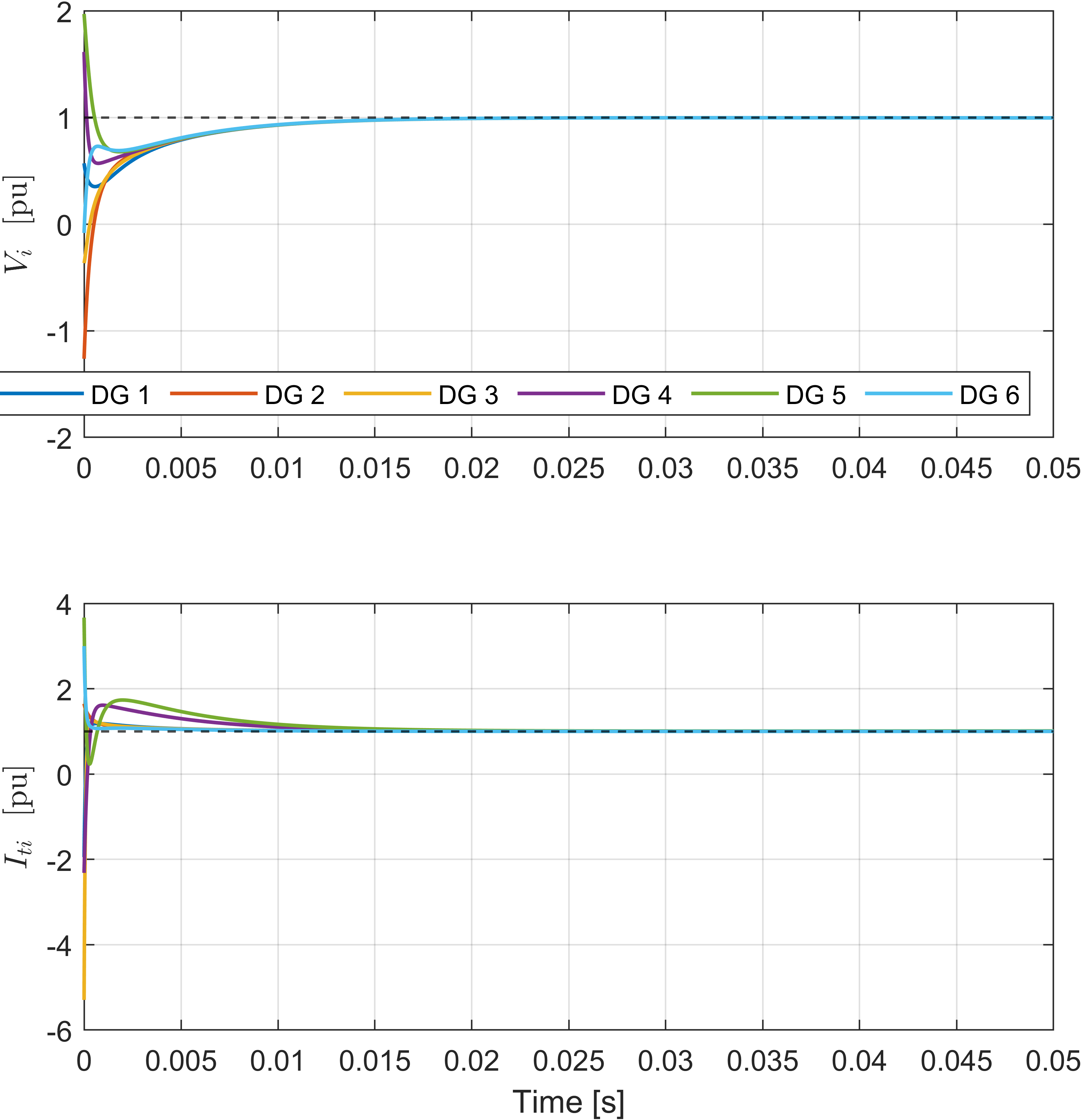}
        \caption{DG state trajectories observed under the proposed model-based dissipativating hierarchical distributed DCMG controller.}
        \label{Fig:Model-Based-Design}
    \end{subfigure}
    \caption{Numerical results comparison between stabilizing global control vs. dissipativating hierarchical distributed control for the DCMG voltage regulation and balanced current sharing tasks.}
    \label{Fig:DCMG-Comparison}
\end{figure*}


\bibliographystyle{IEEEtran}
\bibliography{References}

\end{document}